\documentclass[11pt]{article}
\usepackage{graphicx} 
\usepackage[margin=1in]{geometry}
\usepackage{mathpazo}
\usepackage{amsmath}
\usepackage{xcolor}
\usepackage{amsthm}
\usepackage{amssymb}
\usepackage[backref=page]{hyperref}
\hypersetup{colorlinks=true,linkcolor=blue,citecolor=[rgb]{0.1,0.1,0.9}}
\usepackage{cleveref}
\usepackage{enumerate}
\usepackage{caption}
\usepackage{subcaption}
\usepackage{thm-restate}
\usepackage{thmtools}
\usepackage{nicefrac}
\DeclareMathOperator*{\E}{\mathbb{E}}

\usepackage{setspace}
\setstretch{1.04}

\setlength{\parindent}{2em}
\setlength{\parskip}{0.25em}

\newtheorem{theorem}{Theorem}[section]

\newtheorem{corollary}[theorem]{Corollary}
\newtheorem{lemma}[theorem]{Lemma}

\newtheorem{claim}[theorem]{Claim}
\newtheorem{remark}[theorem]{Remark}

\newtheorem{ques}[theorem]{Open Question}
\newtheorem{obs}[theorem]{Observation}
\newtheorem{defi}[theorem]{Definition}

\newcommand{\ed}{\Delta_{\textnormal{\textsf{edit}}}}
\newcommand{\selfed}{\Delta_{\widetilde{\textnormal{\textsf{edit}}}}}
\newcommand{\ham}{\Delta_{\textnormal{\textsf{Hamming}}}}
\newcommand{\lcs}{\textnormal{\textsf{LCS}}}
\newcommand{\slcs}{\widetilde{\textnormal{\textsf{LCS}}}}
\newcommand{\cost}{\textnormal{\textsf{cost}}}
\newcommand{\sigmain}{\Sigma_{\textnormal{in}}}
\newcommand{\sigmaout}{\Sigma_{\textnormal{out}}}

\renewcommand{\tilde}{\widetilde}

\title{Constant Rate Isometric Embeddings of\\ Hamming Metric into Edit Metric}
\author{Sudatta Bhattacharya\footnote{Computer Science Institute of Charles University, Prague. \url{sudatta@iuuk.mff.cuni.cz}}\and Sanjana Dey\footnote{UMONS – Université de Mons. \url{info4.sanjana@gmail.com}}\and Elazar Goldenberg\footnote{The Academic College of Tel Aviv-Yaffo. \url{elazargo@mta.ac.il}}\and Mursalin Habib\footnote{Rutgers University. \url{mursalin.habib@rutgers.edu}}\\\and Bernhard Haeupler\footnote{INSAIT, Sofia University "St. Kliment Ohridski" \& ETH Zürich. \url{bernhard.haeupler@insait.ai}} \and Karthik C.\ S.\footnote{Rutgers University. \url{karthik.cs@rutgers.edu}}\and Michal Kouck\'y\footnote{Computer Science Institute of Charles University, Prague. \url{koucky@iuuk.mff.cuni.cz}}}

\date{}

\begin{document}

\maketitle
\begin{abstract}
    A function $\varphi: \{0,1\}^n \to \{0,1\}^N$ is called an \textit{isometric embedding} of the $n$-dimensional Hamming metric space to the $N$-dimensional edit metric space if, for all $x, y \in \{0,1\}^n$, the Hamming distance between $x$ and $y$ is equal to the edit distance between $\varphi(x)$ and $\varphi(y)$. The \textit{rate} of such an embedding is defined as the ratio $n/N$.\vspace{0.1cm}

It is well known in the literature how to construct isometric embeddings with a rate of $\Omega\left(\frac{1}{\log n}\right)$. However, achieving even near-isometric embeddings with a positive constant rate has remained elusive until now.\vspace{0.1cm}

In this paper, we present an isometric embedding with a rate of $\frac{1}{8}$ by discovering connections to \emph{synchronization strings}, which were studied in the context of insertion-deletion codes (Haeupler-Shahrasbi~[JACM'21]).
At a technical level, we introduce a framework for obtaining high-rate isometric embeddings using a novel object called a  \emph{misaligner}. We speculate that, with sufficient computational resources, our framework could potentially yield isometric embeddings with a rate of $\frac{1}{5}$.\vspace{0.1cm}

As an immediate consequence of our constant rate isometric embedding, we improve known conditional lower bounds for the closest pair problem and the discrete 1-center problem in the edit metric and NP-hardness of approximation results for clustering problems and the Steiner tree problem in the edit metric, but now with optimal dependency on the dimension. \mbox{Furthermore}, we obtain optimal lower bounds for the gap edit distance problem in the two-player randomized communication complexity model.\vspace{0.1cm}

We complement our results by showing that no isometric embedding $\varphi:\{0, 1\}^n \to \{0, 1\}^N$ can have rate greater than $\frac{15}{32}$ for all positive integers $n$. En route to proving this upper bound, we uncover fundamental structural properties necessary for every Hamming-to-edit isometric embedding. We also prove similar upper and lower bounds for embeddings over larger alphabets. \vspace{0.1cm}

Finally, we consider embeddings $\varphi:\Sigma_{\text{in}}^n\to \Sigma_{\text{out}}^N$ between different input and output alphabets, where the rate is given by $\frac{n\log|\Sigma_{\text{in}}|}{N\log|\Sigma_{\text{out}}|}$. In this setting, we show that the rate can be made arbitrarily close to 1.

\end{abstract} 
\thispagestyle{empty}
\clearpage

\setcounter{page}{1}

\section{Introduction}

Metric embeddings offer a powerful framework for formally comparing metric spaces by mapping points from a source space into a target space. The goal of such a mapping is to preserve pairwise distances faithfully (i.e.,  minimize \textit{distortion}), thereby revealing structural and computational relationships between different notions of distance. These embeddings are particularly useful for understanding connections among fundamental metrics, such as those central to coding theory, sequence analysis, and computational geometry.

Among the most fundamental distance measures arising in stringology and related fields are the Hamming and the edit distances. The edit distance quantifies the minimum number of character insertions, deletions, and substitutions required to transform one string into another, while restricting these operations to only substitutions on equal-length strings yields the Hamming distance. Because these metrics capture different notions of string similarity, both play central roles in diverse areas, including pattern matching, machine learning, and computational biology \cite{navarro2001guided,liljas2016textbook}. Consequently, numerous classical computational problems -- like finding the closest pair of strings in a set \cite{marzal1993computation,gao2010survey} or identifying a representative center string for a dataset \cite{WF74, NR05, LMW02} -- are widely studied using these metrics, underpinning even public bioinformatics services \cite{NCB}.

Given the fundamental nature of both metrics, understanding their relationship via embeddings is crucial. This paper systematically studies embeddings of the Hamming metric into the edit metric, exploring how the simpler substitution-only distance structure can be represented within the richer, insertion- and deletion-allowing space.   Formally, we investigate functions $\varphi: \{0, 1\}^n \to \{0, 1\}^N$, where the domain $\{0, 1\}^n$ is equipped with the Hamming distance $\ham$ and the codomain $\{0, 1\}^N$ with the edit distance $\ed$. We define the rate of the embedding $\varphi$ to be $\frac{n}{N}$. If $\varphi$ keeps distances unchanged except possibly multiplying all of them by some fixed constant, i.e., for some universal constant $K\ge 1$, we have  $\ed~(\varphi(x),~\varphi(y))=K~\cdot~\ham(x, y) $, for all $x,  y \in \{0, 1\}^n$, then we call $\varphi$ a $1$-embedding. If $K=1$, i.e., $\varphi$ preserves distances exactly, then we call $\varphi$ an \textit{isometric embedding}.

The goal of this work is to seek answers to fundamental questions about these embeddings: Is it possible to achieve a positive constant rate 1-embedding, or better, an isometric embedding? What is the optimal rate achievable? Can we establish bounds on rates that are unattainable?   Does using larger alphabets for isometric embeddings unlock the ability to obtain higher rates? By addressing all these questions, we aim to provide a comprehensive study of embeddings of the Hamming metric into the edit metric.

\subsection{Quest for Constant Rate 1-Embedding}
\label{sec:quest-for-constant}
In computational complexity, it is typically easier to prove the intractability of problems in the Hamming metric, and in such a situation, if we had a 1-embedding of the Hamming metric into the edit metric, then the hardness result would translate to the edit metric as well. These embeddings have been studied in the literature for exactly this purpose. 
For instance, a folklore embedding\footnote{See, for example, Lemma 11 in \cite{ABCSS23}.} from the Hamming metric to the edit metric constructs the output string by inserting a uniformly random block of size $\Theta(\log n)$ after each character of the input string. This can be shown to result in an isometric embedding with high probability (and can be derandomized using constant relative distance codes in the edit metric, such as the ones in~\cite{schulman1999asymptotically}).

However, this embedding achieves a rate of only $\frac{1}{\Theta(\log n)}$. Consequently, this implies that if a problem is hard for strings of length $n$ in the Hamming metric, then it remains hard in the edit metric, but only for strings of length $\Theta(n \log n)$. 
On the other hand, our intuition suggests that solving problems in the $n$-dimensional edit metric should be at least as hard as solving the same problems in the $n$-dimensional Hamming metric.
The inability to provide a formal justification for this intuition leaves us wanting for a clearer understanding. To address this gap, we aim to answer the following natural question:

\begin{center}
    \textit{Does there exist a 1-embedding of the Hamming metric into the edit metric\\ with positive constant rate?}
\end{center}

Our first result is an affirmative answer to the above question (in fact, we provide an isometric embedding). Our conceptual contribution in this regard is a clean connection between \emph{synchronization strings} studied in the context of the study of codes for correcting insertion and deletion errors~\cite{haeupler2017synchronization, haeupler2017synchronization2, haeupler2017synchronization3, haeupler2018synchronization4, cheng2018synchronization} with the above question. 

\begin{theorem}\label{thm:const}
There exists a universal constant $C\ge 1$ such that for every positive integer $n$, there is an isometric embedding $\varphi_{n}:\{0,1\}^n\to\{0,1\}^{Cn}$ of the Hamming metric into the edit metric.
\end{theorem}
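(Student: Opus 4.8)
The plan is to construct the embedding $\varphi_n$ by replacing each input bit with a carefully chosen ``codeword'' block so that (i) the blocks themselves act as synchronization markers, making optimal edit-distance alignments between $\varphi_n(x)$ and $\varphi_n(y)$ essentially forced to align block-to-block, and (ii) the per-block contribution to the edit distance is exactly $1$ when the corresponding input bits differ and $0$ when they agree. Concretely, I would fix two binary strings $B_0, B_1 \in \{0,1\}^{C}$ of equal length $C$ with $\ed(B_0,B_1)=1$ (so they differ by a single substitution, e.g.\ they agree everywhere except one coordinate), and additionally surround or interleave these with a fixed \emph{synchronization string} $S$ over a small alphabet (binarized) whose structure guarantees that any two far-apart positions in $\varphi_n(x)$ and $\varphi_n(y)$ cannot be cheaply matched. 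The output is then $\varphi_n(x) = g(x_1)\,g(x_2)\cdots g(x_n)$ where $g(b)$ packages $B_b$ together with the $i$-th synchronization symbol; the length is $N = Cn$, giving rate $1/C$.

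The two inequalities $\ed(\varphi_n(x),\varphi_n(y)) \le \ham(x,y)$ and $\ed(\varphi_n(x),\varphi_n(y)) \ge \ham(x,y)$ are handled separately. The upper bound is the easy direction: given $x,y$, align $\varphi_n(x)$ to $\varphi_n(y)$ block-by-block; each block where $x_i = y_i$ contributes $0$, and each block where $x_i \ne y_i$ contributes at most $\ed(B_0,B_1) = 1$, for a total of at most $\ham(x,y)$. The lower bound is where the synchronization-string machinery does the work: I would take an arbitrary optimal alignment (edit-distance transcript) between $\varphi_n(x)$ and $\varphi_n(y)$ and argue that it can be ``rounded'' to a block-respecting alignment without increasing its cost by more than a controlled amount — or, more robustly, show directly that whenever the alignment matches symbols across block boundaries (a ``misalignment''), the synchronization property of $S$ forces a number of mismatches at least as large as what a block-aligned transcript would have paid. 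Summing the forced costs over the blocks recovers $\ham(x,y)$ as a lower bound. This is precisely the point where the paper's promised notion of \emph{misaligners} enters: a misaligner is presumably a combinatorial certificate that any shift/misalignment of the block structure is penalized, and the synchronization string is the device that supplies such certificates uniformly.

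The main obstacle — and the heart of the argument — is the lower bound: ruling out ``clever'' alignments that save edits by exploiting insertions and deletions to shift one copy relative to the other and thereby match up blocks of $\varphi_n(x)$ with non-corresponding blocks of $\varphi_n(y)$. A naive block pair $B_0,B_1$ with $\ed(B_0,B_1)=1$ is not enough by itself, because an adversarial alignment might delete a prefix of one block and match the remainder against a suffix of another block. The synchronization string $S$ is what defeats this: its defining guarantee (every pair of consecutive substrings, or every interval, has large edit distance from every shifted copy — the $\varepsilon$-synchronization property of Haeupler–Shahrasbi) ensures that any global shift of the alignment accumulates cost proportional to the shift, so an optimal alignment has no incentive to shift at all and must respect blocks up to lower-order terms. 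I would need to quantify this carefully: choose $C$ large enough (a universal constant) that the synchronization penalty per unit shift strictly dominates the per-block editing budget, binarize the synchronization string without destroying its properties (losing only a constant factor in length, absorbed into $C$), and then push through a potential/charging argument that converts ``optimal alignment'' into ``exactly $\ham(x,y)$.'' Getting the constants to line up so that the inequality is tight (isometric, not merely a $1$-embedding) is the delicate part, and I expect it to require the structural properties of synchronization strings in their full strength rather than a black-box application.
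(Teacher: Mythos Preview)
Your high-level strategy matches the paper's: interleave the input bits with a (binarized) synchronization string, get the upper bound $\ed \le \ham$ by the identity alignment, and use the synchronization property for the lower bound. Two remarks on where your outline diverges from what actually makes the argument go through.

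First, the $B_0,B_1$ block machinery is unnecessary. The paper simply inserts a fixed number of synchronization symbols after each raw input bit; the input bits themselves sit as single symbols in the output. All the ``codeword'' structure you are building into $B_b$ is carried entirely by the synchronization string. (The misaligner framework you anticipate does introduce block-like codewords, but that is for improving the rate, not for the existential statement of Theorem~\ref{thm:const}.)

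Second --- and this is the genuine gap --- your lower-bound mechanism is not the right one. ``Rounding'' an arbitrary optimal alignment to a block-respecting one is not how the paper proceeds, and it is not clear such a rounding can be done without loss. The paper's argument is local and does not touch the alignment at all: any alignment between $\varphi_n(x)$ and $\varphi_n(y)$ partitions $[N]$ into alternating maximal \emph{vertical} intervals (where position $i$ is aligned to position $i$) and maximal \emph{nowhere-vertical} intervals (where no position is aligned to itself). On vertical intervals the cost equals the Hamming cost trivially. If the total edit cost were below $\ham(x,y)$, some nowhere-vertical interval $I$ would have $\cost(\mathcal{A}_{|I}) < \ham(X_{|I},Y_{|I}) \le |I|/C$. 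Now delete the input-bit positions from $I$: what remains is a substring $s$ of the synchronization string, and the restriction of $\mathcal{A}$ gives a nowhere-vertical self-alignment of $s$ with cost still below $|I|/C$. Since $|s|$ is a constant fraction of $|I|$, this forces the relative self-edit distance of $s$ below the synchronization parameter, a contradiction. That interval decomposition --- not a global rounding or charging scheme --- is the step that turns the synchronization guarantee into exact isometry, and it is what your proposal is missing.
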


 This represents a sharp improvement: even for the weaker notion of \emph{near-isometric} embeddings, the previously best known rate was only $\frac{1}{\Theta(\log n)}$~\cite{Rubinstein18}.

 We remark here that although the proof of Theorem~\ref{thm:const} is, in hindsight, simple and natural (see Section~\ref{sec:sync-proof-overview} for a proof sketch), the route to it was not obvious. Synchronization strings have been  designed in literature to trade alphabet size for approximation accuracy, and had been regarded primarily as an approximation tool. It was not evident, at least to us, that they could be leveraged to obtain isometric embeddings.

As an immediate consequence of Theorem~\ref{thm:const}, we obtain the following meta theorem for \emph{discrete} optimization problems in the edit metric. By discrete, we mean that the input is a set of points in the metric space, and the solution is some subset of the input points. For such problems, we have the following.

\begin{center}\textit{
If a discrete optimization problem defined in the $n$-dimensional Hamming metric\\ cannot be solved in time $T(n)$ (for some computable function $T$),\\ then the same optimization problem defined in the $n$-dimensional Edit metric\\ cannot be solved in time $T(\Theta(n))$.}
\end{center}

As concrete manifestations of this meta theorem, we prove optimal hardness results in two distinct settings. We remark that in the case of all the three theorems below, i.e., Theorems~\ref{thm:BCP}, \ref{thm:1-cen}, and \ref{thm:NP}, the previously known lower bounds or hardness results in the edit metric over binary strings was only for dimensions $ d = O(\log n \cdot \log\log n) $, where the earlier mentioned embedding with rate $ \Theta\left(\frac{1}{\log d}\right) $ was used. Thus, by reducing the dimension to $d=O(\log n)$ in the three theorems below, we obtain optimal dependency in the dimension.

\paragraph{Fine-Grained Complexity.}
Using Theorem~\ref{thm:const}, we obtain conditional lower bounds for the closest pair problem and the 1-center problem in the edit metric with optimal dependency in the dimension.

\begin{restatable}[]{theorem}{BCP}
\label{thm:BCP}
Unless the Strong Exponential Time Hypothesis
is false, 
for every $\delta>0$ there exists an $\varepsilon>0$ such that given as input sets
$A, B \subseteq  \{0, 1\}^d$
of $N$ vectors (where $d = O_{\delta} (\log N)$), computing a
$(1 + \varepsilon)$-approximate closest pair in $A\times B$ in the edit metric requires time
$\Omega(N^{2-\delta})$.
\end{restatable}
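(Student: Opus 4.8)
The plan is to obtain Theorem~\ref{thm:BCP} as a black-box reduction from the known conditional hardness of approximate closest pair in the \emph{Hamming} metric, composed with the constant-rate isometric embedding of Theorem~\ref{thm:const}. Concretely, the starting point is the Rubinstein-style result (under SETH) stating that for every $\delta>0$ there is an $\varepsilon'>0$ such that, given $A',B'\subseteq\{0,1\}^{d'}$ with $|A'|=|B'|=N$ and $d'=O_\delta(\log N)$, distinguishing whether the minimum Hamming distance across $A'\times B'$ is ``small'' versus at least $(1+\varepsilon')$ times larger requires time $\Omega(N^{2-\delta})$. This bichromatic closest-pair hardness is exactly the form we want, so the only work is to transport it into the edit metric.

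The first step is to apply $\varphi_{d'}:\{0,1\}^{d'}\to\{0,1\}^{Cd'}$ coordinatewise: set $A=\varphi_{d'}(A')$ and $B=\varphi_{d'}(B')$. Since $\varphi_{d'}$ is an isometry, $\ed(\varphi_{d'}(x),\varphi_{d'}(y))=\ham(x,y)$ for all $x,y$, so the minimum edit distance over $A\times B$ equals the minimum Hamming distance over $A'\times B'$, and likewise the gap instance is preserved \emph{exactly} — a $(1+\varepsilon)$-approximate closest pair in the edit metric with $\varepsilon=\varepsilon'$ immediately yields a $(1+\varepsilon')$-approximate closest pair in the Hamming metric. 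The second step is the bookkeeping on the dimension: the new dimension is $d=Cd'=O_\delta(\log N)$ since $C$ is a universal constant, which is precisely the claimed $d=O_\delta(\log N)$; and because $\varphi_{d'}$ is computable in $\mathrm{poly}(d')$ time (derandomized via the synchronization-string construction underlying Theorem~\ref{thm:const}), building $A$ and $B$ costs only $N\cdot\mathrm{poly}(\log N)$, which is negligible against the $\Omega(N^{2-\delta})$ lower bound. Finally, a solver for $(1+\varepsilon)$-approximate closest pair in the edit metric running in $O(N^{2-\delta})$ time would, through this reduction, solve the Hamming gap problem in $O(N^{2-\delta})+N\cdot\mathrm{poly}(\log N)=O(N^{2-\delta'})$ time for a slightly smaller $\delta'$, contradicting SETH; tracking the quantifiers gives the stated ``for every $\delta>0$ there exists $\varepsilon>0$.''

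I do not expect a genuine obstacle here — the whole point of Theorem~\ref{thm:const} is to make this reduction dimension-optimal — but the one place that needs care is making sure the Hamming-side hardness is invoked in a form with a \emph{multiplicative} gap and with dimension already $O(\log N)$, rather than a version with an additive gap or with a $\mathrm{polylog}$ dimension; citing the appropriate version of the MAX-SAT$\to$closest-pair reduction handles this. A secondary subtlety is the precise definition of ``$(1+\varepsilon)$-approximate closest pair'': whether the algorithm must return the pair or merely the value, and whether approximation is one-sided; all reasonable formulations are equivalent up to constants in $\varepsilon$, and the isometry means no slack is lost in translation, so the argument is robust to these choices. Thus the proof is essentially a three-line composition: quote SETH-hardness of gap closest pair in Hamming, push through the isometric embedding, and check that $d$ and the running time are preserved up to the universal constant $C$.
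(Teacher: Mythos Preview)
Your proposal is correct and matches the paper's approach exactly: the paper states that the theorem ``follows immediately by combining Theorem~\ref{thm:const} with the conditional lower bound for the bichromatic closest pair problem in the Hamming metric obtained by \cite{Rubinstein18}.'' Your write-up in fact spells out more detail than the paper itself provides.
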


The above theorem follows immediately by combining Theorem~\ref{thm:const} with the conditional lower bound for the bichromatic closest pair problem in the Hamming metric obtained by \cite{Rubinstein18}. Moreover, one can also obtain a conditional lower bound of $n^{1.5-o(1)}$ (with optimal dependency on dimension) against approximating the monochromatic closest pair problem in the edit metric by starting from~\cite{karthik2020closest}. Next, for the discrete 1-center problem, we have the following:

\begin{restatable}[]{theorem}{1-center}
\label{thm:1-cen}
Unless the Hitting Set Conjecture is false, for every $\varepsilon>0$ there exists $c>1$ such that given as input a point-set $P\subseteq \{0,1\}^d$ where $|P|=N$ and $d=c\log N$, computing the point $x \in P$ that minimizes the maximum edit distance to all points in $P$ requires $\Omega(N^{2-\varepsilon})$ time.
\end{restatable}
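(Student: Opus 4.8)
The plan is to reduce from the Hitting Set problem via the same two-step strategy that gives the folklore connection between Hitting Set and the 1-center problem in Hamming space, and then compose with the isometric embedding of \Cref{thm:const}. Recall the Hitting Set Conjecture: there is no $N^{2-\varepsilon}$-time algorithm that, given two collections $\mathcal{U}, \mathcal{V}$ of $N$ subsets of a universe of size $d' = \mathrm{polylog}(N)$, decides whether some $u \in \mathcal{U}$ intersects every $v \in \mathcal{V}$. First I would recall (or reprove in a line) the standard fact that Hitting Set is equivalent, up to the usual encoding of sets as indicator vectors and complementation, to the following Hamming-space question: given a point-set $Q \subseteq \{0,1\}^{d'}$, is there a point whose maximum Hamming distance to the rest of $Q$ is at most some threshold $t$? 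Concretely, encode each $u \in \mathcal{U}$ and each $v \in \mathcal{V}$ as a $d'$-bit indicator vector; $u$ hits $v$ iff the supports intersect iff $\ham(u, \bar v) \neq d'$ iff $\ham(u,\bar v) \le d'-1$. So $u$ is a hitting set iff $u$ has Hamming distance at most $d'-1$ to every $\bar v$; padding appropriately so that the $u$-vectors are automatically within the threshold of each other, we get a point-set $P_0 \subseteq \{0,1\}^{O(d')}$ of size $2N$ in which the existence of a point minimizing the max Hamming distance below a threshold is equivalent to the Hitting Set instance being a YES instance.

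Next I would apply the isometric embedding $\varphi = \varphi_{O(d')} : \{0,1\}^{O(d')} \to \{0,1\}^{C \cdot O(d')}$ from \Cref{thm:const} to every point of $P_0$, obtaining $P = \varphi(P_0) \subseteq \{0,1\}^d$ with $d = O(d') = O(\log N)$ (using $d' = \mathrm{polylog}(N)$; in fact the reduction works as long as $d'$ is at most some fixed polynomial in $\log N$, which is the regime of the Hitting Set Conjecture — and one should double-check that the conjecture is indeed typically stated with $d' = \mathrm{polylog}(N)$ or $d'=\omega(\log N)$, adjusting the statement's "$d = c\log N$" to "$d = O(\log N)$" if necessary). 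Because $\varphi$ is an isometry, for every pair $x,y \in P_0$ we have $\ed(\varphi(x),\varphi(y)) = \ham(x,y)$, so the point of $P$ minimizing the maximum edit distance to $P$ is exactly the image of the point of $P_0$ minimizing the maximum Hamming distance to $P_0$, and the optimal values coincide. Hence any algorithm solving the discrete 1-center problem in the edit metric on $P$ in time $T(N)$ solves the Hamming 1-center problem on $P_0$, hence Hitting Set, in time $T(2N) + \mathrm{poly}(N)$. An $N^{2-\varepsilon}$-time edit-metric algorithm would therefore refute the Hitting Set Conjecture.

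The only remaining subtlety is the claimed \emph{approximation} hardness — the theorem asserts a factor $c>1$ gap — so I must exhibit a constant multiplicative gap in the optimal 1-center value between YES and NO instances of (the padded) Hitting Set, and verify that the isometry transports this gap verbatim to the edit metric. The standard move is to amplify: replace each coordinate by a block and use a direct-sum / tensoring trick (or simply observe that Hitting Set already has a built-in gap of one unit between "max distance $\le d'-1$" and "max distance $= d'$", then boost this additive gap to a multiplicative $c$ by concatenating $\Theta(1)$ independent disjoint copies of the instance padded with a large common block, so that a YES instance has 1-center value at most $\alpha$ and a NO instance at least $c\alpha$). Since $\varphi$ preserves all pairwise distances exactly, a multiplicative gap of $c$ in Hamming becomes a multiplicative gap of $c$ in edit, so a $c$-approximation algorithm for edit-metric 1-center distinguishes the two cases. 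I expect the main obstacle to be purely bookkeeping: choosing the padding and the number of copies so that (i) the YES/NO gap is a genuine constant $c>1$, (ii) all the "within-collection" distances stay on the correct side of the threshold, and (iii) the dimension after embedding remains $O(\log N)$; none of these is deep, but they must be arranged consistently. Everything else is immediate from \Cref{thm:const} and the known Hitting-Set–to–Hamming-1-center reduction, which I would cite (e.g., the reduction underlying the $O(\log n \cdot \log\log n)$-dimensional hardness mentioned in the paragraph preceding the theorem).
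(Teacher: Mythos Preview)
Your overall strategy---start from the known Hamming-metric lower bound for discrete 1-center and compose with the isometric embedding of \Cref{thm:const}---is exactly what the paper does; the paper's entire proof is the sentence ``combine \Cref{thm:const} with the conditional lower bound for the discrete 1-center problem in the Hamming metric obtained by \cite{ABCSS23}.'' So the high-level plan is correct and matches the paper, though the paper simply cites \cite{ABCSS23} rather than re-deriving the Hitting-Set-to-Hamming-1-center reduction.

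There is, however, a genuine misreading in your proposal. You write that ``the theorem asserts a factor $c>1$ gap'' and then spend the last paragraph on gap amplification. But the $c>1$ in the theorem statement is \emph{not} an approximation factor: it is the constant in the dimension bound $d = c\log N$. The theorem is about the \emph{exact} discrete 1-center problem, and the quantifier structure is ``for every $\varepsilon>0$ there exists a dimension constant $c>1$ such that exact 1-center in $\{0,1\}^{c\log N}$ under edit distance requires $\Omega(N^{2-\varepsilon})$ time.'' So the entire discussion of boosting an additive gap to a multiplicative one, tensoring copies, and transporting a multiplicative gap through the isometry is unnecessary and addresses a claim the theorem does not make. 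Once you drop that, the proof is immediate: \cite{ABCSS23} gives the Hamming statement with some dimension constant $c'$, and applying $\varphi$ from \Cref{thm:const} yields the edit-metric statement with $c = C\cdot c'$. Your separate worry about whether the Hitting Set universe is $O(\log N)$ or $\mathrm{polylog}(N)$ is a question about \cite{ABCSS23}, not about the present paper, and is resolved by citing that work.
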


The above theorem follows immediately by combining Theorem~\ref{thm:const} with the conditional lower bound for the discrete 1-center problem in the Hamming metric obtained by \cite{ABCSS23}.

We also obtain improved lower bounds for data structures supporting dictionary look-ups and text indexing under the edit distance.

\begin{theorem}
\label{thm:dictionary}
Assuming the Strong Exponential Time Hypothesis, for every $\delta > 0$, there exists a constant $c'$ such that any data structure with polynomial construction time that solves the dictionary look-up problem under the edit distance for a set of $n$ binary strings of length $c' \log n$ cannot have query time $O(n^{1 - \delta})$.
\end{theorem}

\begin{theorem}
\label{thm:indexing}
Assuming the Strong Exponential Hypothesis, for every $\delta > 0$, there exists a constant \(c'\) such that any data structure for text indexing under the edit distance for a text of length $n$, which can be constructed in polynomial time, cannot have query time $O(n^{1 - \delta})$ even when pattern strings have length at most \(c' \log n\).
\end{theorem}

Theorems~\ref{thm:dictionary} and~\ref{thm:indexing} follow directly by applying the embedding from Theorem~\ref{thm:const} to the proofs of Corollaries 7 and 19 in~\cite{cohen2019lower}, respectively.

\paragraph{NP Hardness.} Applying Theorem~\ref{thm:const} to known NP-hardness of approximation results for discrete clustering problems \cite{CKL22} and the discrete Steiner tree problem \cite{FGK24} in the Hamming metric, we obtain below their hardness of approximation result in the edit metric with optimal dependency in the dimension.

\begin{theorem}\label{thm:NP}
It is NP-hard to approximate:
\begin{itemize}
    \item the discrete $k$-means problem (resp.\ discrete $k$-center problem and discrete $k$-median problem) on $N$ points in the $\{0,1\}^{O(\log N)}$ dimensional edit metric to a factor better than $1.38$ (resp.\ $3-o(1)$ and $1.12$).
    \item the discrete Steiner tree problem on $N$ terminals in the $\{0,1\}^{O(\log N)}$ dimensional edit metric to a factor better than $1.004$.
\end{itemize}
    \end{theorem}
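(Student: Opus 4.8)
The plan is to show that Theorem~\ref{thm:NP} follows by composing the known hardness-of-approximation reductions in the Hamming metric with the isometric embedding of Theorem~\ref{thm:const}. I would start from the source results: by \cite{CKL22} there are absolute constants such that it is NP-hard to approximate the discrete $k$-means (resp.\ discrete $k$-center, discrete $k$-median) problem on $M$ points lying in $\{0,1\}^{d}$ with $d=O(\log M)$ under the Hamming metric $\ham$ to within a factor better than $1.38$ (resp.\ $3-o(1)$, $1.12$); and by \cite{FGK24} it is NP-hard to approximate the discrete Steiner tree problem on $M$ terminals in $\{0,1\}^{O(\log M)}$ under $\ham$ to within a factor better than $1.004$. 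These are exactly the instances whose logarithmic dimension the older $\Theta(1/\log d)$ embedding would have inflated to $O(\log M\cdot\log\log M)$; our goal is to avoid that loss.

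Given such a hard instance — a point set $P\subseteq\{0,1\}^{d}$ with $|P|=M$ and $d=O(\log M)$, together with the parameter $k$ (for clustering) or the terminal / Steiner-point designation (for Steiner tree) — I would apply the map $\varphi_{d}$ from Theorem~\ref{thm:const} point-wise, producing $P':=\{\varphi_{d}(p):p\in P\}\subseteq\{0,1\}^{Cd}$, leaving $k$ and all combinatorial designations unchanged. Since $\varphi_{d}$ is isometric it is injective (distinct binary strings are at Hamming distance $\ge 1$, hence their images are at edit distance $\ge 1$), so $|P'|=M$, and $Cd=O(\log M)=O(\log|P'|)$, which yields the claimed logarithmic output dimension. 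Because the construction underlying Theorem~\ref{thm:const} is explicit and polynomial-time computable and the output has total size $M\cdot Cd=\mathrm{poly}(M)$, the map $P\mapsto P'$ is a polynomial-time reduction; here, unlike in the fine-grained reductions of Theorems~\ref{thm:BCP} and~\ref{thm:1-cen}, the constant blow-up $C$ is completely harmless.

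The crux is to observe that each of the objectives in question is a function only of the pairwise distances among the input points, so that an \emph{isometric} re-encoding of those points leaves the optimum — and the value of every candidate solution — unchanged. Concretely, for discrete $k$-means the cost of a set $S$ of $k$ centers chosen from the point set is $\sum_{p\in P}\min_{c\in S}\ham(p,c)^2$, and the isometry $\ed(\varphi_{d}(p),\varphi_{d}(c))=\ham(p,c)$ for all $p,c\in P$ makes the corresponding cost of $\varphi_{d}(S)$ on $P'$ identical; the same holds verbatim for $k$-median (drop the square) and $k$-center (replace $\sum$ by $\max$), and for discrete Steiner tree the cost of any admissible choice of Steiner vertices equals the minimum-spanning-tree weight of a subset of the input points under the metric, again a function of pairwise distances alone. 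Hence the gap between the completeness and soundness thresholds of each source instance transports exactly into the edit metric, and the inapproximability factors $1.38$, $3-o(1)$, $1.12$, and $1.004$ carry over unchanged.

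Since this is essentially a corollary, there is no real obstacle, only one point I would verify explicitly rather than wave at: that the cited NP-hardness results are genuinely stated for the \emph{binary} Hamming space $\{0,1\}^{O(\log M)}$ (so that $\varphi_{d}$ applies directly, with no auxiliary alphabet-reduction step) and for the \emph{discrete} variants (so that solutions are determined by the input points and the objectives are pairwise-distance functions — which fails for the continuous versions, where centers or Steiner points could be arbitrary strings in the much larger ambient edit space). Once that is confirmed, Theorem~\ref{thm:NP} is immediate.
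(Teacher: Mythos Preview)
Your proposal is correct and takes essentially the same approach as the paper: the paper treats Theorem~\ref{thm:NP} as an immediate corollary of the meta-theorem, obtained by composing the known Hamming-metric hardness results of \cite{CKL22} and \cite{FGK24} with the constant-rate isometric embedding of Theorem~\ref{thm:const}, exactly as you describe. Your write-up is in fact more detailed than the paper's, which simply asserts the conclusion without spelling out the objective-preservation argument.
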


Beyond the meta-theorem for discrete optimization, the framework of transferring hardness via isometric embeddings extends to other computational models and complexity measures, including communication complexity, streaming algorithms, distributed algorithms, and online algorithms. As an illustration, we now discuss an application to communication complexity.

\paragraph{Communication Complexity.} In the \emph{Gap-Hamming problem}, Alice and Bob are each given $n$-bit strings, and the goal is to design a communication protocol that allows one party (say, Alice) to compute the Hamming distance between their strings to within $\pm \sqrt{n}$ using as little communication as possible. This problem was introduced by \cite{IndykW03}, and its complexity was settled by \cite{ChakrabartiR12}, who showed an $\Omega(n)$ randomized communication complexity lower bound (see also \cite{vidick2012concentration, sherstov2012communication, hadar2019communication}). The Gap-Hamming problem is of particular interest in communication complexity as it has applications to proving lower bounds for many streaming algorithms, for example, frequency moment estimation \cite{IndykW05} and entropy estimation \cite{ChakrabartiCM10}.

We introduce a natural edit distance analogue: the \emph{Gap-Edit problem} where Alice and Bob receive $n$-bit strings and must estimate their edit distance within $\pm\sqrt{n}$ using as little communication as possible. Applying Theorem~\ref{thm:const} to the $\Omega(n)$ lower bound for the Gap-Hamming problem \cite{ChakrabartiR12} immediately implies an identical $\Omega(n)$ randomized communication complexity lower bound for the Gap-Edit problem.  To the best of our knowledge, the Gap-Edit problem under this additive approximation has not been studied in the literature; a closely related multiplicative-approximation version has been investigated in \cite{AndoniK10}.

\subsection{High-Rate Isometric Embeddings}

While Theorem~\ref{thm:const} provides a constant-rate isometric embedding of the Hamming metric into the edit metric, the guaranteed rate is quite small. In certain natural applications, however, it is important that we have \textit{high-rate} embeddings. Below, we present three concrete motivations for finding isometric embeddings that maximize the rate.

\paragraph{Motivation 1: The Exact-Edit Problem.} Consider the following special case of the above mentioned Gap-Hamming problem (resp.\ Gap-Edit problem), named the \emph{Exact-Hamming problem} (resp.\ \emph{Exact-Edit problem}), where Alice and Bob are each given $n$-bit strings, and the goal is to design a communication protocol that lets one party compute the Hamming distance (resp.\ edit distance) between their strings \emph{exactly}. The $\Omega(n)$ randomized communication complexity for the Gap-Hamming problem can be improved to $(1-o_{\varepsilon}(1))\cdot n$ for the Exact-Hamming problem\footnote{This improved lower bound is obtained by a reduction from the Inner Product problem, where Alice and Bob must compute the parity of the number of coordinates where both have a $1$. The randomized communication complexity for the Inner Product problem with error probability $\frac{1}{2}-\varepsilon$ is at least $n-O(\log(1/\varepsilon))$ \cite{CG88,KN96}. Note that for every $x,y\in \{0,1\}^n$, we have $\sum_{i\in [n]}x_i\cdot y_i=(|x|+|y|-\ham(x,y))/2$, and thus every protocol for the Exact-Hamming problem with $t$ bits of communication implies a protocol for the Inner Product problem with $t+O(\log n)$ bits of communication.}, where $\varepsilon$ is the error probability. This improvement is achieved by explicitly using the structure of the Hamming metric, and it is not clear how to extend it to the Exact-Edit problem. Directly applying Theorem~\ref{thm:const} only yields that the randomized communication complexity for the Exact-Edit problem is $\Omega(n)$. Thus, to obtain a stronger lower bound, one seeks a high-rate isometric embedding.

\paragraph{Motivation 2: Codes in Edit Metric.}
Theorem~\ref{thm:const} directly implies that good Hamming codes (those with positive constant rate and relative distance) yield corresponding codes in the edit metric. Since the isometric embedding given by the theorem simply involves interleaving the input bits with a sequence of fixed bit strings\footnote{Such embeddings are referred to as interleaved embeddings throughout the paper; see Section~\ref{sec:intro-padding}.}, these embedded codes preserve many structural properties of their Hamming counterparts. The drawback, however, is that the embedding reduces the code's rate and relative distance by the constant factor $C$ introduced in the theorem statement. 

\paragraph{Motivation 3: Hamming Cube in Edit Cube.} 
Consider the weighted complete graph $\tilde{Q}_n$ (resp.\ $\tilde{A}_N$) on the vertex set $\{0,1\}^n$ (resp.\ $\{0,1\}^N$) where the weight of the edge on any two vertices corresponds to their Hamming distance (resp.\ edit distance). 
It is clear that for every pair of vertices, their weight in $\tilde A_n$ is at most their weight in $\tilde Q_n$. 
A natural question then arises: what is the smallest $N> n$, such that we can identify an isomorphic copy of $\tilde Q_n$ in $\tilde A_N$? This question of finding the largest Hamming cube that embeds isometrically into an edit cube is of pure combinatorial interest, independent of other applications.

The above three motivations drive us to find high-rate isometric embeddings of 
the Hamming metric into the edit metric\footnote{For the first two motivating questions, it would suffice to have a high-rate 1-embedding instead of an isometric embedding. However, we speculate that any 1-embedding with rate more than $1/3$ needs to be isometric.  See Remark~\ref{rem:1embed} for details.
} leading to the following fundamental question:

\begin{center}
    \textit{What is the optimal rate for an isometric embedding\\ of the Hamming metric into the Edit metric?}
\end{center}

We make significant progress on this question and prove the following:

\begin{theorem}\label{thm:explicit}
There is an isometric embedding of rate $\tfrac{1}{8}$, i.e., for every positive integer $n$, there is an isometric embedding $\varphi_{n}:\{0,1\}^n\to\{0,1\}^{8n}$ of the Hamming metric into the edit metric.\end{theorem}

\begin{sloppypar}
Consequently, we obtain a randomized communication complexity lower bound of \mbox{\((\tfrac{1}{8}-o_{\varepsilon}(1))n\)} for the Exact-Edit problem. Determining the optimal leading constant is an interesting open problem; in contrast to the  {Exact-Hamming} problem, we suspect the randomized communication complexity of the {Exact-Edit} problem might be \((1-\delta)n\) bits for some constant \(\delta>0\).

Moreover, Theorem~\ref{thm:explicit} gives, in a black-box way, edit-metric codes of positive constant rate with relative distance approaching \(\tfrac{1}{16}\). Explicit constructions with larger relative distance are known, but they all rely on tailored analyses that exploit the structure of the codewords. 
Finally, a combinatorial consequence of Theorem~\ref{thm:explicit} is the realization of the $\tfrac{n}{8}$-dimensional Hamming cube in the $n$-dimensional edit cube.
\end{sloppypar}

To prove Theorem~\ref{thm:explicit}, we introduce objects called \emph{misaligners}, which can be thought of as codes in the edit metric with robust distance guarantees, in two distinct ways. First, misaligner codewords contain \textit{wildcard symbols} and,  irrespective of how these wildcards are instantiated, a large pairwise distance is guaranteed between any two distinct codewords. 
 The second guarantee is even stronger: not only is the distance between pairs of codewords large, but the distance also remains large even when comparing a codeword with a concatenation of multiple codewords. This protects against situations where the concatenation of the suffix and prefix of two codewords might closely resemble another codeword. Formally defining misaligners requires a bit of work, but below, when we speak of $(\alpha,m)$-misaligners, we informally refer to misaligners that have codewords of length $m$ and relative distance $\alpha$. We elaborate more on the parameters of misaligners in Section~\ref{sec:overview-2}, and it is formally defined in Definition~\ref{def:misaligners}.

In addition to misaligners, we also explicitly formulate objects called  $\varepsilon$-\textit{locally self-matching strings}, which are strings in which every substring of length $\ell$ has non-vertical matches of size at most $\varepsilon\ell$, where, by non-vertical matches, we forbid matching an index to itself. These objects appear implicitly in the context of synchronization strings introduced by Haeupler and Shahrasbi \cite{haeupler2017synchronization}. Again, we elaborate more on this notion in Section~\ref{sec:overview-2}, and it is formally defined in Definition~\ref{def: self-matching}.

One of our main technical contributions is showing how to utilize the construction of misaligners and locally self-matching strings to design high-rate isometric embeddings of the Hamming metric into the edit metric.

\begin{theorem}[Informal statement of Theorem~\ref{thm:main}]\label{thm:informalmain}
 Suppose for some $\varepsilon,\alpha>0$ and some integer $m$, there exists an $(\alpha,m)$-misaligner and a $\varepsilon$-{locally self-matching string}, then for every positive integer $n$ there is an isometric embedding $\varphi_{n}:\{0,1\}^n\to\{0,1\}^{Rn}$ of the Hamming metric into the edit metric, where:
 $$R= \left\lceil\frac{1}{(1-\varepsilon)\alpha-\frac{1}{3m-1}}\right\rceil.$$
\end{theorem}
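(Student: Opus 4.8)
The plan is to build the embedding $\varphi_n$ by an \emph{interleaved} construction: partition the $N=Rn$ output coordinates into $n$ consecutive blocks, and into the $i$-th block write a string that encodes the single bit $x_i$ while also carrying synchronization information so that alignments between $\varphi(x)$ and $\varphi(y)$ are forced to respect the block structure. Concretely, I would fix a long $\varepsilon$-locally self-matching string $S$ over a suitable alphabet (length $\approx N$, or $n$ symbols, one per block), and use the $(\alpha,m)$-misaligner to turn each pair (bit value $x_i$, local chunk of $S$) into a binary string of length $R$; the wildcard symbols in the misaligner codewords are instantiated using the synchronization string $S$, so that the robust-distance guarantee of the misaligner applies no matter what those symbols end up being. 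The bit $x_i$ is encoded by a choice between (at least) two misaligner codewords whose pairwise edit distance is $\geq \alpha R$, so that in a block where $x_i\neq y_i$ the two encodings differ by a large edit distance, and in a block where $x_i=y_i$ they are identical.

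The core of the argument is then the two directions of the isometry $\ed(\varphi(x),\varphi(y))=\ham(x,y)$. The upper bound $\ed(\varphi(x),\varphi(y))\leq\ham(x,y)$ is immediate: substituting block $i$ of $\varphi(x)$ for block $i$ of $\varphi(y)$ whenever $x_i\neq y_i$ transforms $\varphi(x)$ into $\varphi(y)$ using exactly $\ham(x,y)\cdot R$ edits — wait, that only gives $\leq R\cdot\ham(x,y)$, so to get the true isometry I need the finer point that each differing block can actually be converted with a number of edits proportional to its contribution, and more importantly that this is \emph{tight}; in the interleaved setting the honest statement is that $\varphi$ will be normalized so that the per-block cost is exactly $1$, i.e. the two codewords encoding $x_i=0$ and $x_i=1$ are at edit distance exactly $1$ from each other while the misaligner guarantees distance $\geq \alpha R$ \emph{from concatenations of other blocks}. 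I would set this up following the framework behind Theorem~\ref{thm:informalmain}/Theorem~\ref{thm:main}, where the block gadget is a two-symbol perturbation of a common backbone.

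The substantive direction is the lower bound $\ed(\varphi(x),\varphi(y))\geq\ham(x,y)$. Here I would take an optimal alignment (monotone partial matching) between $\varphi(x)$ and $\varphi(y)$ and argue that it must be \emph{essentially block-respecting}: using the $\varepsilon$-local self-matching property of $S$, any alignment that matches a chunk of block $i$ of $\varphi(x)$ to a chunk of block $j\neq i$ of $\varphi(y)$ can match only an $\varepsilon$-fraction of it, because the embedded synchronization symbols prevent long off-diagonal matches; this is exactly the ``misaligner'' phenomenon — a suffix of one block concatenated with a prefix of the next cannot masquerade as another block, which is the second (stronger) misaligner guarantee. Quantitatively, one charges: within the $\approx(1-\varepsilon)$ fraction of each block that must be matched ``vertically'' (block $i$ to block $i$), a block with $x_i\neq y_i$ contributes at least $(1-\varepsilon)\alpha m$-ish edits, and after accounting for the boundary slack of $\frac{1}{3m-1}$ per unit one gets a net contribution of at least $\big((1-\varepsilon)\alpha-\frac{1}{3m-1}\big)R\geq 1$ edit per mismatched coordinate by the choice of $R=\lceil 1/((1-\varepsilon)\alpha-\frac1{3m-1})\rceil$. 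Summing over the $\ham(x,y)$ mismatched coordinates gives $\ed(\varphi(x),\varphi(y))\geq\ham(x,y)$, and combined with the upper bound this yields the isometry.

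The main obstacle I anticipate is making the ``block-respecting alignment'' step fully rigorous: an optimal edit-distance alignment is global and can shift, so one cannot assume a priori that block $i$ aligns to block $i$. The right tool is the local self-matching guarantee together with the structural properties of misaligners (this is presumably where Definition~\ref{def:misaligners} and Definition~\ref{def: self-matching} are used in tandem) to show that any drift is paid for by edits at block boundaries at a rate that is captured exactly by the $\frac{1}{3m-1}$ term. I would handle this by a potential/charging argument over the blocks, tracking the misalignment offset as the alignment progresses and showing each unit of offset change or each block of ``wrong'' matching is locally compensated; the constants $3m-1$ and the ceiling are tuned precisely so that the per-mismatch net cost never drops below $1$. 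Getting those constants to line up — rather than the conceptual picture — is the delicate part.
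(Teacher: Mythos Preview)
Your construction has the roles of the two ingredients reversed, and this is not a cosmetic issue --- it breaks both halves of the argument. In the paper's embedding, the $\varepsilon$-locally self-matching string $w$ (over a large alphabet) determines \emph{which} misaligner codeword is placed at each position; the input bits $x_1,\dots,x_n$ are then substituted into the \emph{wildcard} positions of those codewords. You have it the other way around: you want the input bit $x_i$ to choose between two misaligner codewords, and you want the synchronization string to fill the wildcards. With the correct construction the upper bound is immediate, because $\varphi(x)$ and $\varphi(y)$ are instantiations of the \emph{same} wildcard string and hence differ only at the wildcard positions, so $\ham(\varphi(x),\varphi(y))=\ham(x,y)$ and $\ed\leq\ham$ is free. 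Your attempt to rescue the upper bound (``the two codewords encoding $0$ and $1$ are at edit distance exactly $1$ while the misaligner guarantees distance $\geq \alpha R$ from concatenations'') is asking for incompatible things and is a symptom of the inverted setup.

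The lower bound is also structured differently from what you describe. It is not a per-mismatch charging (``each block with $x_i\neq y_i$ contributes $\geq 1$''). Instead, one takes an optimal alignment, partitions $[N]$ into maximal vertical and maximal nowhere-vertical intervals, and shows that on \emph{every} nowhere-vertical interval $I$ the cost is at least $(1-\varepsilon)\alpha|I|$; since the number of wildcards in $I$ is at most $|I|/t+1$, the choice $t\geq 1/((1-\varepsilon)\alpha-\tfrac{1}{3m-1})$ forces $\cost\geq \ham$ on each such interval. Within this argument, the locally self-matching property is used not to prevent cross-block matching outright but to bound the fraction of ``bad'' blocks (blocks that align perfectly to a different occurrence of the same codeword) by $\varepsilon$; the misaligner's Properties~\ref{prop: full-block-distance} and~\ref{prop: boundary-blocks} then guarantee cost $\geq \alpha m$ on every non-bad block and on the boundary partial blocks. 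The $3m-1$ enters because Property~\ref{prop:short-intervals} handles intervals of length $\leq 3m-2$ directly, so the block-by-block analysis only kicks in for $|I|\geq 3m-1$.
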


Even with limited computing resources, we were able to construct a $(0.1625, 320)$-misaligner and a $0.224$-locally self-matching string. This yields the embedding in Theorem~\ref{thm:explicit} (see Section~\ref{sec:explicit-const} for more details).

It is possible that with sufficient time and computing resources (i.e., by making $m$ large and $\varepsilon$ tiny), we can use Theorem~\ref{thm:informalmain} to obtain embeddings with rate approaching $\alpha$, where $\alpha$ is the relative distance parameter of the misaligner. From the current computer search, we speculate that $\alpha$ is above 0.2. Thus, it is possible in the future that we have an isometric embedding where $n$ bits are mapped only to $5n$ bits.

As remarked earlier, locally self-matching strings are closely related to synchronization strings \cite{haeupler2017synchronization}. Both these objects are constructed using the algorithmic Lovasz Local Lemma \cite{moser2010constructive}. Prior work established that for all positive integers $n$ there exists a $\varepsilon$-locally self-matching string $w$ of length $n$ 
 over alphabet of size $O(1/\varepsilon^2)$ \cite{cheng2018synchronization}.    However, we could not use any of the existing analyses using the Lovasz Local Lemma in the synchronization strings literature due to the large hidden constants in those works. Thus, en route to proving Theorem~\ref{thm:explicit}, we also improve the analysis of constructing synchronization strings. To the best of our knowledge, prior to our work, the hidden constant in the alphabet size bound in these analyses was about $4900\cdot e^2\approx 3.6\times 10^4$, and we have reduced it to a quantity that approaches $e^2\approx 7.39$ as $\varepsilon$ goes to 0.

\begin{theorem}
Let $\Sigma$ be a finite set and $\varepsilon\in \left(0,\frac{1}{2}\right]$ such that the following holds: $$|\Sigma|\ge \frac{e^2}{\varepsilon^2}\cdot \left(1+4\sqrt[4]{\varepsilon}\right).$$
Then, for all positive integers $n$, there exists a $\varepsilon$-locally self-matching string $w$ over $\Sigma$ of length $n$.     
\end{theorem}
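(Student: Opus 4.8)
The plan is to use the Lovász Local Lemma (LLL) in its algorithmic form. First I would set up the probabilistic experiment: sample each coordinate $w_1,\dots,w_n$ of the string independently and uniformly at random from $\Sigma$. For each pair of intervals witnessing a potential violation of the $\varepsilon$-locally self-matching property, define a ``bad event.'' Concretely, a bad event $B$ should correspond to a choice of interval $[i,j]$ and a non-vertical monotone matching of size exceeding $\varepsilon(j-i+1)$ between $w_{[i,j]}$ and itself; equivalently (to keep the events minimal and make the dependency degree small) one should restrict to \emph{minimal} violations, i.e., matchings of size exactly $t$ within an interval of length roughly $t/\varepsilon$ that cannot be ``shrunk,'' so that each bad event is determined by only $\Theta(t)$ coordinates. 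The probability that a fixed non-vertical matching of size $t$ is realized is at most $|\Sigma|^{-t}$, since each matched pair $(a,b)$ with $a\ne b$ imposes the constraint $w_a=w_b$, and for a monotone non-vertical matching these $t$ constraints are on $2t$ distinct coordinates forming a forest with $t$ edges, hence the constraints are independent and each costs a factor $1/|\Sigma|$.

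Second, I would bound the number of bad events a given bad event depends on. A bad event on an interval of length $\approx t/\varepsilon$ shares a coordinate with bad events on intervals of length $\approx s/\varepsilon$ only if the two intervals overlap, and the number of such overlapping intervals of a given length $L$ is $O(L + t/\varepsilon)$; summing the relevant counts weighted appropriately gives a dependency bound of roughly $\sum_s (\text{number of size-}s\text{ matchings in a window}) \cdot (\text{overlap count})$. Then I would apply the asymmetric/weighted LLL (the Moser–Tardos version), assigning to each bad event $B$ of ``size'' $t$ a real weight $x_B = y^t$ for a parameter $y<1$ to be optimized, and verify the LLL condition $\Pr[B] \le x_B \prod_{B'\sim B}(1-x_{B'})$. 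The heart of the calculation is estimating $\prod_{B'\sim B}(1-x_{B'})$ via $\sum_{B'\sim B} x_{B'} \le \sum_{s\ge 1} N_s\, y^s$ where $N_s$ counts size-$s$ bad events overlapping $B$; the Catalan-type bound $N_s \le \binom{2s}{s}\cdot (\text{something linear in } t/\varepsilon) \le 4^s \cdot O(t/\varepsilon)$ on the number of monotone matchings, combined with choosing $y$ and $|\Sigma|$ so that the geometric-type series converges with the right constant, is what drives the bound $|\Sigma| \ge \frac{e^2}{\varepsilon^2}(1+4\sqrt[4]{\varepsilon})$. I expect the main obstacle to be doing this optimization \emph{tightly}: one must pick the weight parameter $y$ (naturally of order $\varepsilon$, so that $y^t \cdot |\Sigma|^t$ stays bounded) and carefully track every constant — in particular bounding $\binom{2s}{s} \le 4^s$, controlling the $O(t/\varepsilon)$ polynomial prefactor so it gets absorbed into the $x_B$ weight rather than the $|\Sigma|$ bound, and handling the boundary between ``minimal'' violations of different sizes — so that the leading constant comes out to $e^2$ rather than a large multiple of it. Getting $e^2$ specifically suggests setting $y = \varepsilon/e$ or similar and using $1/(1-x)\le e^{x/(1-x)}$ with care; the $4\sqrt[4]{\varepsilon}$ slack term is presumably the price paid for crudely bounding the lower-order contributions (the linear-in-$t/\varepsilon$ prefactors and the tail of the series), and the quartic root hints that the sub-optimal terms are balanced at scale $\varepsilon^{3/4}$ versus $\varepsilon$.

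Finally, once the LLL condition is verified, the algorithmic LLL (Moser–Tardos) immediately gives existence (indeed efficient constructibility) of an assignment avoiding all bad events, i.e., a string $w\in\Sigma^n$ with no non-vertical monotone self-matching exceeding the $\varepsilon$-threshold in any window — which is exactly an $\varepsilon$-locally self-matching string of length $n$. I would also double-check the edge cases $\varepsilon$ near $1/2$ (where the required alphabet size $\frac{e^2}{\varepsilon^2}(1+4\sqrt[4]{\varepsilon})$ is a small absolute constant and one must make sure the series still converges and that intervals of length $<2$ are handled vacuously) and confirm the claim that the constant tends to $e^2$ as $\varepsilon\to 0$ by noting $1+4\sqrt[4]\varepsilon \to 1$.
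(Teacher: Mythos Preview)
Your high-level plan---asymmetric Lov\'asz Local Lemma with geometrically decaying weights---is correct and is what the paper does. However, two specific choices in your setup diverge from the paper, and one of them is a genuine gap.

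First, the paper does \emph{not} sample the coordinates i.i.d. It uses a sliding-window process: with $\theta := e^2/\varepsilon^{7/4}$, each $w[i]$ is drawn uniformly from $\Sigma \setminus \{w[i-1],\dots,w[i-\lceil\theta\rceil+1]\}$. This forces every window of length $<\theta$ to have all distinct symbols and hence to be automatically good, so bad events only exist for intervals of length $\ge \theta$. That is what makes the LLL overhead a lower-order term: with only long intervals in play, the weight $x_s=D^{-2\varepsilon|s|}$ with $D=1+0.9\sqrt[3]{\varepsilon}$ (barely above~$1$) already works. Under i.i.d.\ sampling you would need to handle intervals down to length $\approx 1/\varepsilon$, and the product over the many overlapping short bad events forces a larger constant in front of $1/\varepsilon^2$. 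Note also that $\theta/(e^2/\varepsilon^2)=\varepsilon^{1/4}$, which accounts for exactly one of the four quarter-roots in the slack term $4\sqrt[4]{\varepsilon}$; the other three come from the LLL product bound.

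Second---and this is the real gap---the paper uses one bad event \emph{per interval} (``$\slcs(s,s)\ge \varepsilon|s|$''), not one per matching, and bounds its probability by a union bound over matchings:
\[
\Pr[s \text{ bad}] \;\le\; \binom{\ell}{\varepsilon\ell}^2 (|\Sigma|-\theta)^{-\varepsilon\ell} \;\le\; \Bigl(\tfrac{e}{\varepsilon}\Bigr)^{2\varepsilon\ell}(|\Sigma|-\theta)^{-\varepsilon\ell}.
\]
The $e^2$ constant comes precisely from the Stirling estimate $\binom{\ell}{\varepsilon\ell}\le (e/\varepsilon)^{\varepsilon\ell}$. Your proposed count $N_s \le \binom{2s}{s}\cdot O(t/\varepsilon)\le 4^s\cdot O(t/\varepsilon)$ is incorrect: Catalan numbers count \emph{non-crossing} matchings, whereas a nowhere-vertical common subsequence is a \emph{monotone} (increasing--increasing) matching, and the number of those of size $s$ in a window of length $\ell\approx s/\varepsilon$ is $\binom{\ell}{s}^2\approx (e/\varepsilon)^{2s}$, not $4^s$. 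With the wrong count your LLL inequality would appear to close at $|\Sigma|\approx 4/\text{(something)}$, which is far too optimistic; with the correct count you are back to the $(e/\varepsilon)^2$ scale, but then you still need the sliding-window device (or an equivalent mechanism to suppress short intervals) to push the multiplicative LLL overhead down to $1+O(\varepsilon^{1/4})$ rather than a fixed constant $>1$.
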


As an immediate consequence of this improved analysis, we also obtain the following (proved in Appendix~\ref{sec:sync-4}).

\begin{corollary}
\label{cor:sync-4}
There exists an infinite $0.999606$-synchronization string over an alphabet of size four.
\end{corollary}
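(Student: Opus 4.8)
The plan is to obtain the corollary not by a direct construction over a four-letter alphabet, but by combining the improved existence statement above with a synchronization-string alphabet-reduction technique, and then optimizing constants. Recall the standard dictionary: in the synchronization-string setting $\ed$ is the insertion-and-deletion edit distance, so $\ed(a,b)=|a|+|b|-2\lcs(a,b)$, and $S$ is an $\varepsilon$-synchronization string iff $\lcs\big(S[i,j),S[j,k)\big)<\tfrac{\varepsilon}{2}(k-i)$ for all $i<j<k$. Since any common subsequence of two adjacent disjoint blocks of $S$ is a non-vertical self-matching of the enclosing window, an $\varepsilon'$-locally self-matching string is automatically a $(2\varepsilon')$-synchronization string over the same alphabet. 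Plugging $\varepsilon'$ slightly below $\tfrac12$ into the theorem above thus gives, for every length $n$, a $(1-o(1))$-synchronization string over a constant-size alphabet; the improved $e^2$ constant is exactly what keeps that constant small enough to be useful.

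It remains to drop from this moderate alphabet to $\{0,1,2,3\}$ and from ``every finite length'' to ``infinite''. The latter is routine: the synchronization property is inherited by prefixes, so a compactness argument (K\"onig's lemma) promotes an infinite family of finite examples to an infinite string. For the former I would invoke a synchronization-string concatenation / alphabet-reduction technique from the literature (e.g.\ that of Haeupler and Shahrasbi): from an $\varepsilon_{\mathrm{out}}$-synchronization string over an alphabet $\Sigma_{\mathrm{out}}$ and an $\varepsilon_{\mathrm{in}}$-synchronization string over a smaller alphabet $\Sigma_{\mathrm{in}}$ of length about $\log_{|\Sigma_{\mathrm{in}}|}|\Sigma_{\mathrm{out}}|$, one builds a synchronization string of unbounded length over $\Sigma_{\mathrm{in}}$ whose parameter $\varepsilon$ obeys an explicit bound in terms of $\varepsilon_{\mathrm{out}},\varepsilon_{\mathrm{in}},|\Sigma_{\mathrm{out}}|,|\Sigma_{\mathrm{in}}|$ and the block-length split. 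Taking $\Sigma_{\mathrm{in}}=\{0,1,2,3\}$, using the string from the first step (with its now-small alphabet) as the outer string, and taking an explicit short synchronization string over $\{0,1,2,3\}$ found by computer search as the inner seed, what is left is a finite-dimensional optimization over the free parameters --- the outer $\varepsilon'$ and alphabet (constrained by the theorem's inequality), the inner seed, and the split --- whose optimum, as the paper's search reports, is $1-\varepsilon=0.000394$, i.e.\ $\varepsilon=0.999606$.

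The main obstacle is the quantitative bookkeeping in the reduction, not any conceptual difficulty: one has to track precisely how the synchronization parameter degrades on passing to inner-block-encoded symbols --- especially for common subsequences that straddle inner-block boundaries, the exact place where a naive fixed-length encoding loses a constant factor --- and then check that the sharpened outer bound leaves just enough room to land at alphabet exactly $4$ with $\varepsilon$ as large as $0.999606$. The detour through a larger alphabet is genuinely necessary: a direct Lov\'asz Local Lemma construction over $\{0,1,2,3\}$ fails, since already the short bad event ``$S[i]=S[i+1]$'' has probability $\tfrac14$ with a far-too-large dependency neighborhood, and the locally-self-matching route also breaks down as $\varepsilon'\to\tfrac12$ because there the relevant bad-event probability bound exceeds $1$. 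So it is precisely the improved large-alphabet analysis that powers the corollary.
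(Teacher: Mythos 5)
There is a genuine gap here, and it is precisely in the place you flag yourself: the ``quantitative bookkeeping.'' You correctly identify the two high-level ingredients --- the improved large-alphabet bound from Theorem~\ref{thm:lll-string} together with the dictionary ``$\varepsilon$-locally-self-matching $\Rightarrow$ $2\varepsilon$-synchronization,'' and some alphabet-reduction device --- but you never write down the reduction you intend to use, and you source the final constant $0.999606$ by deferring to the paper (``as the paper's search reports'') rather than deriving it from your own route. That makes the argument circular rather than complete. In fact the paper does not perform a multi-parameter search at all: it plugs the improved bound into two black-box results from Cheng--Haeupler--Li--Shahrasbi. The first (their Theorem~6.3) converts an $\varepsilon'$-synchronization string over an alphabet of size $2^k$ into a $\big(1-\frac{1-\varepsilon'}{18k}\big)$-\emph{weak} synchronization string over the \emph{binary} alphabet; optimizing the one-dimensional expression $1-\frac{1-\varepsilon'/2}{18\log_2\!\left(\frac{4e^2}{\varepsilon'^2}(1+4\sqrt[4]{\varepsilon'/2})\right)}$ over $\varepsilon'\in(0,\tfrac12]$ gives a $0.995268$-weak synchronization string over $\{0,1\}$. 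The second (their Theorem~6.4) converts a binary $\varepsilon$-weak synchronization string into a $\frac{\varepsilon+11}{12}$-synchronization string over alphabet four, and $\frac{0.995268+11}{12}=0.999606$.

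Your route is genuinely different: you propose to go \emph{down} from the moderate alphabet directly to $\{0,1,2,3\}$ via a Haeupler--Shahrasbi-style concatenation with an inner seed found by computer search, whereas the paper detours \emph{through binary weak synchronization strings} and then back up to alphabet four, with no inner seed and no search. The concatenation route would require you to state and verify the exact parameter-degradation lemma --- in particular how the synchronization parameter behaves for alignments straddling inner-block boundaries, the exact place you correctly identify as delicate --- and check it actually clears $0.999606$ over alphabet four. Without that, the claim is unproved. Two smaller points: the paper's sources hand you infinite strings directly, so the K\"onig's-lemma step is unnecessary (though it would be valid); and your remark that a direct LLL over $\{0,1,2,3\}$ fails is a reasonable heuristic but plays no role in the actual argument.
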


The above result adds to the work of \cite{cheng2018synchronization} where it was shown that there is some unspecified constant $\varepsilon$ such that $\varepsilon$-synchronization strings exist over an alphabet of size four.

\subsection{Structure of Isometric Embeddings and Impossibility Results}\label{sec:intro-padding}

The embedding in Theorem~\ref{thm:informalmain} is an example of an \textit{interleaved embedding}, where the output string is formed by interleaving the input bits with sequences of bit strings that do not depend on the input. A natural question is whether there exist isometric embeddings, potentially with better rates, that do not follow this interleaving framework.  We answer this negatively, demonstrating that isometry necessitates an interleaved structure in the output strings.

\begin{theorem}[Isometry implies Interleaving]
\label{thm:informal-isometry-implies-interleaving}
    Every isometric embedding \(\varphi:\{0, 1\}^n\to \{0, 1\}^N\) of the Hamming metric into the edit metric must be an interleaved embedding.
\end{theorem}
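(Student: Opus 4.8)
The plan is to characterize precisely what constraints isometry imposes on $\varphi$ and then show these constraints force an interleaved structure. First I would fix notation: for an isometric embedding $\varphi$, consider the $n+1$ strings $z^{(0)}=\varphi(0^n), z^{(1)}=\varphi(10^{n-1}), \dots$ obtained by flipping bits one at a time, but more usefully, consider for each coordinate $i\in[n]$ and each $x$ the pair $\varphi(x)$ and $\varphi(x^{\oplus i})$ where $x^{\oplus i}$ flips the $i$-th bit. Since $\ham(x, x^{\oplus i})=1$, isometry gives $\ed(\varphi(x),\varphi(x^{\oplus i}))=1$: the two output strings differ by a single edit operation. The crucial first step is to rule out insertions/deletions here: because $|\varphi(x)|=|\varphi(x^{\oplus i})|=N$, a single edit transforming one into the other must be a substitution (an insertion would change the length, and a deletion-insertion pair is two operations). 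Hence $\varphi(x)$ and $\varphi(x^{\oplus i})$ differ in exactly one position, i.e., their Hamming distance is $1$.

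The next step is to leverage this to show that $\varphi$ is itself a Hamming isometry onto its image, and in fact that each input coordinate controls a single fixed output coordinate. Define, for each $x$ and each $i$, the position $p_i(x)\in[N]$ where $\varphi(x)$ and $\varphi(x^{\oplus i})$ disagree. The plan is to show $p_i(x)$ does not depend on $x$. Take any two coordinates $i\ne j$. Comparing $\varphi(x)$, $\varphi(x^{\oplus i})$, $\varphi(x^{\oplus j})$, $\varphi(x^{\oplus i\oplus j})$: all pairwise Hamming distances are forced (the first to the last has Hamming distance $2$ by the triangle inequality being tight, since edit distance is $2$ and can't be less), and a short combinatorial argument on these four strings shows $p_i(x)=p_i(x^{\oplus j})$ and $p_j(x)=p_j(x^{\oplus i})$, and moreover $p_i(x)\ne p_j(x)$. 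Iterating over all coordinates and using connectivity of the hypercube, $p_i(x)=:\pi(i)$ is a fixed injection $\pi:[n]\hookrightarrow[N]$. It follows that on the coordinates $\pi([n])$, $\varphi(x)$ is (up to a fixed relabeling and a fixed XOR mask $\varphi(0^n)$) just a copy of $x$, and on the remaining $N-n$ coordinates $\varphi(x)$ is constant, equal to the corresponding bits of $\varphi(0^n)$. That is exactly an interleaved embedding: the input bits appear verbatim at fixed positions $\pi(1)<\pi(2)<\cdots<\pi(n)$ (after possibly complementing some, which can be absorbed), interleaved with a fixed string independent of $x$.

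The main obstacle I anticipate is the four-point argument establishing that the disagreement positions are globally consistent and that distinct coordinates map to distinct positions — in other words, ruling out the possibility that flipping bit $i$ changes output position $q$ for some inputs $x$ but a different position $q'$ for others, or that flipping $i$ and flipping $j$ both act on the same output coordinate. One has to carefully use that $\ed(\varphi(x),\varphi(x^{\oplus i\oplus j}))=2$ and that this is witnessed only by the Hamming differences (again a length argument plus the structure of optimal alignments), then do a small case analysis on how the four binary strings of length $N$ can have the prescribed pairwise Hamming distances $1,1,1,1$ (the four "short" edges) and $2,2$ (the two "diagonals"). I expect this to reduce to showing that if $u,v,v',w\in\{0,1\}^N$ satisfy $\ham(u,v)=\ham(u,v')=\ham(v,w)=\ham(v',w)=1$ and $\ham(u,w)=2$, then $v$ and $v'$ are the two distinct "midpoints," which pins down the positions. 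Handling the case $\ham(u,w)<2$ (which would need $v=v'$, impossible) cleanly, and then globalizing via an induction over a spanning structure of the hypercube, is the technical heart; everything after that is bookkeeping to phrase the conclusion in the paper's interleaved-embedding terminology.
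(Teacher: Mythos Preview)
Your proposal is correct and follows essentially the same route as the paper: establish that edit-distance-$1$ pairs of equal length must differ by a single substitution, define the position function $p_i(x)$ (the paper calls it $\eta_x(i)$), use the $4$-cycle $x, x^{\oplus i}, x^{\oplus j}, x^{\oplus i\oplus j}$ to prove injectivity and that $p_i(x)$ is independent of $x$, and globalize via hypercube connectivity (the paper phrases this as induction on $\ham(x,y)$). One minor slip: at the end you write $\pi(1)<\cdots<\pi(n)$, but the paper's definition of interleaved embedding allows $\eta$ to be any injection, not necessarily order-preserving --- your earlier phrasing ``a fixed injection $\pi:[n]\hookrightarrow[N]$'' is the correct one.
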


A precise formulation of Theorem~\ref{thm:informal-isometry-implies-interleaving} requires a formal definition of interleaved embeddings, which we provide in Section~\ref{sec:isometry-implies-interleaving}. Informally, we naturally extend the intuitive notion of interleaving input bits with fixed bit patterns by allowing two additional flexibilities. First, we allow the input bits to appear \textit{out-of-order} --- e.g., the second input bit might appear after the first input bit in the output string. We further allow some of the input bits to appear \textit{complemented} in the output string. Thus, an isometric embedding of the Hamming metric into the edit metric is completely determined by three components --- the order in which input bits appear in the output, the subset of input bits that are flipped, and the sequence of fixed bit strings that are to be interleaved with the input bits.

This structural constraint on Hamming-to-edit isometric embeddings enables us to derive bounds on the achievable rate. In particular, we show that for binary strings, any isometric embedding must stretch the input strings by a factor greater than 2.133.

\begin{theorem}
\label{thm:binary-rate-upper-bound}
    There exists a positive integer \(n_0\) such that every isometric embedding \(\varphi:\{0, 1\}^n\to \{0, 1\}^N\) of the Hamming metric into the edit metric with \(n\geq n_0\) must have rate at most \(\frac{15}{32}\).
\end{theorem}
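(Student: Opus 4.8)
The plan is to exploit the structural characterization from Theorem~\ref{thm:informal-isometry-implies-interleaving}: any isometric embedding $\varphi:\{0,1\}^n\to\{0,1\}^N$ is determined by a permutation $\pi$ of the input coordinates, a sign pattern (which bits get complemented), and a sequence of fixed ``filler'' strings $f_0,f_1,\dots,f_n$ interleaved between consecutive input bits. Since complementing input bits and permuting them does not affect the rate, it suffices to lower bound $N=n+\sum_{i=0}^n|f_i|$ by showing that the total filler length $\sum_i|f_i|$ must be at least roughly $\frac{17}{15}n$, i.e., that on average each gap between consecutive input bits must carry more than one filler bit. The heart of the argument is a local constraint: I would show that the filler string $f_i$ sitting between the $i$-th and $(i{+}1)$-st input bits (in output order) cannot be too short, because if several consecutive fillers were all very short, then a string $x$ and a string $y$ differing from $x$ in the two input bits flanking those fillers would have $\ed(\varphi(x),\varphi(y))<2=\ham(x,y)$ --- the two single-bit changes could be ``absorbed'' by a single pair of insertion/deletion shifts through the short filler region, contradicting isometry.

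Concretely, the first step is to isolate, for each pair of adjacent input positions, the exact combinatorial requirement on the intervening filler. Flipping one input bit changes $\varphi(x)$ in exactly one position, so by isometry the resulting string must be at edit distance exactly $1$ from $\varphi(x)$; this is automatic. The real information comes from flipping a \emph{pair} of input bits that are adjacent in the output order: the two local modifications, together with the filler $f$ between them, form a gadget, and isometry demands its edit distance from the unmodified gadget be exactly $2$. I would analyze when this fails: writing the gadget as $a\,f\,b$ where $a,b\in\{0,1\}$ are the flanking input bits and $f$ the filler, the modified gadget is $\bar a\,f\,\bar b$, and I need $\ed(a f b,\bar a f \bar b)=2$. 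For short $f$, and depending on the bit values and the boundary characters of $f$, one can sometimes achieve distance $1$ via a delete-on-one-side/insert-on-the-other alignment that ``slides'' $f$ across; ruling this out forces structural conditions on $f$ (e.g., a minimum length, or forbidden prefixes/suffixes relative to $a,b$). Extending this to \emph{runs} of $k$ consecutive short fillers --- flipping all $k{+}1$ flanking input bits simultaneously and demanding edit distance exactly $k{+}1$ --- yields the amortized bound: a block of input bits separated by total filler length $L$ must satisfy an inequality of the form $L\ge c\cdot(\#\text{bits})$ with $c>1$, and optimizing the gadget analysis gives the constant $\frac{15}{32}$ on the rate (equivalently total length $\ge \frac{32}{15}n$).

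The main obstacle is the edit-distance gadget analysis: computing $\ed(a f b,\bar a f \bar b)$ (and its multi-bit generalization over a run of short fillers) exactly, as a function of the filler content, and proving that the only way to keep \emph{every} such gadget at the correct distance is to pay the claimed amount of filler per input bit. This is where the bound's exact value $\frac{15}{32}$ is pinned down, presumably by a worst-case filler pattern that is as economical as possible while still defeating all the ``sliding'' alignments; I expect this to require a careful case analysis on the boundary symbols of each filler block and possibly a small computer-assisted search over short patterns, analogous to the search used for Corollary~\ref{cor:explicit}. A secondary technical point is handling the two endpoints $f_0$ and $f_n$ and the interaction between the permutation $\pi$ and adjacency in the output (flipping input bits that are adjacent in output order need not be adjacent in input order, but since the Hamming distance only counts the number of flipped coordinates, this is harmless --- one is free to choose whichever pairs are convenient), plus absorbing lower-order terms, which is why the statement only claims the bound for $n\ge n_0$.
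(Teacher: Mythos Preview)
Your pair gadget gives no constraint at all, and your claim that ``one can sometimes achieve distance $1$'' is false. If $\varphi(x)$ and $\varphi(y)$ have the same length and differ in exactly two positions, then $\ed(\varphi(x),\varphi(y))=2$ always: the upper bound is two substitutions, and for the lower bound, a single edit operation either changes the length or changes exactly one character, so $\ed\ge 2$. No ``slide'' alignment can beat this --- your delete-then-insert move already costs $2$, not $1$. So isometry at Hamming distance $2$ is automatic and tells you nothing about the filler $f$, regardless of how short $f$ is. The entire first stage of your plan is vacuous.

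This means the whole argument must rest on the ``runs'' extension, which you do not actually carry out; and even there, your setup --- take some $x$ and flip a block of input bits to get $y$ --- is too rigid to yield $\frac{15}{32}$. The paper's proof (Section~\ref{sec:UpperBoundsOnTheRate}, proving the more general Theorem~\ref{thm:general-rate-upper-bound} of which the binary case $|\Sigma|=2$ gives $\frac{1}{2}-\frac{1}{32}=\frac{15}{32}$) is global and probabilistic rather than local. It fixes a family of \emph{shift} alignments $\mathcal{A}_\delta$ for $\delta$ ranging over a constant-size set, and for each $\delta$ uses the interleaved structure to \emph{construct} a pair $x_\delta,y_\delta$ with $\ham(x_\delta,y_\delta)=n$ such that every mutable symbol of $\varphi(x_\delta)$ is matched (never substituted) under $\mathcal{A}_\delta$. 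The freedom to choose $x_\delta$ and $y_\delta$ independently --- not merely $y=\bar x$ --- is what makes this possible. Substitutions then come only from frozen symbols, and a pigeonhole/averaging count over $\delta$ shows that the expected number of frozen--frozen matches is large enough to push the expected alignment cost below $n$ whenever the rate exceeds $\frac{15}{32}$. Your proposal never isolates the frozen--frozen matches, never averages over shifts, and never exploits the ability to set the mutable symbols adversarially per alignment; without those ingredients it is not clear any nontrivial rate bound follows.
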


We note that the above upper bound on rate has been improved by Bhattacharya (see Theorem 4.30 in \cite{bhattacharya2025string}) to  $\frac{3}{7}+o(1)$.

Next, we generalize Theorem~\ref{thm:binary-rate-upper-bound} to larger alphabets by first extending the definition of interleaved embeddings to larger alphabets (see Section~\ref{sec:isometry-implies-generalized-interleaved-embedding}), then proving that every isometric embedding must be an interleaved embedding,  and finally showing an analogous statement to Theorem~\ref{thm:binary-rate-upper-bound}.

\begin{theorem}
\label{thm:general-rate-upper-bound}
    For every alphabet $\Sigma$, there exists an integer $n_0$ such that every isometric embedding $\varphi: \Sigma^n \to \Sigma^N$ of the Hamming metric into the edit metric with $n\geq n_0$ must have rate at most $\frac{1}{2}-\frac{1}{16|\Sigma|}$.
\end{theorem}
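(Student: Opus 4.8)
The plan is to mimic the proof of Theorem~\ref{thm:binary-rate-upper-bound} but track the alphabet size $|\Sigma|$ through the combinatorial bottleneck. By Theorem~\ref{thm:informal-isometry-implies-interleaving} (suitably generalized to $\Sigma$), any isometric embedding $\varphi:\Sigma^n\to\Sigma^N$ is an interleaved embedding: there is a permutation of the $n$ input positions, a relabeling of each input symbol by some bijection of $\Sigma$, and a sequence of fixed ``filler'' blocks $f_0, f_1, \dots, f_n\in\Sigma^*$ such that $\varphi(x)=f_0\,\sigma_1(x_{\pi(1)})\,f_1\,\sigma_2(x_{\pi(2)})\,f_2\cdots \sigma_n(x_{\pi(n)})\,f_n$. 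Since relabeling input symbols and permuting coordinates is an isometry of the Hamming cube, it suffices to lower-bound $N=n+\sum_i |f_i|$, i.e.\ to show $\sum_i |f_i| \ge \bigl(1+\tfrac{1}{8|\Sigma|}-o(1)\bigr)n$ roughly, so the total filler length is at least a $(1+\Omega(1/|\Sigma|))$ fraction of $n$.

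The next step is to extract, from isometry, strong local constraints on consecutive fillers $f_{i-1}, f_i$ around a single input symbol $\sigma_i(x_{\pi(i)})$. The key idea (as in the binary case) is to compare $\varphi(x)$ and $\varphi(y)$ where $x,y$ differ in exactly one coordinate — then their edit distance must be exactly $1$. Flipping coordinate $i$ changes only the single symbol wedged between $f_{i-1}$ and $f_i$; for the edit distance of the two global strings to be exactly $1$ (and not $0$, which would violate injectivity, nor $\ge 2$), the local neighborhood must be ``rigid'': roughly, $f_{i-1}$ cannot end with, and $f_i$ cannot begin with, the symbol placed between them, for any of the $|\Sigma|$ possible values. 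More generally, one shows that if a filler block is short relative to $|\Sigma|$, then by pigeonhole some symbol value creates a local cancellation (a run that can be shifted), which forces the edit distance between a pair of strings at Hamming distance $2$ (or some small constant) to drop below the required value — contradiction. This is where the $|\Sigma|$ dependence enters: with a larger alphabet there are more symbol values to ``dodge,'' so the adversary needs longer fillers to avoid all of them, but the bound degrades as $1/|\Sigma|$ because only a $\Theta(1/|\Sigma|)$ fraction of positions can be forced.

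Concretely, I would (i) set up the generalized interleaved normal form and reduce to bounding $\sum |f_i|$; (ii) prove a ``no short filler'' lemma: for all but $o(n)$ indices $i$, the combined length $|f_{i-1}|+|f_i|$ of the two fillers adjacent to input symbol $i$ must be at least some threshold $t(|\Sigma|)$, by exhibiting — whenever it is smaller — a pair $x,y$ with $\ham(x,y)=k$ but $\ed(\varphi(x),\varphi(y))<k$, using a pigeonhole argument over the $|\Sigma|$ possible instantiations of a few nearby input symbols to find a match/shift that shortens the alignment; (iii) aggregate: summing the lemma over $i$ and being careful about double counting ($f_i$ is adjacent to both input symbols $i$ and $i+1$) yields $2\sum_i |f_i| \gtrsim t(|\Sigma|)\cdot n$, hence $N = n + \sum_i|f_i| \ge n(1 + t(|\Sigma|)/2 - o(1))$; (iv) with the right constant $t(|\Sigma|) \approx \tfrac{1}{8|\Sigma|}$ this gives rate $\tfrac{n}{N} \le \tfrac{1}{1 + 1/(16|\Sigma|) - o(1)} \le \tfrac{1}{2} - \tfrac{1}{16|\Sigma|}$ for $n\ge n_0$; the $\tfrac{1}{2}$ baseline comes from the fact (already used in the binary case) that at minimum each input symbol must be ``insulated,'' costing on average roughly one filler symbol per input symbol even before the alphabet-dependent savings.

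The main obstacle is step (ii): getting the \emph{quantitatively right} threshold $t(|\Sigma|)$ rather than merely a positive constant. The subtlety is that isometry must be violated not just for Hamming distance $1$ but potentially for larger Hamming distances, and the adversarial filler can try to ``spread out'' the slack; one must argue that a local deficiency in filler length around position $i$ can always be amplified — by choosing $x,y$ differing in a growing but bounded number of coordinates near $i$ — into a genuine drop in global edit distance. Controlling how local misalignments compose without the filler being able to ``pay for them elsewhere'' is exactly the delicate combinatorial core; this is presumably also where the structural properties of Hamming-to-edit isometric embeddings uncovered en route to Theorem~\ref{thm:binary-rate-upper-bound} do the heavy lifting, and I would expect the larger-alphabet proof to reuse those structural lemmas essentially verbatim, with the alphabet size entering only through the pigeonhole counting.
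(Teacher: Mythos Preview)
Your outline has a genuine gap, and the paper's approach is quite different from what you sketch.

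First, the arithmetic in (iv) is incoherent: $\frac{1}{1+1/(16|\Sigma|)}$ is close to $1$, not $\frac{1}{2}$. To get rate $\le \frac{1}{2}-\frac{1}{16|\Sigma|}$ you need $N\gtrsim 2n(1+\frac{1}{8|\Sigma|})$, so total filler $\gtrsim n(1+\frac{1}{4|\Sigma|})$ and hence $t(|\Sigma|)\approx 2+\Theta(1/|\Sigma|)$, not $\frac{1}{8|\Sigma|}$. Your closing remark about the ``$\frac{1}{2}$ baseline'' shows you sense this, but the plan does not reflect it. More seriously, the local ``no short filler'' route via pairs at small Hamming distance does not generate constraints: for equal-length strings $\ed\le\ham$ always, and for $\ham\in\{1,2\}$ equality is automatic regardless of the fillers. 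To force $\ed<\ham$ you need large Hamming distance together with a global alignment exploiting structure across many positions; your step (ii) flags this as the ``main obstacle'' but supplies no mechanism, and there is no reason to believe a per-index threshold lemma of the required strength holds (fillers can be very unevenly distributed).

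The paper's proof is global and averaging-based. Using the interleaved structure, it fixes a constant $\Delta$ (depending only on the slack $\rho$), considers the shift alignments $\mathcal{A}_\delta=\{(i,i+\delta)\}$ for $0<|\delta|<\Delta$, and for each $\delta$ constructs $x_\delta,y_\delta$ with $\ham(x_\delta,y_\delta)=n$ such that every \emph{mutable} symbol of $\varphi(x_\delta)$ matches whatever $\mathcal{A}_\delta$ aligns it to (set the mutable symbols greedily in the right order). Thus only frozen symbols can incur substitutions. Picking $\delta$ uniformly at random and partitioning $[N]$ into length-$\Delta$ blocks, a pigeonhole/convexity bound on the frozen-symbol frequencies $c_{t,\sigma}$ within each block lower-bounds the expected number of frozen--frozen matches by roughly $N/(8|\Sigma|)$; this is precisely where $|\Sigma|$ enters. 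The expected alignment cost then drops below $n$ for large $n$, contradicting isometry. No individual $|f_i|$ is ever bounded.
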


\subsection{Isometric Embeddings over Larger Alphabets and Surpassing the \(\nicefrac{1}{2}\) Rate Barrier}

Having established the rate limitations for isometric embeddings over the binary alphabet, we next explore if leveraging a larger alphabet $\Sigma$ enables the construction of embeddings $\varphi:\Sigma^n\to\Sigma^N$
  with potentially improved rates. As in the binary case, we can obtain high-rate embeddings over \(\Sigma\) by using misaligners for larger alphabets. In fact, it is not hard to see that misaligners designed for the binary alphabet also function as misaligners for larger alphabets, without any loss in parameters. Therefore, increasing the alphabet size can only improve the achievable rate.

We show that even without relying on the full machinery of misaligners, one can obtain isometric embeddings with rates arbitrarily close to $\frac{1}{3}$ by making the alphabet large enough.

\begin{theorem}
\label{thm:approaching-1/3-is-possible}
    For any $\rho>0$, there exists an alphabet $\Sigma$ such that for every positive integer \(n\), there is an isometric embedding $\varphi_n:\Sigma^n\to\Sigma^N$ of the Hamming metric into the edit metric with rate at least \(\frac{1}{3}-\rho\). 
\end{theorem}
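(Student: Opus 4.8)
The plan is to build an interleaved embedding that inserts exactly two "separator" symbols between consecutive input characters, so that each input symbol contributes $3$ output symbols, giving rate $\to \tfrac{1}{3}$ (the small loss $\rho$ absorbing boundary padding and the requirement that separators come from a part of the alphabet disjoint from the data symbols). Concretely, I would fix two disjoint sub-alphabets: a "data" alphabet $\sigmain$ in which the Hamming inputs live, and a large "control" alphabet $\sigmaout\setminus\sigmain$ from which the interleaved symbols are drawn; the output alphabet is their union, and since $\rho$ is allowed to be any positive constant we may take $|\sigmaout|$ as large as we like relative to $|\sigmain|$. The embedding is $\varphi_n(x_1x_2\cdots x_n) = P\, x_1\, s_1\, x_2\, s_2\, \cdots\, x_{n-1}\, s_{n-1}\, x_n\, S$ where each $s_i$ is a fixed length-$2$ block over the control alphabet, chosen from a sequence with strong local-nonrepetition properties (a synchronization-type / locally self-matching string over the control alphabet, whose existence for sufficiently large alphabet is guaranteed by the results cited earlier), and $P,S$ are short fixed prefixes/suffixes that force any optimal alignment to respect the left and right ends.

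The heart of the argument is the usual two-sided inequality. The upper bound $\ed(\varphi(x),\varphi(y)) \le \ham(x,y)$ is immediate: substituting $\varphi(x)$ into $\varphi(y)$ coordinate-by-coordinate changes exactly the data positions where $x$ and $y$ differ, and the control blocks are identical. For the lower bound $\ed(\varphi(x),\varphi(y)) \ge \ham(x,y)$ I would take any optimal alignment (common subsequence / sequence of edits) between $\varphi(x)$ and $\varphi(y)$ and argue that it must be "essentially diagonal": because every control symbol is distinct from every data symbol, a control symbol can only be matched to an equal control symbol, and the local-nonrepetition property of the control sequence forbids matching a control block to a far-away copy of a similar block; together with the end-anchoring via $P$ and $S$, this forces the alignment to match the $i$-th data slot of $\varphi(x)$ to the $i$-th data slot of $\varphi(y)$ (up to a bounded, $\varepsilon$-controlled number of exceptions), so at least one edit is paid for each Hamming difference. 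Summing over the $n$ slots and accounting for the $O(1)$ slack from the endpoints and the $\varepsilon$-fraction of "misaligned" slots yields $\ed(\varphi(x),\varphi(y)) = \ham(x,y)$ once $n$ is large and $\varepsilon$ (hence the control alphabet) is chosen appropriately; the extra output length $|P|+|S|+2n$ over $n$ gives rate $\ge \tfrac{1}{3}-\rho$ as $n\to\infty$.

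The main obstacle I anticipate is making the "essentially diagonal" claim fully rigorous: a priori an optimal alignment between the two output strings could try to gain by shifting, deleting a control block on one side and a data symbol on the other, or by matching a control symbol to a displaced equal control symbol several blocks away. Ruling this out cleanly is exactly what the misaligner / locally-self-matching machinery is designed for in the binary case, but here I want the simpler, self-contained version: it should suffice to (i) use disjointness of data and control alphabets so cross-matches are outright impossible, and (ii) choose the control sequence $(s_i)$ so that any length-$\ell$ window of the concatenation $s_1 s_2 \cdots$ has only a trivially small self-matching, which by the earlier cited construction holds over a large enough alphabet. A careful potential/charging argument — charging each "non-diagonal" step of the alignment to a self-matching in some window and bounding the total by $\varepsilon n$ — then closes the gap. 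I would also need to double-check the constant-$2$ choice of separator length is enough to prevent a control block from being "absorbed" by an adjacent data deletion on the other string; if not, bumping the separator length to a larger constant $t$ (giving rate $\to 1/(t{+}1)$) still clears $1/3$ only for $t=2$, so this step is where the precise bookkeeping matters and is worth doing slowly.
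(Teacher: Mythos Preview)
Your proposal has a genuine gap: the theorem requires $\varphi_n:\Sigma^n\to\Sigma^N$ with the \emph{same} alphabet on both sides, but your construction takes the inputs from a proper sub-alphabet $\sigmain\subsetneq\Sigma$ and reserves $\Sigma\setminus\sigmain$ for control symbols. If $\Sigma$ is your $\sigmaout$, then the embedding must be isometric for \emph{all} $x,y\in\Sigma^n$, including inputs whose symbols land in the control sub-alphabet. In that case your key step --- ``because every control symbol is distinct from every data symbol, a control symbol can only be matched to an equal control symbol'' --- fails outright: a mutable position carrying a control-alphabet symbol can match a frozen control position, and the disjointness argument collapses. (If instead you genuinely mean $\sigmain\neq\sigmaout$, you are proving a different statement; note also that rate here is simply $n/N$, so enlarging $\sigmaout$ relative to $\sigmain$ does not cost rate in the way you suggest.)

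The paper's proof uses essentially the same interleaving scheme but dispenses with disjointness entirely. The separator string $w$ is an $\varepsilon'$-locally self-matching string over the \emph{same} alphabet $\Sigma$, and the lower bound is a pure counting argument: on any maximal nowhere-vertical interval $I$, any nowhere-vertical common subsequence of $X_{|I},Y_{|I}$ becomes, after deleting the $k$ mutable positions from each side, a nowhere-vertical self-matching of the frozen substring $s$; hence $\slcs(s,s)\ge\slcs(X_{|I},Y_{|I})-2k$, \emph{regardless} of whether mutable-to-frozen matches occurred. Combining $\slcs(s,s)<\varepsilon'|s|$ with $k\le R|I|+1$ and choosing $\varepsilon'$ so that (roughly) $\varepsilon'(1-R)<1-3R$ forces $\cost(\mathcal{A}_{|I})\ge\ham(X_{|I},Y_{|I})$. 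No anchoring prefixes/suffixes $P,S$ are needed, and the bookkeeping is exact rather than ``up to an $\varepsilon$-fraction of exceptions'' --- which matters, since isometry demands exact equality, not equality up to $O(\varepsilon n)$.
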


Theorem~\ref{thm:general-rate-upper-bound} tells us that as long as the input and output alphabets remain the same, the rate in Theorem~\ref{thm:approaching-1/3-is-possible} can never be improved to \(\frac{1}{2}\) or anything better. But what if we allow the input and output alphabets to be \textit{different}? 

In this setting, our notion of rate needs to be refined. In particular, when using the larger output alphabet, the appropriate measure to consider is not the lengths of strings but the \textit{number of bits} contained in them. Thus, for an embedding $\varphi:\Sigma_{\textnormal{in}}\to \Sigma_{\textnormal{out}}$, we define the rate as $\frac{n\log|\sigmain|}{N\log|\sigmaout|}$. With this refined definition, we show that not only can the \(\frac{1}{2}\) rate barrier be surpassed, but the rate can actually be made \textit{arbitrarily close to 1}!

\begin{theorem}
\label{thm:approaching-1-is-possible}
        For every $\rho>0$, there exist alphabets $\sigmain$, $\sigmaout$ such that for all positive integers \(n\), there is an isometric embedding $\varphi_n:\sigmain^n\to \sigmaout^{N}$ of the Hamming metric into the edit metric with rate at least \(1-\rho\).
\end{theorem}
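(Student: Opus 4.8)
The plan is to exhibit a block‑structured \emph{interleaved} embedding (as forced by the structure theorem underlying Theorem~\ref{thm:informal-isometry-implies-interleaving}) whose synchronization overhead is only a $\rho$‑fraction of the output, exploiting the freedom to let $\sigmain$ be as large as we wish. Fix $\rho>0$, choose an integer $L=L(\rho)$ large and a real $\varepsilon=\varepsilon(\rho)>0$ small compared to $\rho$, let $\Gamma$ be a fixed finite alphabet over which $\varepsilon$‑locally self‑matching strings of every length exist (cf.\ the improved bound proved above, so $|\Gamma|=O(1/\varepsilon^{2})$), and let $s$ be a large integer fixed last. Take $\sigmain=[s]$ and let $\sigmaout$ be a disjoint union of $\Gamma$ with a ``data part'' of size $\mathrm{poly}(s)$, so $\log|\sigmaout|=(1+o_s(1))\log s$. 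Partition the input into $n/L$ blocks of $L$ coordinates; the $m$‑th output block encodes $(x_{(m-1)L+1},\dots,x_{mL})$ together with $O_\varepsilon(1)$ marker symbols taken from a fixed $\varepsilon$‑locally self‑matching string over $\Gamma$, arranged (in the spirit of the misaligner framework) so that the markers pin the block to its index at every scale and the index‑tagged data symbols of a block cannot be matched to one another ``for free''. This yields $N=(1+o_L(1))\,n$, hence rate $\frac{n\log s}{N\log|\sigmaout|}=\frac{\log s}{(1+o_L(1))(1+o_s(1))\log s}$, which exceeds $1-\rho$ once $L$ and then $s$ are chosen large enough; the construction is defined for every $n$ because self‑matching strings of every length exist over $\Gamma$.

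For the analysis, the direction $\ed(\varphi_n(x),\varphi_n(y))\le\ham(x,y)$ is immediate from the ``vertical'' alignment, which matches markers to identical markers and data positions to data positions and pays one substitution exactly at each coordinate where $x$ and $y$ differ. For the reverse inequality one takes an optimal alignment, decomposes it into maximal runs of constant shift, and argues that any run with a nonzero shift is counterproductive: by the self‑matching property, a shift $\delta$ lets at most an $\varepsilon$‑fraction of the markers crossing it be matched ``for free'', so such a run pays $\Omega(1)$ substitutions per block in markers plus $\Omega(|\delta|)$ insertions/deletions at its two boundaries, while the index/disjointness structure of the data encoding bounds the number of free data matches it can possibly gain; balancing these shows that replacing every shifted run by the vertical alignment never increases the cost, so $\ed(\varphi_n(x),\varphi_n(y))=\ham(x,y)$. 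One would then reduce the general ``same input and output alphabet'' analysis of the surrounding section, or the misaligner machinery of Theorem~\ref{thm:informalmain}, to this setting essentially verbatim.

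The main obstacle is exactly this isometry proof, and more precisely calibrating the synchronization so that it defeats \emph{all} misalignments while occupying only an $o(n)$ fraction of the output: with a naively constant number of pure‑overhead markers per block, long‑range shifts (by a multiple of a block length) remain cheap, so one cannot push the marker overhead below a constant fraction this way. The decisive idea I would pursue is therefore to have the \emph{data encoding itself} carry most of the synchronization — e.g.\ letting $\psi_i$ map $x_i$ into one of several disjoint ``slices'' of $\sigmaout$, with the slice chosen according to a fixed self‑matching string, so that the large input alphabet buys enough per‑symbol distinctiveness that only a vanishing amount of explicit marker material is still needed to kill the residual bounded‑scale ambiguities. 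Making the ``shifted runs cost at least what they save'' inequality hold under this hybrid scheme — with a genuinely asymptotically vanishing overhead — is the delicate point and where I expect the bulk of the work to lie.
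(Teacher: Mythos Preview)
Your ``decisive idea'' at the end---having each data symbol carry its own synchronization label by placing $x_i$ into a slice of $\sigmaout$ chosen according to a self-matching string---is exactly the paper's construction, and once you commit to it the analysis is far shorter than you anticipate. Concretely, the paper takes $\sigmaout=\sigmain\times\Sigma'$, pads $x$ with a fixed dummy symbol every $m-1$ positions to obtain $x_{\mathrm{padded}}$ of length $N$, fixes an $\tfrac{\varepsilon}{2}$-locally self-matching string $w\in(\Sigma')^{N}$ (with $\varepsilon=1/m$), and sets $\varphi_n(x)[i]=(x_{\mathrm{padded}}[i],\,w[i])$. Taking $|\sigmain|$ large enough that $\log|\Sigma'|=O(\log\tfrac{1}{\varepsilon})$ is negligible against $\log|\sigmain|$ gives the rate, just as you compute.

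For isometry there is no need for your ``runs of constant shift'' bookkeeping; the vertical/nowhere-vertical interval decomposition already used in the proof of Theorem~\ref{thm:main} finishes it in two lines. On any nowhere-vertical interval $I$, projecting to the $\Sigma'$-coordinate shows the cost is at least $\selfed(w_{|I},w_{|I})\geq(1-\tfrac{\varepsilon}{2})|I|$. Meanwhile the periodic dummy symbols force $\ham(X_{|I},Y_{|I})\leq(1-\varepsilon)|I|+1$, and for $|I|\geq 2/\varepsilon$ the former dominates the latter; shorter intervals are trivial because $w$ has all distinct symbols there.

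The conceptual gap in your plan is the role you assign to the markers. You treat them as residual synchronization to be driven toward zero once slicing handles the long-range alignment, but their purpose is orthogonal to synchronization: they cap the Hamming distance on every interval strictly below $|I|$, so that the $(1-\tfrac{\varepsilon}{2})|I|$ bound from $w$ is actually strong enough. Without a fixed $\Theta(\varepsilon)$ fraction of padding, inputs at maximal Hamming distance would demand cost $|I|$ on a nowhere-vertical interval, which the self-matching bound alone cannot deliver; so the padding fraction must be tied to the self-matching parameter, not made $o(1)$ independently.
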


We provide a proof of Theorem~\ref{thm:approaching-1-is-possible} in Section~\ref{sec:close-to-1} while Theorem~\ref{thm:approaching-1/3-is-possible} is proven in Section~\ref{sec:close-to-1/3}.

\subsection{Related works}
In this subsection, we survey three lines of related works.

\paragraph{Embedding Hamming Metric into Edit Metric.} As mentioned earlier, embeddings of the Hamming metric into the edit metric have primarily been studied to establish hardness results for problems under the edit distance. The first known instance of this, to the best of our knowledge, is due to Rubinstein~\cite{Rubinstein18}, who gave a near-isometric embedding for binary strings with rate \(\Theta(1/\log n)\) and used it to prove the hardness of the approximate closest pair problem under edit distance. In a subsequent work, Cohen-Addad \textit{et al.}~\cite{cohen2019lower} presented a fully isometric embedding (in contrast to Rubinstein's near-isometry), albeit at the cost of a larger output alphabet. Their construction achieved a rate of \(\Theta(1/(\log n \log \log n))\). The folklore isometric embedding referenced in Section~\ref{sec:quest-for-constant} explicitly appears in~\cite{ABCSS23}.

\paragraph{Embedding Edit metric into Hamming metric.} Embeddings in the other direction, i.e., from the edit metric to the Hamming metric, are abundant in the literature. A key motivation for embedding the edit metric into the Hamming metric is computational: Hamming distance can be computed efficiently in linear time, whereas edit distance cannot be solved in sub-quadratic time unless the Strong Exponential Time Hypothesis is false~\cite{BaIn18}.  This underscores the need for approximation algorithms that can run in subquadratic time. One potential approach is to develop an embedding with low distortion, where the embedding process itself can be performed in subquadratic time.

The seminal work by Ostrovsky and Rabani~\cite{OsRa07} shows an embedding of the edit metric into the Hamming metric with distortion $2^{O(\sqrt{\log n \log \log n})}$. The rate achieved is  $1/\log n$, and the embedding can be computed in polynomial time. There has been extensive research aimed at establishing lower bounds for the distortion of any such embedding, with the best-known lower bound being $O(\log n)$~\cite{KrRa09}.

An alternative approach to potentially improve the results above is through randomized embeddings. In this approach, the embedding function takes both an input string and a random string, and we measure the distortion when two strings are mapped using the same random string sequence. Within this framework, Johwari~\cite{Jowhari12} showed the existence of $O\left(\log n \log^* n\right)$-distortion. Subsequently, ~\cite{CGK16}  were the first to eliminate the dependence of the dimension in the distortion, achieving a quadratic distortion, i.e., if the edit distance between the input strings is at most $k$, then the Hamming distance between the embedded strings is bounded by $O(k^2)$.

\paragraph{Synchronization strings.} \begin{sloppypar}Synchronization strings were introduced by Haeupler and Shahrasbi~\cite{haeupler2017synchronization} in the context of codes able to tolerate insertion and deletion errors. In our context, the original application of synchronization strings can be viewed as embedding the Hamming metric near isometrically into the edit metric by increasing the alphabet size by a constant factor. Throughout the years, these strings have found uses in many settings, including interactive coding~\cite{haeupler2017synchronization2}, locally decodable insertion-deletion codes~\cite{haeupler2017synchronization3}, and list decodable insertion-deletion codes~\cite{haeupler2018synchronization4}. For a survey of the many uses of synchronization strings and how they are constructed, the reader is referred to~\cite{sync-survey}.  \end{sloppypar}

\subsection{Organization of Paper}
In Section~\ref{sec:overview}, we provide an overview of the proofs for our main results. In Section~\ref{sec:prelim}, we introduce some notations and definitions of relevance to this paper. In Section~\ref{sec:misaligner}, we formally define misaligners -- the key ingredient to our high-rate isometric embeddings. In Section~\ref{sec:locally-self-matching}, we introduce locally self-matching strings and give improved bounds on alphabet size for such strings. In Sections~\ref{sec:embedding} and~\ref{sec:analysis}, we show how to combine misaligners and locally self-matching strings to obtain high-rate isometric embeddings for binary strings. In Section~\ref{sec:explicit-const}, we describe how to find a misaligner using computer search and give an explicit rate \(\frac{1}{8}\) isometric embedding for binary strings, proving Theorem~\ref{thm:explicit}. In Section~\ref{sec:close-to-1/3}, we consider embeddings over larger alphabets and prove Theorem~\ref{thm:approaching-1/3-is-possible}. Then Section~\ref{sec:isometry-implies-interleaving} is dedicated to defining interleaved embeddings and proving Theorem~\ref{thm:informal-isometry-implies-interleaving}. In Section~\ref{sec:UpperBoundsOnTheRate}, we derive rate upper bounds for isometric embeddings, proving Theorem~\ref{thm:general-rate-upper-bound}. Finally, in Section~\ref{sec:close-to-1}, we show how to surpass the \(\frac{1}{2}\) rate barrier by considering different-sized input and output alphabets and prove Theorem~\ref{thm:approaching-1-is-possible}.

\section{Proof Overview}\label{sec:overview}

In Section~\ref{sec:sync-proof-overview}, we provide a proof overview of Theorem~\ref{thm:const}, which is our main conceptual contribution. We do not provide a formal proof of Theorem~\ref{thm:const} as it is subsumed by Theorem~\ref{thm:informalmain} and Theorem~\ref{thm:explicit}. In Section~\ref{sec:overview-2}, we provide a proof overview of  Theorem~\ref{thm:informalmain}, which is one of our main technical contributions. Furthermore, in Section~\ref{sec:overview-3}, we summarize the key ideas behind our proof of Theorem~\ref{thm:informal-isometry-implies-interleaving}, while in Section~\ref{sec:overview-4}, we provide a proof sketch of Theorem~\ref{thm:general-rate-upper-bound}. 

\subsection{Constant Rate Embeddings via Synchronization Strings}
\label{sec:sync-proof-overview}

Our first observation is that if we are allowed an unbounded output alphabet, then there is a straightforward isometric embedding that takes any input string and interleaves it with a string with all distinct symbols. This embedding is clearly isometric since any misalignment of these new symbols results in mismatches, increasing the edit cost beyond that of the simple identity alignment, which equals the Hamming distance between the input strings. Moreover, the length of the output strings produced by this embedding is only twice that of the input strings.

However, our goal is to have a binary output alphabet, not an unbounded one. So, as an intermediate step, let us limit ourselves to a large but constant-sized output alphabet. Now, if we want to mimic our earlier embedding, we would need to find a string over a \textit{constant-sized} alphabet that \textit{approximates} the string with all distinct symbols. It turns out that objects known as \textit{synchronization strings}~\cite{haeupler2017synchronization} have the exact property that we want. Synchronization strings have many equivalent definitions. The one that is useful for our discussion is the following --- a string is said to be an $\varepsilon$-synchronization string if, for every substring, its relative self-edit distance\footnote{The self-edit distance of a string is the cost of the cheapest alignment converting the string to itself without matching any character to itself.} is at least $1-\varepsilon$.
For every value of $\varepsilon$, one can efficiently construct $\varepsilon$-synchronization strings  over an alphabet of size  $\Theta(1/\varepsilon^2)$~\cite{cheng2018synchronization}.  We now illustrate how these strings can be used to construct constant-rate isometric embeddings over a large but fixed-size output alphabet.

To embed a string of length $n$, consider a $\nicefrac{2}{3}$-synchronization string of length $3n$ over an alphabet $\Sigma$ that is disjoint from the input alphabet $\{0, 1\}$. The embedding process, applied to an input string of length $n$, involves inserting $3$ symbols from the synchronization string after every input symbol. It is evident that both the rate and the output alphabet size remain constant.

We now argue that this gives an isometric embedding. Suppose it does not; then there must be a pair of strings such that the edit distance between their embeddings is strictly smaller than their Hamming distance. Consider an optimal edit alignment of the embedded strings. It can be shown that this alignment partitions the strings into intervals, where for each interval in the partition, either the entire interval is mapped using the identity mapping, i.e., ``vertical'', or it is self-aligned such that none of its characters are mapped to themselves, i.e., ``nowhere-vertical''. See Figure \ref{fig:alignment} for an illustration.
\begin{figure}[ht]
    \centering
    \includegraphics[width = 0.7\textwidth]{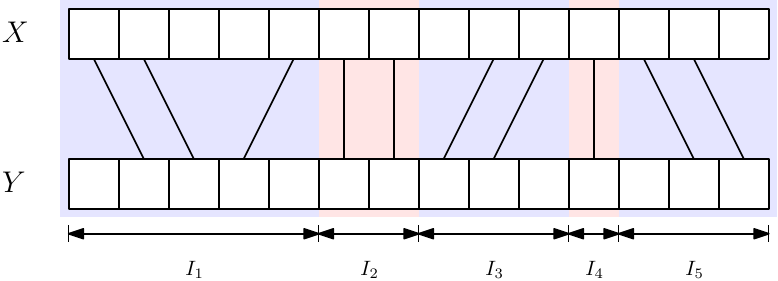}
    \caption{An optimal alignment converting the embedded string $X$ to the embedded string $Y$. Note the alternating maximal nowhere-vertical and vertical intervals (highlighted blue and red, respectively) $I_1, I_2, I_3, I_4$ and $I_5$.}
    \label{fig:alignment}
\end{figure}

Now, if the edit distance is indeed strictly less than the Hamming distance, then there is at least one interval, say of length $\ell$, in this partition where none of its characters are mapped to themselves, but the edit cost in that interval is strictly less than the Hamming cost, which is at most $\nicefrac{\ell}{4}$. We then show that this implies the existence of a substring in the synchronization string with a relative self-edit distance less than $\nicefrac{1}{3}$,  which is a contradiction.  This substring can be found by considering the aforementioned interval and deleting all the characters that come from the input strings. This results in a substring with length $\ell' := \nicefrac{3\ell}{4}$. One can also find a self-alignment of this substring by taking the original alignment, keeping it unchanged except for positions where some character in the substring was matched to some character in one of the input strings. In those positions, we simply apply a deletion. It is not hard to see that the cost of this alignment remains unchanged and is thus less than $\nicefrac{\ell}{4} = \nicefrac{\ell'}{3}$, which contradicts the fact that the string used to pad is a $\nicefrac{2}{3}$-synchronization string. In Figure~\ref{fig:selfLCS}, the red characters are part of some $\varepsilon$-synchronization string, and the green ones are part of some input string.

\begin{figure}[ht]
    \centering
    \includegraphics[width = 0.6\textwidth]{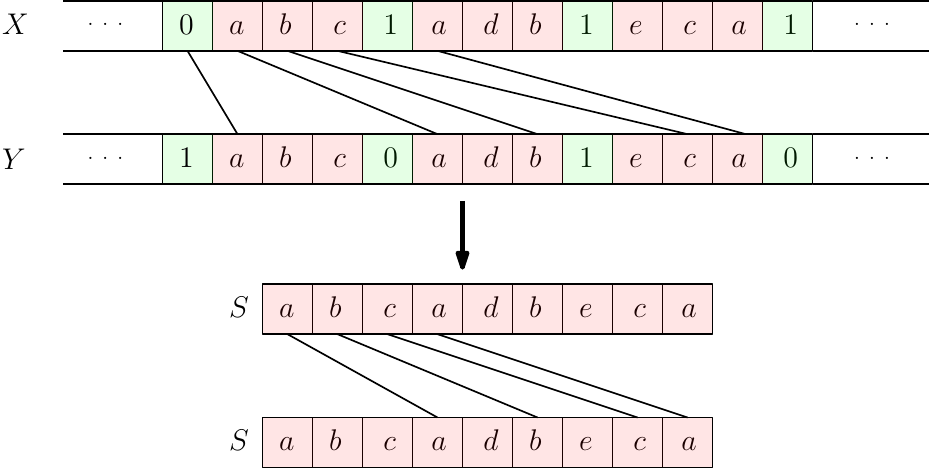}
    \caption{A nowhere-vertical alignment on some interval in the embedded strings $X$ and $Y$ implies a self-alignment of a substring $S$ of the $\varepsilon$-synchronization string with the same cost.}
    \label{fig:selfLCS}
\end{figure}

One can then bring the output alphabet size down to two using the following alphabet reduction procedure --- we replace each character of the synchronization string with its binary encoding (of length $\log |\Sigma|$) along with an opening and closing block of $0$s and $1$s with lengths equal to that of the binary encoding.  The final resulting rate is still constant. However, the large alphabet sizes required to construct $\varepsilon$-synchronization strings mean that the resulting rate is far from optimal. With the specific setting of $\varepsilon = \nicefrac{2}{3}$ and known bounds for alphabet sizes for synchronization strings, one gets a rate of about $\nicefrac{1}{153}$ from this.

\subsection{Optimizing the Rate: Enter Misaligners!}
\label{sec:overview-2}
To improve the rate of our embedding, we move away from the framework of interleaving a synchronization string over a larger alphabet and then performing an alphabet reduction. Instead, we utilize a tailored construction that gives us a much larger rate.

The key ingredient in our construction is an object called a \textit{misaligner}, which the reader should think of as a code in the edit metric with robust distance guarantees. A misaligner consists of a set of fixed-length codewords over the alphabet $\{0, 1, \star\}$, where $\star$ is a special wildcard symbol. Each codeword contains a wildcard symbol at every $t^{\text{th}}$ position, where $t$ is a parameter that controls the rate of the embedding. Thus, if one concatenates a sequence of codewords from the misaligner, one obtains a long binary string with wildcard symbols at every $t^{\text{th}}$ position. To embed some input string, one simply substitutes these wildcard symbols with symbols from the input string. The rate of this embedding is $\nicefrac{1}{t}$ by definition.

To fully describe the embedding, one also needs to specify the \textit{order} in which codewords from the misaligner are concatenated. To do this, we make use of a variant of synchronization strings, which we call \textit{locally self-matching strings}. Given $\varepsilon \in (0, 1)$, an $\varepsilon$-locally self-matching string is one where every substring of length \(\ell\) has at most $\varepsilon\ell$ non-crossing symbol matches to itself if one is not allowed to match symbols vertically.\footnote{Alternatively, every substring has ``nowhere-vertical'' relative LCS at most \(\varepsilon\); see Section~\ref{sec:locally-self-matching} for a formal definition.} Given both these ingredients, we can now describe our embedding. To embed an input string, we first start with a sufficiently long locally self-matching string over some large alphabet. Next, we replace each symbol of this string with a codeword from the misaligner. Note that in order to do this, the number of codewords in the misaligner must be at least as large as the alphabet of the locally self-matching string. Finally, we substitute the wildcard symbols in this string with symbols of the input string to obtain the embedded string. Our goal is to show that by choosing $t$ sufficiently large, the properties of the locally self-matching string and the misaligner guarantee that the resulting construction is an isometric embedding.

For concreteness, in the remainder of this section, let us assume that a 0.1-locally self-matching string $s$ exists over some large alphabet $\Sigma$. We also assume that a misaligner $\mathcal{C}$ with $|\Sigma|$ many codewords exists. The other properties of this misaligner will be specified later. We will show how the properties of $s$ and $\mathcal{C}$ will guarantee isometry.

Let us call a codeword with its wildcard symbols instantiated a \textit{block}. Assume for the sake of contradiction that the embedding we just described is not an isometric embedding. As before, this implies the existence of an interval with an edit cost that is strictly lower than its Hamming cost. The first guarantee that the misaligner provides is that this interval must contain at least two full blocks. This is ensured by enforcing the codewords of the misaligner to have the following property --- for any substring arising from the concatenation of multiple codewords and not containing two full blocks must have the edit distance equal to the Hamming distance, no matter how the wildcard symbols are instantiated. See Figure~\ref{fig:shortIntervals} for an illustration of this property.
\begin{figure}[ht]
    \centering
    \includegraphics[width = 0.6\textwidth]{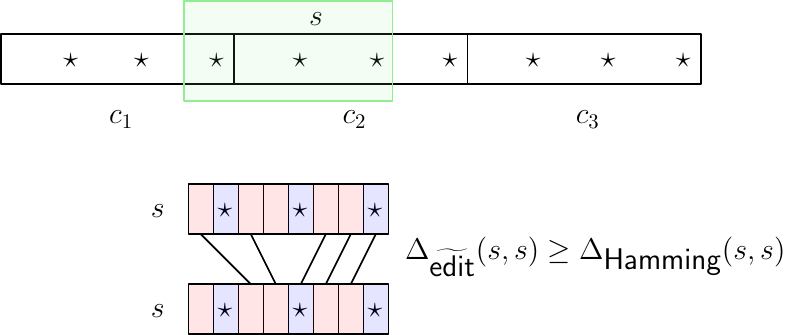}
    \caption{For short intervals, the misaligner guarantees isometry: any nowhere-vertical edit alignment must pay at least as much as the Hamming distance, no matter how the wildcards are instantiated.}
    \label{fig:shortIntervals}
\end{figure}

For larger intervals (i.e., intervals containing at least two full blocks), we assume for simplicity that the interval consists solely of a sequence of full blocks with no partial blocks at the boundaries. The analysis then goes on to assess the cost of the optimal alignment by decomposing it into contributions from individual blocks. Our objective is to demonstrate that, for most blocks, there is a substantial cost, which, for the purposes of this discussion, is 0.2 times the length of a block. 

Our first observation concerns any pair of blocks within the specified interval that are instantiations of the same codeword but appear at different positions in the embedded strings. It can be shown (Obervation~\ref{obs:bad-blocks}) that we can assume that in our optimal alignment, exactly one of the following cases occurs: either all of the characters of one of the blocks are matched ``vertically'' to all of the characters in the other block\footnote{By ``vertically'', we mean the $i^{\text{th}}$ character of the first block is matched against the $i^{\text{th}}$ character of the second.} (in which case, the first block is referred to as ``bad''; see Figure~\ref{fig:badBlocks}), or none of the characters are matched ``vertically''. From the set of bad blocks and their matches, one can recover a sequence of nowhere-vertical matches in the original locally self-matching string $s$. Since $s$ was a 0.1-locally self-matching string, one can thus conclude that the fraction of bad blocks is at most 0.1.

\begin{figure}[ht]
    \centering
    \includegraphics[width = 0.6\textwidth]{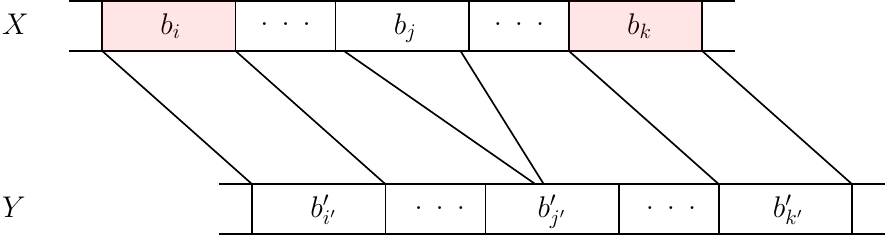}
    \caption{The blocks $b_i$ and $b'_{i'}$ as well as $b_j$ and $b'_{j'}$ and $b_k$ and $b'_{k'}$ are instantiations of the same codeword and appear in different positions in $X$ and $Y$. The bad blocks --- those where all characters are matched vertically ---  are highlighted in red.}
    \label{fig:badBlocks}
\end{figure}


Next, let us analyze the remaining blocks. The optimal alignment transforms each of these blocks into a substring coming from the concatenation of multiple blocks. If the size of this substring exceeds $0.2$ times the block size, then the cost will also exceed $0.2$. For the remaining blocks, the misaligner must ensure that the edit cost of each block remains substantial. This can be accomplished by satisfying the following requirements:

Consider a block $b$ and take \textit {any} substring $s$ of roughly the same size, composed of partial blocks.  If none of these blocks is equal to $b$, we require that the relative edit cost between $b$ and $s$ is at least $0.2$. If one of these blocks is equal to $b$, we require that the edit cost of alignments with no vertical edges between $b$ and its counterpart in $s$ is at least $0.2$. See Figure~\ref{fig:blockVSsubstring} for an illustration.

\begin{figure}[ht]
    \centering
    \includegraphics[width = 0.7\textwidth]{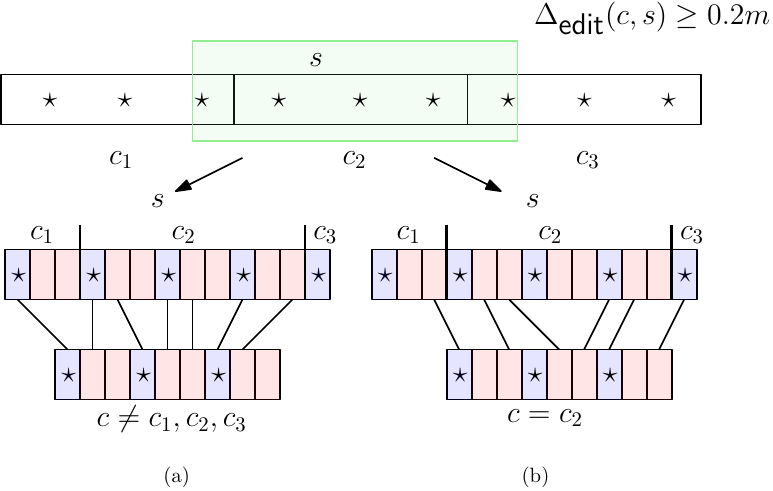}
    \caption{The alignment transforms every block $c$ into a substring $s$ arising from the concatenation of multiple blocks. If $c$ and $s$ are roughly the same size, the misaligner guarantees the relative edit distance between $c$ and $s$ is at least 0.2. Observe that in case (b), the  alignment has no vertical edges between $b$
    and its counterpart in $s$.}
    \label{fig:blockVSsubstring}
\end{figure}

Since bad blocks cost zero, the total cost is given by the sum of costs incurred by the non-bad blocks. Thus, by the properties of the misaligner and the locally self-matching string, the relative cost is lower bounded by $(1-0.1)\times0.2=0.18$. Therefore, if we choose $t=6$, we arrive at a contradiction, as the cost would then exceed the number of wildcards, which is an upper bound on the Hamming distance in the interval between any two strings. Consequently, we can achieve a  $\nicefrac{1}{6}$-rate isometric embedding. The case where the interval may contain partial blocks introduces additional requirements, which are detailed in Section~\ref{sec:misaligner}.

\subsection{Unveiling the Interleaved Structure of Isometric Embeddings}
\label{sec:overview-3}

In this section, we outline the key ideas behind our proof of Theorem~\ref{thm:informal-isometry-implies-interleaving}, which states that every isometric embedding of the Hamming metric into the edit metric is necessarily an interleaved embedding. The formal definition of interleaved embeddings is deferred to Section~\ref{sec:isometry-implies-interleaving}. Here, we instead fix an arbitrary isometric embedding \(\varphi:\{0, 1\}^n\to \{0, 1\}^N\) and make a series of observations that reveal certain constraints on \(\varphi\), ultimately uncovering its interleaved nature.

We start by fixing an input string \(x\in \{0, 1\}^n\) and ask what happens to \(\varphi(x)\) as we flip a single bit in \(x\). Since \(\varphi\) is isometric, flipping a single bit in \(x\) must correspond to exactly one bit flip in \(\varphi(x)\). This allows us to define a function \(\eta_x:[n]\to[N]\), \(\eta_x(i)\) is the unique index in \(\varphi(x)\) that is flipped if the \(i^{\text{th}}\) bit of \(x\) is flipped. 

Next, we show that the function \(\eta_x\) is \textit{injective}, meaning that different bit flips in \(x\) affect different positions in the output \(\varphi(x)\). This is not hard to see --- if two different bit-flips in \(x\) affected the same position in \(\varphi(x)\), then we would obtain a pair of strings at Hamming distance 2 whose images under \(\varphi\) have edit distance 0, contradicting the fact that \(\varphi\) is isometric.

A crucial step in the proof is showing that \(\eta_x(i)\) does not depend on \(x\). That is, once \(i\) is fixed, flipping the \(i^{\text{th}}\) bit in any input string always affects the same location in the output. This means the function \(\eta_x\) is the same for all inputs and can simply be written as \(\eta\). This can be proven by fixing a pair of input strings \(x, y\) and an index \(i\) and inducting on the Hamming distance between \(x\) and \(y\). This key step is presented in detail in Section~\ref{sec:isometry-implies-interleaving}.

Finally, we show that \(\varphi(x)[\eta(i)]\) depends only on \(x[i]\), meaning that each input bit is mapped consistently to the same output position, either preserving or flipping its value. With this, the interleaved nature of \(\varphi\) is revealed: the output positions fall into two categories --- those that directly correspond to input bits (determined by \(\eta\)) and the remaining positions that are independent of the input. For the full details, we refer to Section~\ref{sec:isometry-implies-interleaving}.

\subsection{Upper Bounds on the Rate}

\label{sec:overview-4}

In this section, we outline the proof of Theorem~\ref{thm:general-rate-upper-bound}, which establishes that any isometric embedding of the Hamming metric into the edit metric must have a rate strictly bounded away from \(\frac{1}{2}\). More precisely, we aim to show that for any alphabet \(\Sigma\), there exists some \(\varepsilon>0\) such that for sufficiently large \(n\), no isometric embedding \(\varphi:\Sigma^n\to \Sigma^N\) can achieve a rate exceeding \(\frac{1}{2}-\varepsilon\).

Our approach is by contradiction: we assume that for all \(n\), there exists an isometric embedding \(\varphi\) with a rate exceeding the claimed bound. The goal then is to show that as \(n\) grows, \(\varphi\) cannot remain isometric. Specifically, we seek a pair of strings such that the edit distance between their embeddings is strictly smaller than their Hamming distance, contradicting isometry. Rather than constructing such a pair explicitly, we will argue its existence via the probabilistic method.

The key idea is to consider a collection \(\mathcal{C}\) of carefully chosen triples \((x, y, \mathcal{A})\), where \(x, y\in \Sigma^n\) and \(\mathcal{A}\) is an edit distance alignment between \(\varphi(x)\) and \(\varphi(y)\). We then select a random triple from \(\mathcal{C}\) and show that the \emph{expected} cost of \(\mathcal{A}\) with respect to the embeddings \(\varphi(x)\) and \(\varphi(y)\) is too low, ensuring the existence of a triple in \(\mathcal{C}\) that violates the isometry of \(\varphi\).

\paragraph{Description of \(\mathcal{C}\).} The alignments in \(\mathcal{C}\) are simple ``shift'' alignments, where each symbol in one string is matched with a symbol in the other string shifted by a fixed amount. More formally, for a non-zero integer \(\delta\), we define \(\mathcal{A}_\delta\) as the alignment where each symbol in the first string is paired with a symbol in the second string shifted by \(\delta\) positions. We consider alignments of the form \(\mathcal{A}_\delta\) where \(|\delta|\) is bounded by a constant that depends on the rate we seek to rule out, in particular, \(\varepsilon\), but is independent of \(n\).

Since \(\varphi\) is isometric, by the generalization of Theorem~\ref{thm:informal-isometry-implies-interleaving} to larger alphabets (i.e., Theorem~\ref{thm:isometry-implies-generalized-interleaving}), it must be an interleaved embedding. This implies that some symbols in \(\varphi(x)\) are ``frozen'' (independent of \(x\)), while the remaining symbols are ``mutable'' (determined by \(x\)). Using this observation, for each alignment \(\mathcal{A}_\delta\), we construct a pair of strings \(x_\delta, y_\delta\in \Sigma^n\) satisfying the following two properties.

\begin{itemize}
    \item The Hamming distance between \(x_\delta\) and \(y_\delta\) is \(n\).
    \item No mutable symbol in \(\varphi(x_\delta)\) is substituted under \(\mathcal{A}_\delta\); they are either matched or deleted.
\end{itemize}

Section~\ref{sec:UpperBoundsOnTheRate} details how to construct such pairs given \(\mathcal{A}_\delta\). The final collection \(\mathcal{C}\) consists of all triples \((x_\delta, y_\delta, \mathcal{A}_\delta)\) for \(\delta\) in a constant-sized set of shifts.

\paragraph{Bounding the Expected Alignment Cost.} We now pick a random \((x_\delta, y_\delta, \mathcal{A}_\delta)\) from \(\mathcal{C}\) and analyze the expected cost of \(\mathcal{A}_\delta\) with respect to \(\varphi(x_\delta)\) and \(\varphi(y_\delta)\). Since \(\mathcal{A}_\delta\) is a shift alignment, its cost due to insertions and deletions is at most a constant. Moreover, by our choice of \(x_\delta\) and \(y_\delta\), there are no substitutions involving mutable symbols --- only frozen symbols contribute to substitutions. Now if the rate of \(\varphi\) is too high, the number of frozen symbols must be small, which in turn means the expected number of substitutions is also small. We show that for sufficiently large \(n\), the expected cost of \(\mathcal{A}_\delta\) falls below \(n\), leading to a contradiction. For the detailed calculations along with choices for the maximum value of \(\delta\), the reader is referred to Section~\ref{sec:UpperBoundsOnTheRate}.

\section{Preliminaries}\label{sec:prelim}

\paragraph{Intervals and Strings.}
For any positive integer $n$, we define $[n]:= \{1, 2, \ldots , n\}$. For positive integers $i, j$, we define the interval $[i, j] := \{i, i+1, \ldots , j\}$. We define a partial order on the set of all intervals in the following natural way --- given intervals $I=[i, j]$ and $I'=[i', j']$, we say $I< I'$ if and only if $j < i'$.

A string $x = x[1]x[2]\cdots x[n]$ over alphabet $\Sigma$ is a sequence of $n$ symbols from $\Sigma$. We denote by $|x|$ the length of the string $x$. Given a string $x$ and an interval $I = [i, j]\subseteq [|x|]$, we denote by $x_{|I}$ the substring $x[i]x[i+1]\cdots x[j-1]x[j]$. We further extend this notation to arbitrary sets of indices and not just intervals. Given any subset $S=\{i_1, i_2, \ldots , i_k\}$ with $1\leq i_1 < i_2 < \cdots <i_k \leq |x|$, we denote by $x_{|S}$ the string $x[i_1]x[i_2]\cdots x[i_k]$. We denote by $x^{\mathcal{R}}$ the reverse of the string $x$.

\paragraph{Edit distance alignments.}
Let $x, y\in \{0, 1\}^*$ be two strings of length $n$ and $m$, respectively, such that $n\geq m$. An edit distance alignment between $x$ and $y$ is a set $\mathcal{A}=\{(i_1, j_1), (i_2, j_2), \ldots , (i_k, j_k)\}$ with $0\leq k \leq m$ such that $1\leq i_1 < i_2 < \cdots < i_k \leq n$ and $1\leq j_1 < j_2 < \ldots < j_k \leq m$. If $(i, j) \in \mathcal{A}$, then we say $\mathcal{A}$ aligns $x[i]$ to $y[j]$. Given an edit distance alignment $\mathcal{A}$ between $x$ and $y$, we define the following three sets $S_x^y({\mathcal{A}}), D_x^y({\mathcal{A}})$ and $I_x^y({\mathcal{A}})$ as follows.
\[S_x^y({\mathcal{A}}):= \{(i, j)\in \mathcal{A} : x[i]\neq y[j]\}\]
\[D_x^y({\mathcal{A}}):=\{i\in [n] : \nexists j \in [m] \textnormal{ such that } (i, j) \in \mathcal{A}\}\]
\[I_x^y({\mathcal{A}}) := \{j\in [m] : \nexists i \in [n] \textnormal{ such that } (i, j) \in \mathcal{A}\}\]
We call the sets $S_x^y({\mathcal{A}}), D_x^y({\mathcal{A}})$ and $I_x^y({\mathcal{A}})$ the set of substitutions, deletions, and insertions, respectively, associated with $\mathcal{A}$. Additionally, we will call \(\mathcal{A}\setminus S_x^y(\mathcal{A})\) the set of matches associated with \(\mathcal{A}\). The cost of an edit distance alignment $\mathcal{A}$ (with respect to the strings $x$ and $y$), denoted by $\cost_x^y(\mathcal{A})$, is defined as ---
\[\cost_x^y(\mathcal{A}) := |S_x^y({\mathcal{A}})| + |D_x^y({\mathcal{A}})| + |I_x^y({\mathcal{A}})|\]
If the strings $x$ and $y$ are clear from context, we will often drop the subscript and the superscript, and write $\cost(\mathcal{A})$ instead.

The edit distance between $x$ and $y$, denoted by $\ed(x, y)$ is defined as the cost of the cheapest edit distance alignment between $x$ and $y$, i.e., $\ed(x, y) = \min_{\mathcal{A}}\cost(\mathcal{A})$. An alignment $\mathcal{A}$ is called \textit{nowhere-vertical} if for every $i\in [n]$, $(i, i)\notin \mathcal{A}$. We will often use the notion of the \textit{nowhere-vertical edit distance} of $x, y$, denoted by $\selfed(x, y)$, which is defined to be the cost of the cheapest nowhere-vertical edit distance alignment between $x$ and $y$. Note that for every string $x$, $\selfed(x, x)$ is non-zero --- in fact, it is at least two. 

We call an edit distance alignment $\mathcal{A}$ between two string $x$ and $y$ a \textit{common subsequence alignment} if $S_x^y(\mathcal{A}) = \emptyset$, i.e., the alignment induces no substitutions. The length of the longest common subsequence between $x$ and $y$, denoted by $\lcs(x, y)$, is given by $\max_{\mathcal{A}} (|x|+|y|-\cost(\mathcal{A}))$ as $\mathcal{A}$ ranges over all common subsequence alignments of $x$ and $y$. We can analogously define the length of the \textit{nowhere-vertical longest common subsequence} between $x$ and $y$, denoted by $\slcs(x, y)$, to be the maximum value of $(|x|+|y|-\cost(\mathcal{A}))$ as $\mathcal{A}$ ranges over all nowhere-vertical common subsequence alignments of $x$ and $y$.

\paragraph{Binary strings with wildcard symbols.} A \textit{binary string with wildcards} is a string over the alphabet $\{0, 1, \star\}$. Let $w$ be a binary string with wildcards. We denote by $\Gamma(w)$ the number of wildcard symbols $\star$ in $w$. For a binary string $x\in \{0, 1\}^{\Gamma(w)}$, the \textit{instantiation of $w$ by $x$}, denoted by $w_x$, is the string obtained by replacing, for each $i\in [\Gamma(w)]$, the $i^{\textnormal{th}}$ occurrence of $\star$ in $w$ by the $i^\textnormal{th}$ symbol of $x$. The edit distance between two binary strings $w$ and $w'$ with wildcards, denoted by $\ed(w, w')$ by overloading the notation for edit distance between strings without wildcards, is the minimum edit distance between any two instantiations of them. More precisely, we define ---
\[\ed(w, w') := \min_{\substack{{x \in \{0, 1\}^{\Gamma(w)}}\\{x' \in \{0, 1\}^{\Gamma(w')}}}} \ed\left(w_x, w'_{x'}\right)\]
Similarly, the nowhere-vertical edit distance between two binary strings $w$ and $w'$ with wildcards, $\selfed(w, w')$, is defined as ---
\[\selfed(w, w') := \min_{\substack{{x \in \{0, 1\}^{\Gamma(w)}}\\{x' \in \{0, 1\}^{\Gamma(w')}}}} \selfed\left(w_x, w'_{x'}\right)\]

\section{Misaligners}\label{sec:misaligner}

In this section, we formally define misaligners, which can be thought of as codes in the edit metric with robust distance guarantees. A misaligner is robust in two distinct senses. First, all codewords in a misaligner contain wildcard symbols, and no matter how these wildcard symbols are instantiated, the misaligner ensures large distance between codewords. The second guarantee is even stronger: not only is the distance between pairs of codewords large; even if one compares a codeword against a concatenation of multiple codewords, one observes large distance. This protects against scenarios such as the concatenation the suffix and prefix of two codewords being close to some other codeword.  

\begin{defi}[Misaligners]\label{def:misaligners}
    Given positive integers $m, k, t$ such that $t$ divides $m$ and $\alpha \in (0, \frac{1}{2}]$, an $(m, k, t, \alpha)$-misaligner is a set $\mathcal{C} \subseteq \{0, 1, \star\}^m$ with $|\mathcal{C}| = k$  such that the following properties hold.
    \begin{enumerate}
        
        \item \label{prop:rate-of-wildcards}\textbf{Rate of Wildcards:} For every $c\in \mathcal{C}$, the $i^{\textnormal{th}}$ symbol of $c$ is $\star$ if and only if $i\equiv 1 \pmod t$, i.e., every $t^{\textnormal{th}}$ symbol (starting with the first one) in $c$ is $\star$.

        \item \label{prop:short-intervals} \textbf{Short Intervals:} For every distinct    $c_{1}, c_{2}, c_{3}\in \mathcal{C}$, and every integer $\ell \leq 3m-2$, if $s$ is a substring of $c_{1}\circ c_{2}\circ c_{3}$ of length $\ell$, $\ed(s_x, s_y) = \ham(s_x, s_y)$ for all $x, y\in \{0, 1\}^{\Gamma(s)}$.
         
        \item \label{prop: full-block-distance} \textbf{Block vs. Substring:} For every distinct $c_{1}, c_{2}, c_{3}\in \mathcal{C}$, if $s$ is a substring of $c_{1}\circ c_{2}\circ c_{3}$, then the following must hold.
        \begin{enumerate}
            \item \label{prop: full-block-distance-1} For every $c\in \mathcal{C}$ that is equal to none of $c_{1}, c_{2}, c_{3}$, $\ed(c, s)\geq \alpha m$.
            \item \label{prop: full-block-distance-2} For every $c\in \mathcal{C}$ that is equal to exactly one of $c_{1}, c_{2}, c_{3}$, then any edit distance alignment between $c$ and $s$ where $c$ and its copy $c'\in \{c_1, c_2, c_3\}$ is nowhere-vertically aligned\footnote{i.e., the alignment is not allowed to align $c[i]$ with $c'[i]$ for any $i\in [m]$.} has cost at least $\alpha m$.
        \end{enumerate}

        \item \label{prop: boundary-blocks} \textbf{Block and a Half vs. Substring:} For every distinct $c_{1}, c_{2}, c_{3} \in \mathcal{C}$, the following must hold.
        \begin{enumerate}
            \item \label{prop: boundary-blocks-1} Let $s$ be a string  formed by concatenating a proper suffix  of $c_{1}$ with $c_{2}$. 
            Let $s'$
            be a string obtained by concatenating $s$ with a prefix of $c_3$.  Then $\selfed(s, s') \geq \alpha\cdot(|s|)$, i.e., any alignment where $s$ and its copy in $s'$ is nowhere-vertically aligned has cost at least $\alpha|s|$. 
            \item \label{prop: boundary-blocks-2}
            Let $s$ be a string formed by concatenating $c_{2}$ with a proper prefix of $c_{3}$. 
            Let $s'$
            be a string obtained by concatenating a suffix of $c_1$ with $s$.  Then $\selfed(s^{\mathcal{R}}, s'^{\mathcal{R}})\geq \alpha\cdot(|s|)$, i.e., any alignment where $s$ and its copy in $s'$ is nowhere-vertically aligned has cost at least $\alpha|s|$.
        \end{enumerate}
    \end{enumerate}
\end{defi}

See Figure~\ref{fig:properties} for an illustration of all the properties of a misaligner.
\begin{figure}
     \centering
     \begin{subfigure}[b]{0.35\textwidth}
         \centering
         \includegraphics[width=\textwidth]{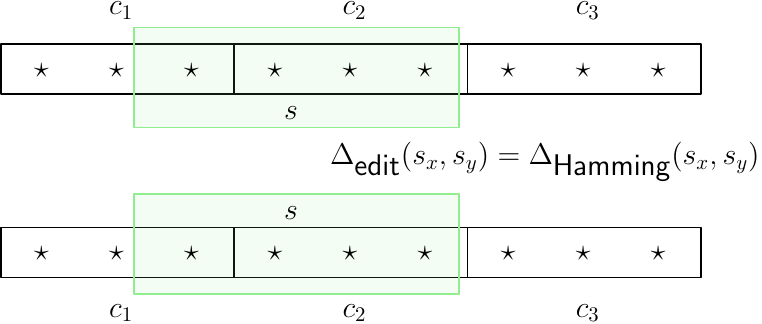}
         \caption{\textbf{Short Intervals:} For intervals of length at most $3m-2$, the misaligner guarantees isometry.}
         \label{fig:prop-2}
     \end{subfigure}
     \hfill
     \begin{subfigure}[b]{0.35\textwidth}
         \centering
         \includegraphics[width=\textwidth]{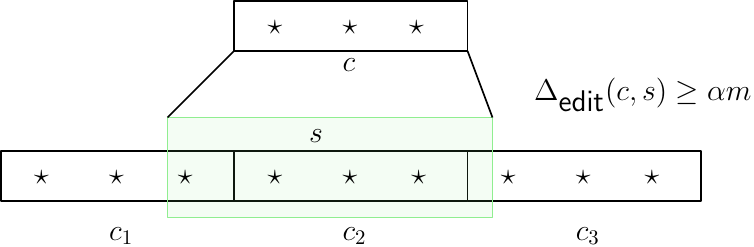}
         \caption{\textbf{Block vs.\ Substring:} The relative edit distance between a block and a substring coming from the concatenation of three blocks is at least $\alpha$.}
         \label{fig:prop-3}
     \end{subfigure}
     \hfill
     \begin{subfigure}[b]{0.38\textwidth}
         \centering
         \includegraphics[width=\textwidth]{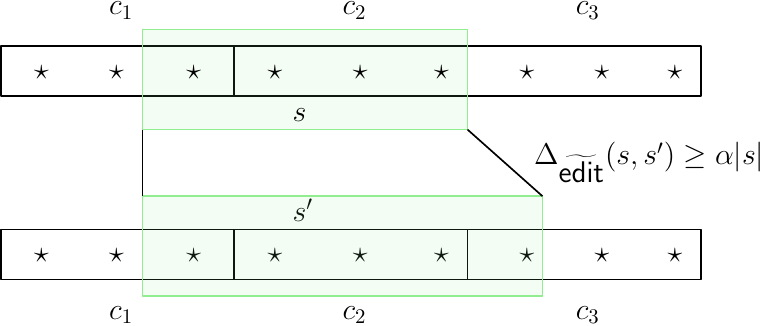}
         \caption{\textbf{Block and a Half vs.\ Substring:} The relative nowhere-vertical edit distance between the concatenation of a block suffix and a full block and its extension by some prefix of a block is at least $\alpha$.}
         \label{fig:prop-5}
     \end{subfigure}
        \caption{Properties of the Misaligner}
        \label{fig:properties}
\end{figure}


\section{Locally Self-Matching Strings: Existence and Construction}

\label{sec:locally-self-matching}

In this section, we define \textit{locally self-matching strings}, which are strings in which every substring has a small relative nowhere-vertical LCS.

\begin{defi}[$\varepsilon$-locally self-matching strings]
\label{def: self-matching}
    For any $\varepsilon\in (0, 1)$, a string $w$ is said to be $\varepsilon$-locally self-matching if for all substrings $s$ of $w$, we have $\slcs(s, s) < \varepsilon|s|$.
\end{defi}
We remark that locally self-matching strings are closely related to synchronization strings introduced by Haeupler and Shahrasbi~\cite{haeupler2017synchronization}. In fact, it is known that if $w$ is an $\varepsilon$-locally self-matching string, then it is also an $\varepsilon'$-synchronization string where $\varepsilon'=2\varepsilon$. The converse is also true: an $\varepsilon$-synchronization string is also an $\varepsilon$-locally self-matching string. Therefore, the notions are equivalent up to constant factors in the parameter $\varepsilon$. For our purposes, it is more convenient to use the formulation in Definition~\ref{def: self-matching} than the more well-known synchronization string formulation. 

It is known that for any $\varepsilon\in (0, 1)$, arbitrarily long $\varepsilon$-locally self-matching strings exist over alphabets of size $\Theta\left(\frac{1}{\varepsilon^2}\right)$~\cite{cheng2018synchronization}. However, to derive explicit bounds on the rate of our embedding, one needs to uncover the constants hiding inside the asymptotic notation. We do so in the following theorem by giving a much tighter analysis of Theorem 4.1 in~\cite{cheng2018synchronization}.   

\begin{theorem}
\label{thm:lll-string}

Let $\Sigma$ be a finite set and $\varepsilon\in \left(0,\frac{1}{2}\right]$ such that the following holds: $$|\Sigma|\ge \frac{e^2}{\varepsilon^2}\cdot \left(1+4\sqrt[4]{\varepsilon}\right).$$
Then, for all positive integers $n$, there exists a $\varepsilon$-locally self-matching string $w$ over $\Sigma$ of length $n$. Moreover, $w$ can be computed in $\textnormal{\textsf{poly}}(n)$ time. 
\end{theorem}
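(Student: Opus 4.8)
The statement is a quantitative refinement of the Lovász Local Lemma construction of synchronization-type strings from \cite{cheng2018synchronization}, so the natural approach is to set up the LLL with a carefully chosen ``bad event'' family and then optimize every constant in the symmetric (or, better, the Shearer/cluster-expansion) version of the lemma. Concretely, fix $n$ and build $w\in\Sigma^n$ by choosing each coordinate $w[i]$ independently and uniformly from $\Sigma$. For each pair of disjoint index intervals of equal length that could witness a large nowhere-vertical self-match, define a bad event; since $\slcs(s,s)<\varepsilon|s|$ must hold for \emph{all} substrings $s$, it suffices (by a standard averaging/minimal-witness argument, exactly as in \cite{haeupler2017synchronization,cheng2018synchronization}) to rule out, for every $\ell$, the existence of a nowhere-vertical common subsequence of length $\ge \varepsilon\ell$ between $w_{|I}$ and $w_{|I}$ for every length-$\ell$ window $I$. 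The first step is therefore to reduce to a clean family of bad events $B_{I,M}$ indexed by a window $I$ and a candidate monotone matching $M$ of size $\lceil \varepsilon\ell\rceil$, and to bound (i) $\Pr[B_{I,M}]$ and (ii) the dependency degree.

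\textbf{Key steps, in order.} (1) Probability bound: a fixed matching $M$ of size $k$ pairing $w[a_r]$ with $w[b_r]$ with all $a_r\neq b_r$ imposes $k$ independent equality constraints \emph{after} grouping the coordinates into the connected components of the ``$a_r\sim b_r$'' graph; the worst case is when the components are as small as possible, giving $\Pr[B_{I,M}]\le |\Sigma|^{-k/?}$ — here one must be careful, and the cleanest route (this is what \cite{cheng2018synchronization} does) is to argue that a nowhere-vertical matching of size $k$ forces at least $k/2$ genuinely new equalities, so $\Pr\le |\Sigma|^{-k/2}$, but to get the improved constant one instead counts more carefully, using that long matchings cannot be too ``aligned'' and extracting closer to $k$ independent constraints up to lower-order loss; this is where the $(1+4\sqrt[4]{\varepsilon})$ correction factor will come from. (2) Counting / dependency bound: for a window of length $\ell$ the number of monotone matchings of size $k=\lceil\varepsilon\ell\rceil$ is $\binom{\ell}{k}^2\le (e/\varepsilon)^{2k}$ by the standard $\binom{\ell}{k}\le (e\ell/k)^k$ estimate, and a bad event $B_{I,M}$ shares a variable with at most (number of windows overlapping $I$) $\times$ (matchings inside such a window) others. (3) Apply the asymmetric LLL (or the weighted/exponential form): assign weight $x_{I,M}=|\Sigma|^{-\theta k}$ for a suitable $\theta$, verify $\Pr[B_{I,M}]\le x_{I,M}\prod_{B'\sim B_{I,M}}(1-x_{B'})$, and check that the geometric series over all window lengths $\ell$ converges precisely when $|\Sigma|\ge \frac{e^2}{\varepsilon^2}(1+4\sqrt[4]{\varepsilon})$. (4) Algorithmic version: invoke the Moser–Tardos resampling algorithm \cite{moser2010constructive}, whose expected number of resamplings is $\sum x_{I,M}/(1-x_{I,M})=\mathrm{poly}(n)$ under the same condition, and note each resampling and each bad-event check is polynomial-time (the witness structure is checkable by an $O(\ell^2)$ dynamic program), giving the claimed $\mathrm{poly}(n)$ runtime.

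\textbf{Main obstacle.} The delicate part is step (1) together with the bookkeeping in step (3): the factor $e^2/\varepsilon^2$ is exactly $(e/\varepsilon)^2$ from the two binomial coefficients raised to the power $k$ and then ``per-unit-$k$'', while the extra $(1+4\sqrt[4]{\varepsilon})$ slack has to absorb three separate lower-order losses — the gap between $k/2$-many and $k$-many independent equalities in short matchings, the overcounting of overlapping windows in the dependency degree, and the $\lceil \varepsilon\ell\rceil$-vs-$\varepsilon\ell$ rounding for small $\ell$. Getting all of these to fit under a single clean correction term (rather than several separate ones) requires choosing $\theta$ and the threshold length below which one argues directly (via Property~\ref{prop:short-intervals}-style case analysis / brute force) rather than via LLL; I expect the bulk of the proof to be this threshold-length split and the convergence estimate $\sum_{\ell} (\text{stuff})^\ell<1$, which is routine once the per-$\ell$ base is shown to be $<1$ under the stated hypothesis but is tedious to make fully rigorous with explicit constants.
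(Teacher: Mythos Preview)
Your plan has a genuine gap at the very first move: sampling each $w[i]$ independently and uniformly from $\Sigma$ cannot produce an $\varepsilon$-locally self-matching string for any $\varepsilon\le\frac12$, because with probability $1-o(1)$ some adjacent pair satisfies $w[i]=w[i+1]$, and the length-$2$ substring $s=w[i]w[i+1]$ then has $\slcs(s,s)\ge 1\ge\varepsilon|s|$. No ``brute force below a threshold length'' can rescue this, since the short bad substrings are not rare events to be ruled out by LLL---they are present almost surely. The paper avoids this by a different random process (following \cite{cheng2018synchronization}): it fixes $\theta:=e^{2}/\varepsilon^{7/4}$ and draws each $w[i]$ uniformly from $\Sigma\setminus\{w[i-1],\dots,w[i-\lceil\theta\rceil+1]\}$. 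This makes every substring of length $<\theta$ consist of distinct symbols and hence automatically good, so the LLL only has to handle substrings of length $\ell\ge\theta$. The bad-event probability becomes $\binom{\ell}{\varepsilon\ell}^{2}(|\Sigma|-\theta)^{-\varepsilon\ell}$, and the whole argument is then a careful choice of LLL weights $x_{s}=D^{-2\varepsilon|s|}$ with $D=1+0.9\varepsilon^{1/3}$, together with several explicit calculus estimates showing the LLL condition holds. The factor $(1+4\varepsilon^{1/4})$ is exactly what absorbs the loss from replacing $|\Sigma|$ by $|\Sigma|-\theta$ (note $\theta/|\Sigma|\approx\varepsilon^{1/4}$) and the slack needed in the product $\prod_{\ell\ge\theta}(1-x_\ell)^{(\ell+\theta)/(2\varepsilon\theta)}$.

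A secondary point: your worry that a nowhere-vertical monotone matching of size $k$ yields only $k/2$ independent equalities is unfounded. For any such matching $\{(i_r,j_r)\}_{r=1}^{k}$ with $i_1<\cdots<i_k$, $j_1<\cdots<j_k$, and $i_r\neq j_r$, the values $b_r:=\max(i_r,j_r)$ are strictly increasing, hence distinct; revealing coordinates left to right, each constraint $w[b_r]=w[\min(i_r,j_r)]$ is on a fresh coordinate and has conditional probability exactly $1/|\Sigma|$ (or $\le 1/(|\Sigma|-\theta)$ in the paper's process). So one always gets the full $|\Sigma|^{-k}$, and the $(1+4\varepsilon^{1/4})$ correction does \emph{not} arise from any dependency among the equalities---it comes entirely from the constrained-sampling / short-window mechanism you are missing.
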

\begin{proof}
    We prove this via the probabilistic method. Let $w$ be a string of length $n$ generated via the following random process.
    
    Let  $\theta:=\frac{e^2}{\varepsilon^{7/4}}$. Randomly pick $\lceil\theta\rceil$ different symbols from $\Sigma$ and let them be the first $\lceil\theta\rceil$ symbols of $w$. If $\lceil\theta\rceil \geq n$, we just pick $n$ different symbols. For $\lceil\theta\rceil + 1 \leq i \leq n$, we pick the $i^{\text{th}}$ symbol $w[i]$ uniformly randomly from $\Sigma \setminus \{w[i-1], \ldots , w[i-\lceil\theta\rceil + 1]\}$.

    We call a substring $s$ of $w$ \textit{bad} if $\slcs(s,s)\geq \varepsilon\cdot  |s|$. Note that $s$ can be bad only if $|s| \geq \lceil\theta\rceil$. Let $ \ell:= |s|$. We have,
    \begin{align*}
        \Pr[s \textnormal{ is bad}] &\leq \binom{\ell}{\varepsilon\ell}^2(|\Sigma|-\lceil\theta\rceil)^{-\varepsilon\ell}\\
        &\leq {\left(\frac{e\ell}{\varepsilon\ell}\right)^{2\varepsilon\ell}(|\Sigma|-\theta)^{-\varepsilon\ell}}\\
        & = C^{-\varepsilon'\ell},
    \end{align*}
where $\varepsilon' :=2\varepsilon$ and $C=\frac{\varepsilon\cdot \sqrt{|\Sigma|-\theta}}{e}\ge   \sqrt{1+3\sqrt[4]{\varepsilon}},$ where the  inequality follows from the assumption on $|\Sigma|$ in the theorem statement.  However,  we have that for any $z\in (0,1)$:
$$(1+z)^2=1+2z+z^2\le 1+2z^{3/4}+z^{3/2}\le 1+3z^{3/4}.$$
We apply the above with $z=\varepsilon^{1/3}$ and by noting that $\varepsilon\in (0,0.5]$,
$$C\ge 1+\sqrt[3]{\varepsilon}.$$

We call $w$ \textit{good} if none of its substrings are bad. 

\begin{lemma}[Claim 1 in \cite{cheng2018synchronization}]
    The badness of two substrings $s$ and $s'$ is mutually independent if $s$ and $s'$ do not intersect, i.e., their intervals of occurrence do not overlap.
\end{lemma}
 By the Lovasz Local Lemma, the probability of $w$ being good is non-zero if for each substring $s$ of $w$, there exists $x_s \in (0, 1)$ such that the following holds:
\[\Pr[s \textnormal{ is bad}] \leq x_s\cdot  \prod_{s' \textnormal{ intersects } s} (1-x_{s'}).\]
We claim that {$x_s := D^{-\varepsilon'|s|}$} works for some choice of $D$ to be determined later. Since the number of length $\ell$ substrings intersecting the substring $s$ is $\ell+|s|$, this amounts to showing that the following inequality is true for every substring $s$ of $w$.
\[C^{-\varepsilon'|s|} \leq D^{-\varepsilon'|s|}\cdot \prod_{\ell=\theta}^n(1-D^{-\varepsilon'\ell})^{|s|+\ell}.\]
This inequality is equivalent to the following inequality for all substrings $s$.
\[C\geq \frac{D}{\prod_{\ell=\theta}^n (1-D^{-\varepsilon'\ell})^{\frac{1+\ell/|s|}{\varepsilon'}}}.\]
The right-hand side of the inequality is maximized when $|s|=\lceil\theta\rceil$. So, it suffices to show that:
\begin{align}C\geq  \frac{D}{\prod_{\ell=\theta}^n(1-D^{-\varepsilon'\ell})^{\frac{\ell+\theta}{\theta\varepsilon'}}}.\label{eq:main}\end{align}

We set $D=1+0.9\cdot\sqrt[3]{\varepsilon}$. To show that \eqref{eq:main} holds, we first prove the following:
\begin{align}\prod_{\ell=\theta}^n(1-D^{-\varepsilon'\ell})^{\frac{\ell+\theta}{\theta\varepsilon'}}\ge  1-\frac{\varepsilon^{7/4}}{e^2}.\label{eq:summain}\end{align}

We will show below that  the following is true:
\begin{align}
    (1-D^{-\varepsilon'\ell})^{\frac{\ell+\theta}{\theta\varepsilon'}}\ge 1-\frac{1}{\ell^2}.
    \label{eq:termmain}
\end{align} 

Let us first see that the above is sufficient to finish the proof.  We first note the following identity (which can be verified by expanding the product):

$$\prod_{i=2}^n\left(1-\frac{1}{i^2}\right)=\frac{n+1}{2n}.$$

Therefore, we have that \eqref{eq:summain} holds:
$$ \prod_{\ell=\theta}^n(1-D^{-\varepsilon'\ell})^{\frac{\ell+\theta}{\theta\varepsilon'}}\ge \prod_{\ell=\theta}^n\left(1-\frac{1}{\ell^2}\right)=\frac{\frac{n+1}{2n}}{\frac{\theta}{2\theta-2}}\ge 1-\frac{1}{\theta}\ge 1-\frac{\varepsilon^{7/4}}{e^2}.$$

We can now verify that \eqref{eq:main} holds:
\begin{align*}
   \frac{1}{C}\cdot \frac{D}{\prod_{\ell=\theta}^n(1-D^{-\varepsilon'\ell})^{\frac{\ell+\theta}{\theta\varepsilon'}}}&\le \frac{1+0.9\cdot\sqrt[3]{\varepsilon}}{1-\frac{\varepsilon^{7/4}}{e^2}}\cdot\frac{1}{1+\sqrt[3]{\varepsilon}}=\frac{1+0.9\cdot\sqrt[3]{\varepsilon}}{1+\sqrt[3]{\varepsilon}-\frac{\varepsilon^{7/4}}{e^2}-\frac{\varepsilon^{25/12}}{e^2}}\\
   &\le \frac{1+0.9\cdot\sqrt[3]{\varepsilon}}{1+\sqrt[3]{\varepsilon}\cdot\left(1-\frac{1}{2^{17/12}\cdot e^2}-\frac{1}{2^{7/4}\cdot e^2}\right)
}<\frac{1+0.9\cdot\sqrt[3]{\varepsilon}}{1+0.909\cdot\sqrt[3]{\varepsilon}}<1,
\end{align*}
where we have used that $\varepsilon\le 0.5$.

Thus, we are left to show \eqref{eq:termmain}, we have:
$$
 \left(1-D^{-\varepsilon'\ell}\right)^{\frac{\ell+\theta}{\theta\varepsilon'}} \ge 1-\frac{\ell+\theta}{\varepsilon'\theta}\cdot D^{-\varepsilon'\ell}.
$$

First notice that the function $f(\ell):=\frac{(\ell+\theta)\ell^2}{\varepsilon'\theta}\cdot D^{-\varepsilon'\ell}$ is decreasing for all $\ell\geq \theta$. To see this note that:
$$
f(\ell+1)=\frac{(\ell+1+\theta)(\ell+1)^2}{\varepsilon'\theta}\cdot D^{-\varepsilon'(\ell+1)}=D^{-\varepsilon'}\cdot \frac{(\ell+1+\theta)(\ell+1)^2}{(\ell+\theta)\ell^2}\cdot f(\ell).
$$

With the goal of relating $f(\ell)$ and $f(\ell+1)$, we compute:
$$
D^{-\varepsilon'}\cdot \left(1+\frac{1}{\ell+\theta}\right)\cdot \left(1+\frac{1}{\ell}\right)^2<D^{-\varepsilon'}\cdot \left(1+\frac{1}{\theta}\right)^3\le D^{-\varepsilon'}\cdot \left(1+\frac{4}{\theta}\right),
$$
where the last inequality follows because $\theta\ge 4$. 
Thus, we have shown that $f(\ell+1)\le f(\ell)\cdot D^{-\varepsilon'}\cdot \left(1+\frac{4}{\theta}\right)$. 
We will now argue that $D^{-\varepsilon'}\cdot \left(1+\frac{4}{\theta}\right)< 1$. 

\begin{claim}\label{claim:increasing}
We have $\left(1+0.9\sqrt[3]{\varepsilon}\right)^{2\varepsilon} > 1 + \frac{4\varepsilon^{7/4}}{e^2}$ for $ \varepsilon \in (0, 0.5] $.
\end{claim}
\begin{proof}
\renewcommand\qedsymbol{$\lrcorner$}
We have from the Taylor series expansion for $ \ln(1+x) $  that for  $ x \in (0,1) $:
\begin{align}
\ln(1 + x) \geq x - \frac{x^2}{2}.    \label{eq:calctool}
\end{align}

Then, we have:
$$
2\varepsilon\cdot \ln\left(1 + 0.9\sqrt[3]{\varepsilon}\right) \geq 2\varepsilon\cdot \left(0.9\sqrt[3]{\varepsilon} - \frac{\left(0.9\sqrt[3]{\varepsilon}\right)^2}{2}\right)=  1.8\cdot\varepsilon^{4/3}-0.81\cdot \varepsilon^{5/3}.
$$

Exponentiating both sides gives us:
\begin{align}
\left(1+0.9\sqrt[3]{\varepsilon}\right)^{2\varepsilon}  \geq e^{1.8\cdot\varepsilon^{4/3}-0.81\cdot \varepsilon^{5/3}}.\label{eq:exprelate}
\end{align}

Again from the Taylor Expansion of $ e^u $ for any $u\in (0,1)$ we have:
$$   e^u \ge 1 + u + \frac{u^2}{2}.
 $$

   Therefore,
\begin{align}
  e^{1.8\cdot\varepsilon^{4/3}-0.81\cdot \varepsilon^{5/3}}\ge 1 + 1.8 \cdot \varepsilon^{4/3} - 0.81 \cdot \varepsilon^{5/3} + 1.62 \cdot \varepsilon^{8/3} - 1.458 \cdot \varepsilon^3 + 0.32805 \cdot \varepsilon^{10/3}.\label{eq:claimderv}
\end{align}

Let $g(\varepsilon):=1.25 \cdot \varepsilon^{4/3} - 0.81 \cdot \varepsilon^{5/3} + 1.62 \cdot \varepsilon^{8/3} - 1.458 \cdot \varepsilon^3 + 0.32805 \cdot \varepsilon^{10/3}$. 
The derivative $ g'(\varepsilon) $ is:
\[
g'(\varepsilon) = \frac{5}{3} \varepsilon^{1/3} - 1.35 \varepsilon^{2/3} + {4.32} \varepsilon^{5/3} - 4.374 \varepsilon^2 + 1.0935\varepsilon^{7/3}
\]
Since 
$1.35 \varepsilon^{2/3}<1.35 \varepsilon^{1/3}$ and $4.374 \varepsilon^2< {4.32} \varepsilon^{5/3}+0.06\varepsilon^{1/3}$ when $\varepsilon\in (0,0.5]$, we have:
\[
g'(\varepsilon) > \left(\frac{5}{3}-1.41\right)\cdot \varepsilon^{1/3}  + 1.0935\cdot \varepsilon^{7/3} >0,
\]
and thus we have $g$ is strictly increasing in $ (0,0.5]$, and $g(0)=0$, we have that $g(\varepsilon)>0$ for all $\varepsilon\in (0,0.5]$. Returning to \eqref{eq:claimderv}, we have:
$$
  e^{1.8\cdot\varepsilon^{4/3}-0.81\cdot \varepsilon^{5/3}}\ge 1+ 0.55\cdot \varepsilon^{4/3} + g(\varepsilon)>1+ 0.55\cdot \varepsilon^{4/3}>1+ 0.55\cdot \varepsilon^{7/4}>1+ \frac{4}{e^2}\cdot \varepsilon^{7/4}.
   $$
   The proof then follows from \eqref{eq:exprelate}.
\end{proof}

Thus, we have proved that $f(\ell+1)<  f(\ell)$. 
Returning to proving \eqref{eq:summain} we have:
$$
 (1-D^{-\varepsilon'\ell})^{\frac{\ell+\theta}{\theta\varepsilon'}}\ge   1-\frac{f(\ell)}{\ell^2}\ge  1-\frac{f(\theta)}{\ell^2}.
$$
The proof concludes now because we claim that $f(\theta)<1$
\begin{claim}
We have $f(\theta)=\frac{2\theta^2}{\varepsilon'}\cdot D^{-\varepsilon'\theta}<1$ for all $\varepsilon\in (0,0.5]$.
\end{claim}
\begin{proof}
\renewcommand\qedsymbol{$\lrcorner$}
    We can rewrite \eqref{eq:calctool} as follows:
    $$(1+x)^{1/x}\ge e^{1-\frac{x}{2}}.$$
Thus, we have that:
$$D^{\varepsilon'\theta}=D^{\frac{2e^2}{\varepsilon^{3/4}}}=\left(\left(1+0.9\sqrt[3]{\varepsilon}\right)^{\frac{10}{9\sqrt[3]{\varepsilon}}}\right)^{\frac{1.8e^2}{\varepsilon^{5/12}}}\ge e^{\left(1-0.45\sqrt[3]{\varepsilon}\right)\cdot{\frac{1.8e^2}{\varepsilon^{5/12}}} }=e^{\frac{1.8e^2}{\varepsilon^{5/12}}-{\frac{0.81\cdot e^2}{\varepsilon^{1/12}}} }>e^{\frac{1.8e^2}{\varepsilon^{5/12}}-{\frac{0.81\cdot e^2}{\varepsilon^{5/12}}} }=e^{\frac{0.99e^2}{\varepsilon^{5/12}}}.$$

On the other hand, we have that 
$$\frac{2\theta^2}{\varepsilon'}=\frac{\theta^2}{\varepsilon}=\frac{e^4}{\varepsilon^{4.5}}<\frac{54.6}{\varepsilon^{4.5}}.$$

Thus proving the claim simply amounts to showing the below holds for all $\varepsilon\in(0,0.5]$:
$$\left(e^{\frac{0.99e^2}{\varepsilon^{5/12}}}\right)^2\ge\left(\frac{54.6}{\varepsilon^{4.5}}\right)^2\iff e^{\frac{1.98e^2}{\varepsilon^{5/12}}}\ge \frac{2982}{\varepsilon^{9}}. $$

The Maclaurin series tells us that:
$$e^{x}=\sum_{k=0}^{\infty}{\frac {x^{k}}{k!}}=1+x+{\frac {x^{2}}{2!}}+{\frac {x^{3}}{3!}}+\cdots$$
By looking at the Maclaurin series and only picking the twenty-third term we have: 
\begin{align*}
   e^{\frac{1.98e^2}{\varepsilon^{5/12}}}> \left(\frac{1.98e^2}{\varepsilon^{5/12}}\right)^{22}\cdot \frac{1}{22!}>\frac{38443}{\varepsilon^{55/6}}>\frac{38443}{\varepsilon^{9}}>\frac{2982}{\varepsilon^9}.&\qedhere
\end{align*}
\end{proof}

The $\textnormal{poly}(n)$-time construction can be done by considering algorithmic versions of the Lovasz Local Lemma. For details, the reader is referred to Lemma 4.2 in~\cite{cheng2018synchronization}. 
\end{proof}

\begin{remark}
\label{rem: three-distinct}
    Note that in the proof, we set $\theta\geq \frac{e^2}{\varepsilon^{7/4}}> 3$. Therefore, every three consecutive symbols in the string whose existence we prove are distinct.
\end{remark}

\section{Embeddings via Misaligners}

\label{sec:embedding}

In this section, we show how to get an embedding using misaligners and locally self-matching strings. On a high level, this embedding takes a locally self-matching string and replaces each of its symbols with elements of some misaligner. This results in a binary string with wildcards. To embed some input string $x$, we then simply instantiate this wildcard string with $x$. Details follow.

Let $\mathcal{C}$ be a 
$(m, k, t, \alpha)$-misaligner and $\Sigma$ be an alphabet such that $|\Sigma|=k$. Fix some arbitrary bijection $\sigma$ from $\Sigma$ to $\mathcal{C}$. Given $\mathcal{C}$ and a positive integer $n$, we compute our embedding map as follows. First, we find the smallest integer $N$ such that $N \geq \frac{tn}{m}$. Next by Theorem~\ref{thm:lll-string}, we compute an $\varepsilon$-locally self-matching string $w$ over $\Sigma$ of length $N$, where $k\geq \frac{e^2}{\varepsilon^2}\cdot \left(1+4\sqrt[4]{\varepsilon}\right)$. Then for every symbol $c$ in $w$, we replace it by $\sigma(c)\in \mathcal{C}$ to obtain the string $w'$. Note that $w'$ has length $mN \geq tn$, and since every string in $\mathcal{C}$ has one wildcard symbol in every $t$ symbols, $w'$ has at least $n$ wildcard symbols. Next, we delete sufficiently many symbols from the right end of $w'$ to ensure that $w'$ has length exactly $tn$ and contains exactly $n$ wildcard symbols. Finally, for every $x\in \{0, 1\}^n$, we define $\varphi_{\mathcal{C}, w, n}(x) := w'_x$,
 i.e., we define $\varphi_{\mathcal{C},w, n}(x)$ to be the string obtained by instantiating $w'$ by $x$.  Clearly, $\varphi_{\mathcal{C}, w, n}: \{0, 1\}^n \to \{0, 1\}^{tn}$ has rate $\frac{1}{t}$. We now make the following claim.

\begin{theorem}
\label{thm:main}
    Let $\mathcal{C}$ be a $(m, k, t, \alpha)$-misaligner and $w$ be an $\varepsilon$-locally self-matching string such that $t\geq \frac{1}{(1-\varepsilon)\alpha -\frac{1}{3m-1}}$. Then for every positive integer $n$ and $x, y \in \{0, 1\}^n$, we have $\ed\left(\varphi_{\mathcal{C}, w, n}(x), \varphi_{\mathcal{C}, w, n}(y)\right) = \ham(x, y)$.
\end{theorem}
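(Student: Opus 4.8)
The plan is to show both inequalities: the easy direction $\ed(\varphi(x),\varphi(y)) \le \ham(x,y)$ follows from the identity alignment, which matches wildcard position $i$ to wildcard position $i$ and every frozen (non-wildcard) position to itself --- since the frozen symbols of $w'$ are identical in $\varphi(x)$ and $\varphi(y)$, the only substitutions come from wildcard positions where $x$ and $y$ differ, giving cost exactly $\ham(x,y)$. The substance is the reverse inequality $\ed(\varphi(x),\varphi(y)) \ge \ham(x,y)$. Suppose for contradiction there exist $x,y$ with $\ed(\varphi(x),\varphi(y)) < \ham(x,y)$, and fix an optimal edit-distance alignment $\mathcal{A}$ between $X := \varphi(x)$ and $Y := \varphi(y)$ of cost less than $\ham(x,y)$.

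First I would run the interval-decomposition argument sketched in Section~\ref{sec:sync-proof-overview}: the alignment $\mathcal{A}$ induces a partition of (a common refinement of) the coordinates into alternating maximal ``vertical'' intervals (where $\mathcal{A}$ aligns $X[i]$ to $Y[i]$ for all $i$ in the interval) and maximal ``nowhere-vertical'' intervals. On a vertical interval the cost contributed equals the number of wildcard positions inside it on which $x,y$ disagree, which is exactly the Hamming contribution there; on a nowhere-vertical interval the Hamming contribution is still just the number of disagreeing wildcards inside it. So if the total edit cost is strictly below $\ham(x,y)$, some nowhere-vertical interval $I$ must have edit cost strictly less than the number of wildcards it contains --- and the number of wildcards in $I$ is at most $\lceil |I|/t\rceil$. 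The goal is to contradict the misaligner/self-matching guarantees. Using Property~\ref{prop:short-intervals} (Short Intervals), $I$ cannot be short (length $\le 3m-2$), because there any nowhere-vertical alignment costs at least the Hamming distance; so $I$ spans at least two full blocks. I would then write $I$ as a middle stretch of full blocks flanked by (at most one) partial block on each end, and decompose the cost of $\mathcal{A}$ restricted to $I$ as a sum over these blocks (charging each aligned pair, deletion, and insertion to the block it lives in, on the $X$ side say). For each full block $b$ that is an instantiation of codeword $c$: either $b$ is matched entirely vertically to an identical-position copy in $Y$ --- call it \emph{bad} --- or not. The bad blocks, by reading off their (necessarily shifted, hence non-vertical in $w$) matches, yield a nowhere-vertical common subsequence in $w$ of total length at least $\alpha'$ times the number of bad blocks for a suitable constant; since $w$ is $\varepsilon$-locally self-matching, the fraction of bad blocks among the full blocks of $I$ is at most $\varepsilon$ (up to lower-order terms from the partial-block boundaries). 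For every non-bad full block $b$, the alignment $\mathcal{A}$ carries $b$ onto a substring $s$ of $w'$ arising from a concatenation of blocks: if $|s|$ is much larger than $m$ the block already pays at least, say, $\alpha m$ via insertions; otherwise $|s|$ is comparable to $m$ and Property~\ref{prop: full-block-distance} (Block vs.\ Substring) --- in the sub-case where $b$'s own codeword reappears in $s$, using the nowhere-vertical clause \ref{prop: full-block-distance-2} --- forces cost at least $\alpha m$. The two boundary partial blocks are handled by Property~\ref{prop: boundary-blocks} (Block and a Half vs.\ Substring), contributing at least $\alpha$ times their lengths.

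Putting the bookkeeping together: the interval $I$ has roughly $|I|/m$ full blocks, at most an $\varepsilon$-fraction are bad (cost $0$), and each of the remaining $(1-\varepsilon)|I|/m$ blocks costs at least $\alpha m$, so $\cost_\mathcal{A}(I) \gtrsim (1-\varepsilon)\alpha |I|$; more carefully, after accounting for the two boundary partial blocks one gets $\cost_\mathcal{A}(I) \ge \left((1-\varepsilon)\alpha - \tfrac{1}{3m-1}\right)|I|$, where the $\tfrac{1}{3m-1}$ term absorbs the boundary slack (this is where the precise constant in the theorem's hypothesis on $t$ comes from, and why the Short Intervals bound $3m-2$ appears). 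On the other hand the Hamming contribution of $I$ is at most $\lceil |I|/t \rceil \le |I|/t + 1$, and the hypothesis $t \ge 1/\bigl((1-\varepsilon)\alpha - \tfrac{1}{3m-1}\bigr)$ gives $\cost_\mathcal{A}(I) \ge |I|/t \ge$ (Hamming contribution of $I$) up to the additive slack, which a slightly more careful version of the interval argument (choosing $I$ to be a union of such intervals, or handling the additive $+1$ by the Short-Intervals property ruling out tiny $I$) removes, contradicting $\cost_\mathcal{A}(I) <$ (number of wildcards in $I$).

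\textbf{Main obstacle.} The delicate part is the careful cost accounting at the interval boundaries and the reduction to the exact constant $\tfrac{1}{3m-1}$: one must argue that the at-most-two partial blocks at the ends of $I$ do not destroy the per-block lower bound, that the ``non-bad implies cost $\ge \alpha m$'' dichotomy is genuinely exhaustive (handling the case where $\mathcal{A}$ carries $b$ onto a \emph{short} substring via the Short-Intervals property, and the case where $b$'s codeword recurs in the image via clause~\ref{prop: full-block-distance-2}), and that the bad-block count really is governed by the nowhere-vertical LCS of $w$ rather than a weaker quantity. I would expect the bookkeeping to be the bulk of Section~\ref{sec:analysis}, while the conceptual skeleton is exactly the synchronization-string argument of Section~\ref{sec:sync-proof-overview} with the misaligner properties substituted for the self-edit-distance bound.
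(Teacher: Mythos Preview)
Your proposal is correct and follows essentially the same approach as the paper: interval decomposition into vertical/nowhere-vertical pieces, ruling out short intervals via Property~\ref{prop:short-intervals}, block-by-block accounting on a long nowhere-vertical interval with bad blocks bounded via the $\varepsilon$-self-matching property and non-bad blocks paying $\ge \alpha m$ via Properties~\ref{prop: full-block-distance} and~\ref{prop: boundary-blocks}.

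Two points where the paper's execution differs from your sketch. First, the ``bad vs.\ non-bad'' dichotomy is \emph{not} exhaustive for the original optimal alignment: a block $b_i$ could have \emph{some but not all} positions matched position-for-position to a same-codeword block $b'_{i'}$. The paper handles this not via the Short-Intervals property but by an explicit alignment-modification step (Observation~\ref{obs:bad-blocks}): one re-instantiates $b'_{i'}$ and extends any single bad match to a full bad-block match without increasing cost. Only after this modification is every block cleanly either bad or amenable to Property~\ref{prop: full-block-distance-2}. Second, your attribution of the $\tfrac{1}{3m-1}$ term to ``boundary slack'' is off: the paper actually obtains the clean bound $\cost(\mathcal{A}^*)\ge (1-\varepsilon)\alpha|I|$ with no subtraction --- the boundary partial blocks are absorbed into the first and last full blocks via Property~\ref{prop: boundary-blocks} in a four-case analysis. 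The $\tfrac{1}{3m-1}$ instead arises at the very end, from comparing $(1-\varepsilon)\alpha|I|$ to the wildcard upper bound $\tfrac{|I|}{t}+1$: since $|I|\ge 3m-1$, one has $(1-\varepsilon)\alpha - \tfrac{1}{|I|}\ge (1-\varepsilon)\alpha - \tfrac{1}{3m-1}\ge \tfrac{1}{t}$, whence $(1-\varepsilon)\alpha|I|\ge \tfrac{|I|}{t}+1$.
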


\section{Proof of Theorem~\ref{thm:main}}
\label{sec:analysis}

\begin{figure}
    \centering
    \includegraphics{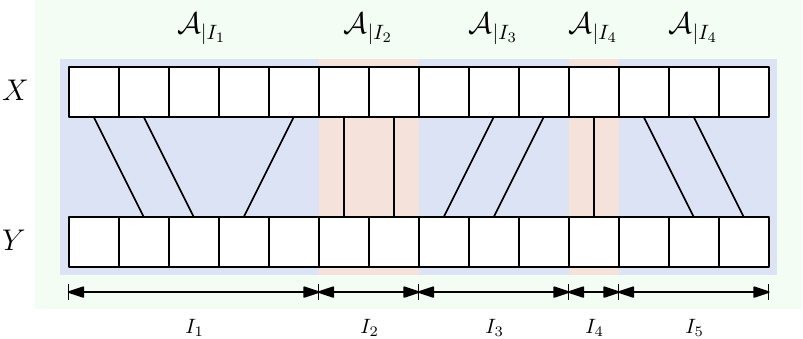}
    \caption{An edit distance alignment $\mathcal{A}$ between the strings $X$ and $Y$. Note the alternating maximal nowhere-vertical and vertical intervals (highlighted blue and red, respectively) $I_1, I_2, I_3, I_4$ and $I_5$ under $\mathcal{A}$. For each $I\in \{I_1, I_2, I_3, I_4, I_5\}$, the induced alignment $\mathcal{A}_{|I}$ is an edit distance alignment between $X_{|I}$ and $Y_{|I}$.}
    \label{fig:alignment-factorization}
\end{figure}
\begin{proof}[Proof of Theorem~\ref{thm:main}]
    Assume for the sake of contradiction that there exist a positive integer $n$ and strings $x, y\in \{0, 1\}^n$ such that $\ed(\varphi_{\mathcal{C}, w, n}(x), \varphi_{\mathcal{C}, w, n}(y)) < \ham(x, y)$. Let $X=\varphi_{\mathcal{C}, w, n}(x)$, $Y=\varphi_{\mathcal{C}, w, n}(y)$ and consider any optimal edit distance alignment $\mathcal{A}$ between $X$ and $Y$. By assumption, we have $\cost(\mathcal{A}) < \ham(X, Y)$. 
    Call a non-empty interval $I\subseteq [1, tn]$ \textit{vertical} under $\mathcal{A}$ if for every $i\in I$, $\mathcal{A}$ aligns $X[i]$ to $Y[i]$. Similarly, call $I$ \textit{nowhere-vertical} under $\mathcal{A}$ if for no $i\in I$, $\mathcal{A}$ aligns $X[i]$ to $Y[i]$\footnote{So, $X[i]$ is either unaligned or aligned to some $Y[j]$ where $i\neq j$.}. Note that an interval could be neither vertical nor nowhere-vertical. The key observation is that $\mathcal{A}$ naturally induces a partition of $[1, tn]$ into alternating maximal vertical and maximal nowhere-vertical intervals, i.e., there exists a unique sequence of intervals $I_1 < I_2< \ldots < I_j$ for some integer $j$ such that $I_1 \cup I_2 \cup \cdots \cup I_j = [1, tn]$, every interval in the sequence is either vertical or nowhere-vertical, and two consecutive intervals are of different types. Additionally, for every $I\in \{I_1, I_2, \ldots , I_j\}$, the alignment $\mathcal{A}$ induces an edit distance alignment between $X_{|I}$ and $Y_{|I}$ (See Figure~\ref{fig:alignment-factorization}). For every $I\in \{I_1, I_2, \ldots , I_j\}$, we refer to the alignment between $X_I$ and $Y_I$ induced by $\mathcal{A}$ as $\mathcal{A}_{|I}$ and its cost as $\cost(\mathcal{A}_{|I})$. Note that   $\cost(\mathcal{A}) = \sum_{I\in \{I_1, I_2, \ldots, I_j\}} \cost(\mathcal{A}_{|I})$. By assumption, we have,
    \[\sum_{I\in \{I_1, I_2, \ldots, I_j\}} \cost(\mathcal{A}_{|I}) = \cost(\mathcal{A}) < \ham(X, Y) = \sum_{I\in \{I_1, I_2, \ldots , I_j\}}\ham(X_{|I}, Y_{|I})\]
    Therefore, there must exist $I \in \{I_1, I_2, \ldots , I_j\}$ such that $\cost(\mathcal{A}_{|I}) < \ham(X_{|I}, Y_{|I})$. We show that this is impossible if the parameters of the misaligner $\mathcal{C}$ and the locally self-matching string $w$ are chosen in a way such that $t\geq \frac{1}{(1-\varepsilon)\alpha -\frac{1}{3m-1}}$ holds.

    \begin{sloppypar}First note that $I$ must be a nowhere-vertical interval under $\mathcal{A}$ since otherwise, we have $\cost(\mathcal{A}_{|I}) = \ham(X_{|I}, Y_{|I})$. Further note that $|I| > 3m -2$ since otherwise, by Property~\ref{prop:short-intervals} of misaligners, $\cost(\mathcal{A}_{|I}) = \selfed(X_{|I}, Y_{|I}) \geq \ed(X_{|I}, Y_{|I}) = \ham(X_{|I}, Y_{|I})$.
    To derive the desired contradiction, we start by lower bounding $\cost(\mathcal{A}_{|I})$. The first step of the lower-bounding process involves \textit{reinstantiating} the wildcard symbols in $X_{|I}$ and $Y_{|I}$ to minimize their nowhere-vertical edit distance. More precisely, let $J\subseteq [|I|]$ be the set of indices in $X_{|I}$ and $Y_{|I}$ that originally contained the symbol $\star$ and were later replaced by symbols of $x$ and $y$ during the embedding. Define $S$ to be the string (with wildcards) obtained by taking $X_{|I}$ (or equivalently $Y_{|I}$) and replacing, for each $j\in J$, $X_{|_I}[j]$ with the symbol $\star$. Then let $x^*, y^*\in \{0, 1\}^{\Gamma(S)}$ be strings minimizing $\selfed(S_{x^*}, S_{y^*})$. Finally, set $X^*:=S_{x^*}$, $Y^*:= S_{y^*}$ and let $\mathcal{A}^*$ be an optimal nowhere-vertical edit distance alignment between $X^*$ and $Y^*$. See Figure~\ref{fig:reinstantiation} for a description of this entire process.
    \end{sloppypar}

    \begin{figure}[t]
        \centering
        \includegraphics[scale=0.6]{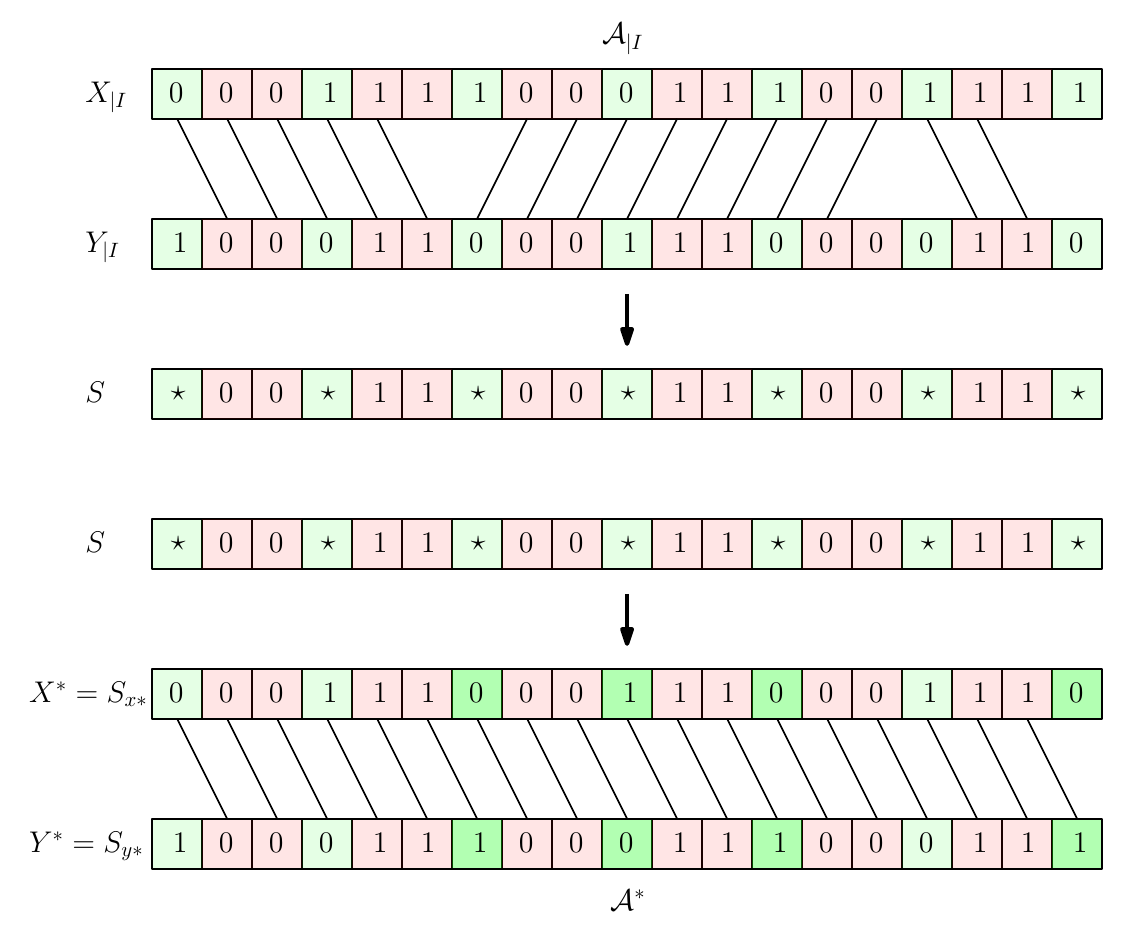}
        \caption{The reinstantiation process. We start with the strings $X_{|I}$ and $Y_{|I}$, replace back the $\star$ symbols where they originally were (marked green) to obtain the string $S$, and reinstantiate $S$ with the strings $x^*$ and $y^*$ to ensure that $\selfed(S_{x^*}, S_{y^*})$ is minimized. Symbols that have changed in the reinstantiation process have been highlighted with a darker shade of green.
        }
        \label{fig:reinstantiation}
    \end{figure}

    Clearly, $\cost_{X_{|I}}^{Y_{|I}}(\mathcal{A}_{|I}) \geq \cost_{X^*}^{Y^*}(\mathcal{A}^*)$. So, by assumption, $\cost_{X^*}^{Y^*}(\mathcal{A}^*) < \ham(X_{|I}, Y_{|I})$ holds.  Call a string a \textit{block} if it is the instantiation of some element in $\mathcal{C}$. Note that both $X^*$ and $Y^*$ start with some proper suffix (possibly empty) of a block, followed by a sequence of blocks, and end in a proper prefix (again, possibly empty). Let $p$ be the number of blocks in $X^*$ (not counting the starting and ending partial blocks). Let $b_1, b_2, \ldots , b_p$ the blocks of $X^*$ in order. Note that since $|I|>3m-2$, $p\geq 2$ and $b_1$ and $b_p$ are distinct. Additionally, let $b_0$ and $b_{p+1}$  be the starting and ending partial blocks, respectively, of $X^*$. Similarly name the blocks of $Y^*$ as $b'_0, b'_1, b'_2, \ldots, b'_p, b'_{p+1}$. The alignment $\mathcal{A}^*$ specifies a sequence of edit operations on $X^*$ transforming it into $Y^*$. This sequence naturally induces on each $b_i$, where $0\leq i \leq p+1$, a sequence of edit operations that transforms it into some string $s_i$,  where we have $s_0\circ s_1 \circ \ldots \circ s_p \circ s_{p+1} = Y^*$ (see Figure~\ref{fig:block-transforms} for an illustration). Note that we have ---
    \[\cost(\mathcal{A}^*) \geq \sum_{i=0}^{p+1}\ed(b_i, s_i)\]
    We now start to lower bound $\cost(\mathcal{A}^*)$. We begin with the following observation.

    \begin{obs}
        \label{obs:bad-blocks}
    
    There exists a string $ Y' $ obtained by re-instantiating some blocks of $ Y^* $, and an alignment $ \mathcal{A}' $ transforming $ X^* $ to $ Y' $, such that the following holds:
    
    \begin{enumerate}
        \item Let $b_i, b'_{i'}$ be instantiations of the same codeword in $\mathcal{C}$, where $i\neq i'$ , then in the alignment $\mathcal{A}'$, if any one pair of characters between $b_i$ and $b'_{i'}$ is matched ``badly'', i.e., for some \(j\in [m]\), \(\mathcal{A}'\) aligns \(b_{i}[j]\) with \(b'_{i'}[j]\), then all the pairs are matched ``badly'', i.e., for \textit{all} \(j\in [m]\), \(\mathcal{A}'\) aligns \(b_{i}[j]\) with \(b'_{i'}[j]\). 
        
        \item The cost of the new alignment does not increase:
        \[
        \cost_{X^*}^{Y'}(\mathcal{A}') \leq \cost_{X^*}^{Y^*}(\mathcal{A}^*).
        \]
    \end{enumerate}
\end{obs}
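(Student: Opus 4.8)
The plan is to transform the alignment $\mathcal{A}^*$ one pair of blocks at a time, never increasing the cost, until the dichotomy in item~(1) holds for every pair of blocks that instantiate the same codeword. Fix a pair $b_i$ in $X^*$ and $b'_{i'}$ in $Y^*$ with $i \neq i'$ that are instantiations of the same codeword $c \in \mathcal{C}$, and suppose that under the current alignment at least one pair $(b_i[j], b'_{i'}[j])$ is matched (a "bad'' vertical match, in the local coordinates of the blocks), but not all $m$ pairs are. I would first argue that, because the alignment is monotone, the set of indices $j$ for which $b_i[j]$ is matched vertically to $b'_{i'}[j]$ forms a contiguous run, and that whenever such a bad match occurs, the characters of $b_i$ and $b'_{i'}$ outside that run that are still matched to each other must also be matched vertically --- so in fact $\mathcal{A}^*$ restricted to $b_i \times b'_{i'}$ either matches a single contiguous block of vertical pairs and nothing else between them, or matches nothing between them. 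The content of the observation is then to eliminate the "partial vertical run'' case.

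The key move is a local surgery: if $b_i$ and $b'_{i'}$ share a nonempty proper contiguous run of vertical matches, I extend that run to all of $[m]$ by re-matching $b_i[j]$ to $b'_{i'}[j]$ for every $j$, and correspondingly deleting the characters of $X^*$ and $Y^*$ that these indices were previously matched to (if any). Since $b_i$ and $b'_{i'}$ are instantiations of the \emph{same} codeword, the new vertical matches cost nothing (equal characters). Moreover, because the run was nonempty, the portions of $b_i$ and $b'_{i'}$ to the left of the run were aligned (monotonically) entirely among themselves on the left side, and likewise on the right; so making those matches vertical too only converts former match/substitution/indel operations into zero-cost matches and deletions, and a careful accounting shows $\cost$ does not increase. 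I would also need the re-instantiation freedom in item~(1): after forcing $b_i$ and $b'_{i'}$ to align vertically, the instantiations of the wildcard positions of $b'_{i'}$ are forced to agree with those of $b_i$, so I re-instantiate $b'_{i'}$ (hence contributing the re-instantiated $Y'$) accordingly; this is legitimate because $Y^*$ was chosen to \emph{minimize} $\selfed$, and re-instantiating cannot decrease cost, so the inequality $\cost_{X^*}^{Y'}(\mathcal{A}') \le \cost_{X^*}^{Y^*}(\mathcal{A}^*)$ is maintained --- combined with optimality, it is in fact an equality, but we only need "$\le$''.

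Finally I would wrap this into an induction / potential-function argument: define the potential to be the number of pairs of same-codeword blocks $(b_i, b'_{i'})$ that are in the "partial vertical run'' state, observe that one surgery strictly decreases this potential without increasing cost and without creating new partial runs for other pairs (the surgery only adds or removes vertical matches for the single pair it touches and converts other matches to deletions, which cannot turn a clean pair into a partial-run pair), and iterate until the potential is zero. At termination, every same-codeword block pair is either fully vertically matched or has no vertical matches, which is exactly item~(1), and the cost bound in item~(2) follows by telescoping the individual non-increase inequalities. The main obstacle I anticipate is the bookkeeping in the surgery step: verifying that re-routing a partial vertical run to a full one, together with the induced deletions on both strings, genuinely never increases the total cost --- this requires being careful about characters of $X^*$ and $Y^*$ \emph{outside} $b_i \cup b'_{i'}$ that were matched into these blocks, and arguing that each such matched pair is replaced by at most one deletion (or is absorbed by the monotonicity structure), so no net cost is added.
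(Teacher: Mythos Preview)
Your approach is essentially the same as the paper's: iterate over offending block pairs, extend a single bad vertical match to a full vertical alignment of the two blocks (re-instantiating the wildcards of $b'_{i'}$ to match $b_i$), and argue the cost never increases. The paper phrases the iteration as a sweep over $i=1,\dots,p$ and reduces the cost bound to a one-step extension claim (``if $P[i]$ is matched to $Q[j]$ but $P[i-1]$ is not matched to $Q[j-1]$, then one can force $(i-1,j-1)$ to match, possibly modifying $Q[j-1]$, without increasing cost''), which is exactly your surgery applied one position at a time.

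Two points to tighten. First, the sentence ``the portions of $b_i$ and $b'_{i'}$ to the left of the run were aligned entirely among themselves'' is not true in general: a character $b_i[j]$ left of the run may well be matched to a character of $Y^*$ lying \emph{outside} $b'_{i'}$ (and symmetrically for $b'_{i'}[j]$). You correctly revisit this in your last paragraph, but the earlier claim should be dropped. Second, your invocation of the optimality of $Y^*$ to justify re-instantiation is a red herring; the actual reason re-instantiation is harmless is simply that after the surgery every position of $b'_{i'}$ is matched inside the new full vertical run, so changing the wildcard values of $b'_{i'}$ only affects those $m$ new matches (now all cost $0$) and nothing else.

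The substantive point your sketch is circling but does not state is the monotonicity fact that makes the one-step extension free: if $(b_i[j_0],b'_{i'}[j_0])$ is already matched and you want to add $(b_i[j_0-1],b'_{i'}[j_0-1])$, then it is \emph{impossible} that $b_i[j_0-1]$ is matched to some $Y^*[k]$ with $k\neq q_{j_0-1}$ \emph{and} $b'_{i'}[j_0-1]$ is matched to some $X^*[\ell]$ with $\ell\neq p_{j_0-1}$ simultaneously (this would force $k>q_{j_0-1}$ and $\ell<p_{j_0-1}$ or vice versa, contradicting monotonicity against the anchor $(p_{j_0},q_{j_0})$). Hence at most one external match is broken per step, contributing one new indel while removing one old edge and one old indel, so the net change is $\le 0$. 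This is the ``each such matched pair is replaced by at most one deletion'' intuition you wrote, made precise; once you have it, your potential argument (or the paper's linear sweep) goes through.
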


    \begin{proof}
    \renewcommand{\qedsymbol}{$\lrcorner$}
        We construct sequences $ Y^0, Y^1, \dots, Y^p $ and $ \mathcal{A}^0, \mathcal{A}^1, \dots, \mathcal{A}^p $, starting with $ Y^0 = Y^* $ and $ \mathcal{A}^0 = \mathcal{A}^* $, and ending with $ Y^p = Y' $ and $ \mathcal{A}^p = \mathcal{A}' $. For each block $ b_i $ in $ X^* $ (from $ i = 1 $ to $ p $), we update $ Y^{i} $ and $ \mathcal{A}^{i} $ as follows:

    \begin{enumerate}
        \item If $ b_i $ has no badly matched pairs in $ \mathcal{A}^{i-1} $, set $ Y^{i} = Y^{i-1} $ and $ \mathcal{A}^{i} = \mathcal{A}^{i-1} $.
        \item If $ b_i $ has a badly matched pair with $ b'_{i'} $ (for $ i \neq i' $ and both instantiations of the same $ c \in \mathcal{C} $):
        \begin{enumerate}
            \item \textbf{Re-instantiate $ b'_{i'} $:} Set $ b'_{i'} := b_i $ in $ Y^{i} $.
            \item \textbf{Modify the Alignment:} Update $ \mathcal{A}^{i-1} $ to obtain $\mathcal{A}^{i} $ by:
            \begin{itemize}
                \item Aligning the entire $ b_i $ to $ b'_{i'} $, matching $ b_i[k] $ to $ b'_{i'}[k] $ for all $ k \in [m] $.
                \item Removing any other pair that are in $\mathcal{A}^{i-1}$ and involves $ b_i $ or $ b'_{i'} $ that are not part of this full block alignment.
                \item Keeping all other pairs unchanged.
            \end{itemize}
        \end{enumerate}
    \end{enumerate}

    It is easy to see that, after each round $ \mathcal{A}^{i} $ is still a valid alignment.

    \paragraph{Cost Analysis:}
    
    We need to show that the cost does not increase after each round:
    
    \[
    \cost_{X^*}^{Y^{i}}(\mathcal{A}^{i}) \leq \cost_{X^*}^{Y^{i-1}}(\mathcal{A}^{i-1}).
    \]

    This follows from the following claim.

    \begin{claim}
       Let $ P $ and $ Q $ be strings, and suppose we have an alignment $ \chi $ between $ P $ and $ Q $, where $ P[i] $ is matched to $ Q[j] $, but $ P[i-1] $ is not matched to $ Q[j-1] $. Let $ Q' $ be the string obtained by modifying $ Q[j-1] $ to be equal to $ P[i-1] $ if they are not equal; otherwise, $ Q' = Q $. Then, we can modify $ \chi $ to obtain a new alignment $ \chi' $ between $ P $ and $ Q' $, where $ P[i-1] $ is matched to $ Q'[j-1] $, and the cost does not increase:
    \[
    \cost_{P}^{Q'}(\chi') \leq \cost_{P}^{Q}(\chi).
    \]
    \end{claim}
    
    The claim can be easily proved using some simple case analysis, left as an exercise for the reader.
    
    Applying this claim iteratively for each position $ k $ from $ j $ down to $ 1 $ (and similarly from $ j $ up to $ m $), we can extend the alignment between $ b_i $ and $ b'_{i'} $ to the entire block without increasing the cost.
    
    Therefore, after processing all blocks with bad edges, the total cost satisfies:
    \[
    \cost_{X^*}^{Y'}(\mathcal{A}') \leq \cost_{X^*}^{Y^*}(\mathcal{A}^*).
    \qedhere\]\end{proof}






\begin{figure}
    \centering
    \includegraphics[scale=0.7]{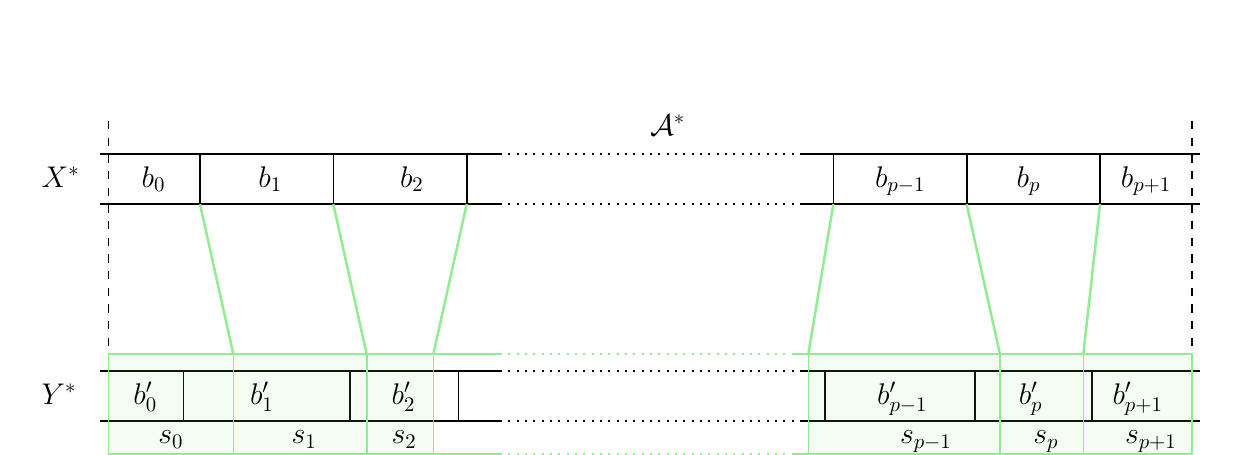}
    \caption{An optimal nowhere-vertical alignment $\mathcal{A}^*$ between $X^*$ and $Y^*$ specifies a sequence of edit operations that transforms each block $b_i$, where $0\leq i \leq p+1$, into a string $s_i$ such that $s_0\circ s_1 \circ \ldots \circ s_p \circ s_{p+1} = Y^*$.}
    \label{fig:block-transforms}
\end{figure}
    For the remainder of the proof, we assume without loss of generality that $Y^*=Y'$, $\mathcal{A}^* = \mathcal{A}'$ where $Y'$ and $\mathcal{A}'$ are the string and the alignment guaranteed to exist by Observation~\ref{obs:bad-blocks}. For each $i\in[p]$, call a block $b_i$ of $X^*$ \textit{bad}, if $\mathcal{A}^*$ aligns $b_i$ to some block $b'_{i'}$ completely and $b_i = b'_{i'}$. From the set of bad blocks $b_i$ and their matches $b'_{i'}$, one can recover a nowhere-vertical common subsequence of a length $p$ substring of $w$. Since $w$ is a $\varepsilon$-locally self-matching string, it follows that the number of bad blocks in $X^*$ is at most $\varepsilon p$. 
    
    Next, call a block $b_i$ of $X^*$ \textit{overworked} if $|s_i| \geq (1+\alpha) m$. Call the rest of the blocks in $X^*$ \textit{good}. We have the following observations.
    \begin{obs}
        If $b_i$ is a bad block, then $\ed(b_i, s_i)=0$.
    \end{obs}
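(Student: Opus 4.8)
The plan is to unwind the definitions and observe that a bad block is copied verbatim into $Y^*$, so its induced image $s_i$ coincides with it. Recall that $b_i$ being bad means there is a block $b'_{i'}$ of $Y^*$ with $i\neq i'$ such that $b_i = b'_{i'}$ (as instantiated length-$m$ binary strings, both now wildcard-free) and $\mathcal{A}^*$ aligns $b_i[k]$ to $b'_{i'}[k]$ for every $k\in[m]$. The goal is to show $s_i = b'_{i'}$, after which $\ed(b_i,s_i) = \ed(b_i,b_i) = 0$ is immediate: the identity alignment between $b_i$ and itself has no deletions, no insertions, and no substitutions, since the two strings are literally equal.

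First I would argue that the sequence of edit operations $\mathcal{A}^*$ induces on $b_i$ has no deletions: each of the $m$ characters of $b_i$ is matched, namely $b_i[k]$ to $b'_{i'}[k]$. Next, it has no insertions either: the targets $b'_{i'}[1],\dots,b'_{i'}[m]$ are exactly the (consecutive) characters of the block $b'_{i'}$ inside $Y^*$, so no character of $Y^*$ lies strictly between two consecutive matched targets, leaving no inserted character to be attributed to $b_i$. Hence the portion of $Y^*$ that $b_i$ is transformed into is precisely $b'_{i'}$, i.e.\ $s_i = b'_{i'}$. Finally, since $b_i = b'_{i'}$, all $m$ matches $b_i[k]\leftrightarrow b'_{i'}[k]$ are non-substitutions, so the sub-alignment of $\mathcal{A}^*$ between $b_i$ and $s_i$ already has cost $0$; therefore $\ed(b_i,s_i)=0$.

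The only point needing mild care --- the ``main obstacle'', such as it is --- is the bookkeeping of how unmatched characters of $Y^*$ adjacent to the two ends of $b'_{i'}$ are attributed among the images $s_0,s_1,\dots,s_{p+1}$: one must check that under the convention defining the $s_i$, such boundary insertions are charged to the neighbouring blocks rather than to $b_i$. This is exactly what occurs, because $b_i$'s matched targets already comprise the full contiguous block $b'_{i'}$, so the span of $Y^*$ covered by $b_i$ is forced to be $b'_{i'}$ itself. Everything else is a direct consequence of $b_i = b'_{i'}$ and the position-by-position matching guaranteed by the definition of a bad block (and by the normalization carried out in Observation~\ref{obs:bad-blocks}, after which full-block matches are complete and carry no stray pairs).
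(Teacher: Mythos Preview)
Your proof is correct and matches the paper's approach; the paper's own proof is the single sentence ``This follows from the definition of a bad block,'' and you have simply unpacked that sentence carefully. Your attention to the boundary-insertion bookkeeping is a point the paper silently assumes, so your version is, if anything, more complete.
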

    \begin{proof}
    \renewcommand\qedsymbol{$\lrcorner$}
    This follows from the definition of a bad block. 
    \end{proof}
    \begin{obs}
        If a block $b_i$ is either overworked or good, then $\ed(b_i, s_i) \geq \alpha m$
    \end{obs}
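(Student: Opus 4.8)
The goal is to prove the final observation: if a block $b_i$ in $X^*$ is either overworked or good, then $\ed(b_i, s_i) \geq \alpha m$. Recall that $s_i$ is the substring of $Y^*$ that $b_i$ is transformed into under the alignment $\mathcal{A}^*$, and that $Y^*$ consists of a proper suffix of a block, followed by a sequence of full blocks $b'_1, \ldots, b'_p$, followed by a proper prefix of a block. The plan is to locate $s_i$ inside $Y^*$ and identify which (at most three) consecutive codewords of the misaligner it straddles, and then invoke the appropriate clause of Definition~\ref{def:misaligners}.

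First I would bound the length of $s_i$. If $b_i$ is overworked, then $|s_i| \geq (1+\alpha)m$; but also $|s_i| \leq |X^*| - |b_i| + \text{(edit operations)}$, and more usefully $|s_i|$ cannot be too large relative to $m$ because $\cost(\mathcal{A}^*) < \ham(X_{|I}, Y_{|I}) \leq |J| \leq |I|/t$, which is tiny compared to $m$ when $t$ is chosen as in the theorem — so a single block cannot be stretched by more than an additive $o(m)$, forcing $|s_i| \leq m + \text{(small)}$, in particular $|s_i| < 2m$. Hence $s_i$ is a substring of $Y^*$ of length less than $2m$, so it intersects at most three consecutive full blocks of $Y^*$, say $b'_{i'}, b'_{i'+1}, b'_{i'+2}$ (or fewer), which are instantiations of three codewords $c_1, c_2, c_3 \in \mathcal{C}$. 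By Remark~\ref{rem: three-distinct}, these three codewords are distinct (they come from consecutive symbols of the locally self-matching string $w$). Thus $s_i$ is a substring of an instantiation of $c_1 \circ c_2 \circ c_3$, and $b_i$ is an instantiation of some codeword $c \in \mathcal{C}$.

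Now I split into cases according to whether $c$ equals one of $c_1, c_2, c_3$. If $c$ is none of them, then Property~\ref{prop: full-block-distance-1} directly gives $\ed(b_i, s_i) \geq \ed(c, s) \geq \alpha m$, where $s$ is the corresponding substring of $c_1 \circ c_2 \circ c_3$ (taking edit distance over all instantiations only lowers it, and $b_i, s_i$ are particular instantiations, so $\ed(b_i,s_i) \geq \ed(c,s)$). If $c$ equals exactly one of $c_1, c_2, c_3$ — say it equals $c_j$ with counterpart block $b'_{i'+j-1}$ inside $s_i$ — then since $b_i$ is \emph{not} a bad block and $b_i$ is not aligned completely-and-equally to any block, the induced alignment between $b_i$ and its counterpart $c'=c_j$-instantiation inside $s_i$ must be nowhere-vertical; hence Property~\ref{prop: full-block-distance-2} applies and gives cost at least $\alpha m$. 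The subtlety here: I must argue that "not bad" implies the counterpart alignment is nowhere-vertical, not merely "not fully vertical." This is exactly where Observation~\ref{obs:bad-blocks} is used: after the reinstantiation it guarantees that between two instantiations of the same codeword at distinct positions, either all pairs are matched badly (vertically) or none are. Since $b_i$ is good/overworked (not bad), it is not the all-badly-matched case, so no pair is matched badly, i.e., the alignment between $b_i$ and its same-codeword counterpart within $s_i$ is nowhere-vertical — which is precisely the hypothesis of Property~\ref{prop: full-block-distance-2}.

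The main obstacle I anticipate is the boundary bookkeeping: $s_i$ may only partially overlap the extreme blocks $b'_{i'}$ and $b'_{i'+2}$, so one has to make sure that the relevant substring-of-three-codewords picture is set up correctly and that the "at most three consecutive full blocks" claim is justified even when $Y^*$ begins or ends with a partial block — but since $b_i$ is a \emph{full} block of $X^*$ (we have $1 \leq i \leq p$) and we only need an upper bound on $|s_i|$, this is manageable. A secondary care point is confirming $|s_i| < 2m$ rigorously: one uses that $\cost(\mathcal{A}^*) \le \cost(\mathcal{A}_{|I}) < \ham(X_{|I},Y_{|I}) \le \Gamma(S) \le \lceil |I|/t\rceil$, and since insertions into $b_i$ contribute to this cost, $|s_i| \le m + \cost(\mathcal{A}^*) < 2m$ provided $t$ is large enough that $\lceil |I|/t \rceil < m$; this holds for the interval in question because the relevant regime has $|I|$ comparable to $pm$ and $t \ge \frac{1}{(1-\varepsilon)\alpha - 1/(3m-1)} = \Theta(m)$, but to be safe one may simply note that if $|s_i| \ge 2m$ then $\ed(b_i, s_i) \ge |s_i| - |b_i| \ge m \ge \alpha m$ trivially, so the case $|s_i| \ge 2m$ needs no misaligner property at all. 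With that escape hatch, the length bound becomes a non-issue and the proof reduces cleanly to the case analysis on $c$ versus $c_1, c_2, c_3$ described above.
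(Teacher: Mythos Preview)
Your proposal is correct and follows essentially the same approach as the paper: for the good case you locate $s_i$ inside at most three consecutive codeword instantiations (using Remark~\ref{rem: three-distinct} for distinctness), then split on whether $c$ coincides with one of $c_1,c_2,c_3$ and invoke Property~\ref{prop: full-block-distance-1} or~\ref{prop: full-block-distance-2}, with Observation~\ref{obs:bad-blocks} supplying the nowhere-vertical condition in the latter case. This is exactly the paper's argument.

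The only difference worth flagging is your treatment of the overworked case. You spend effort trying to force $|s_i|<2m$ via a global cost bound (and the claim $t=\Theta(m)$ is incorrect --- $t$ tends to the constant $\frac{1}{(1-\varepsilon)\alpha}$ as $m\to\infty$), then patch this with the escape hatch ``$|s_i|\ge 2m \Rightarrow \ed(b_i,s_i)\ge m\ge \alpha m$.'' The paper is more direct: overworked \emph{means} $|s_i|\ge (1+\alpha)m$, so $\ed(b_i,s_i)\ge |s_i|-|b_i|\ge \alpha m$ immediately, and the misaligner properties are only needed for the good case (where $|s_i|<(1+\alpha)m<2m$ holds by definition, no cost bound required). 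Your escape hatch together with the case analysis still covers everything correctly, just via a longer route.
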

    \begin{proof}
        \renewcommand\qedsymbol{$\lrcorner$}
        If $b_i$ is overworked, then $\ed(b_i, s_i) \geq |s_i| - |b_i| \geq \alpha m$. If $b_i$ is good, then since $|s_i|< (1+\alpha)m < 2m$, $s_i$ can intersect at most three consecutive blocks in $Y^*$. Let us call these blocks $b'_{i'-1}, b'_{i'}$ and $b'_{i'+1}$ for some $i'\in [p]$. By Remark~\ref{rem: three-distinct}, all three of these blocks are instantiations of distinct elements from $\mathcal{C}$. Now let $c_i, c_{i'-1}, c_{i'}, c_{i'+1}\in\mathcal{C}$ be strings such that $b_i, b'_{i'-1}, b'_{i'}, b'_{i'+1}$ are instantiations of $c_i, c_{i'-1}, c_{i'}, c_{i'+1}$, respectively.  If none of $c_{i'-1}, c_{i'}, c_{i'+1}$ are equal to $c_i$, then by Property~\ref{prop: full-block-distance-1} of misaligners, we have $\ed(b_i, s_i) \geq \alpha m$. Furthermore, if exactly one, say $c_{i'-1}$ is equal to $c_i$, then there does not exist $j\in[m]$ such that $\mathcal{A}^*$ aligns $b_i[j]$ to $b'_{i'-1}[j]$ since otherwise, by Observation~\ref{obs:bad-blocks} and our assumption on $\mathcal{A}^*$, $b_i$ would be a bad block. Therefore, we can apply Property~\ref{prop: full-block-distance-2} to get $\ed(b_i, s_i) \geq \alpha m$.
    \end{proof}
We also have the following observations regarding the first partial block $b_0$.
\begin{obs}
    If $b_1$ is a bad block, then $\ed(b_0, s_0) \geq \alpha |b_0|$
\end{obs}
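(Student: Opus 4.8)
The plan is to mirror the argument used for good and overworked interior blocks, but applied to the starting partial block $b_0$, using the fact that if $b_1$ is bad then $b_1$ is matched completely and vertically to some $b'_{i'}$ with $b_1 = b'_{i'}$; since the interior blocks are matched in left-to-right order by $\mathcal{A}^*$, we should have $i' = 1$ (or at least $i'$ small and controllable). The key point is that once $b_1$ is glued vertically onto $b'_1$, the partial block $b_0$ of $X^*$ must be transformed by $\mathcal{A}^*$ into $s_0$, and $s_0$ is forced to lie (essentially) within $b'_0$ together with possibly a sliver adjacent to it — in any case $s_0$ is a substring of a concatenation of at most three consecutive blocks of $Y^*$.

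First I would argue that $|s_0| < 3m/2$ or some similar bound, so that $s_0$ meets at most a bounded number of consecutive blocks of $Y^*$; if $|s_0|$ were much larger than $|b_0|$ then $\ed(b_0, s_0) \ge |s_0| - |b_0| \ge \alpha|b_0|$ immediately (this is the easy case, analogous to an overworked block). So the interesting case is $|s_0| \approx |b_0|$, i.e. $s_0$ is a substring of the concatenation of $b'_0, b'_1$ (and perhaps $b'_2$) of roughly the length of $b_0$. Since $b_1 = b'_1$ and these are instantiations of distinct codewords of the misaligner (Remark~\ref{rem: three-distinct} on $w$, and distinctness of consecutive symbols), the relevant codewords $c$ (for $b_0$, meaning the codeword whose suffix $b_0$ instantiates) and the codewords underlying $b'_0, b'_1$ are related exactly in the configuration governed by Property~\ref{prop: boundary-blocks} of the misaligner — a proper suffix of one codeword followed by a full codeword, compared against its extension. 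I would invoke Property~\ref{prop: boundary-blocks-1} (block and a half vs.\ substring) to conclude $\selfed$ between the relevant strings is at least $\alpha$ times the length of the "block and a half" piece, and hence in particular $\ed(b_0, s_0) \ge \alpha|b_0|$.

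The cleanest way to set this up: consider the string $\hat s$ formed by $b_0$ (a suffix of some codeword $c_1$) concatenated with $b_1$ (the full codeword $c_2$), and its counterpart $\hat s'$ inside $Y^*$ formed by $b'_0$ concatenated with $b'_1$, possibly extended by a prefix of $b'_2$. Because $b_1$ is matched vertically to $b'_1$ and $b_1 = b'_1$, the alignment $\mathcal{A}^*$ restricted to $\hat s$ vs.\ $\hat s'$ is nowhere-vertical on the $b_1$-vs-$b'_1$ block only in the sense forbidden — wait, that's the subtlety — so I need to use that $\mathcal{A}^*$ is globally nowhere-vertical, which forbids $b_0$'s characters from being matched to themselves, combined with Property~\ref{prop: boundary-blocks-1} which only demands that $s$ and its counterpart be nowhere-vertically aligned, to extract the $\ge \alpha|s|$ bound. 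The cost charged to the $b_0$ portion alone is then at least $\alpha|b_0|$ after subtracting off the zero cost of the vertical $b_1$-vs-$b'_1$ match.

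\medskip
The main obstacle I anticipate is bookkeeping the indices and boundary effects precisely: verifying that when $b_1$ is bad, its vertical partner $b'_{i'}$ really is the first full block of $Y^*$ (so that $b_0$ faces $b'_0$ and not some block far away), and handling the case where $b'_0$ (or $b_0$) is empty or very short, where "relative cost $\ge \alpha|b_0|$" becomes vacuous or needs the convention that $b_0$ nonempty. I would also need to be careful that the piece of $Y^*$ that $s_0$ lands in is correctly described as "suffix of $c_1$, then $c_2 = $ the codeword of $b'_1$, then prefix of $c_3$" so that Property~\ref{prop: boundary-blocks-1} applies verbatim; if the orientation is reversed one uses Property~\ref{prop: boundary-blocks-2} on the reversed strings instead. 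Modulo these routine-but-fiddly case distinctions, the observation follows directly from the block-and-a-half property of the misaligner.
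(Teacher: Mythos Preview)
Your proposal overcomplicates matters and rests on a mistaken picture of where $b_1$ lands. You write that ``we should have $i' = 1$,'' i.e., that the bad block $b_1$ is vertically matched to $b'_1$. But $X^*$ and $Y^*$ are instantiations of the \emph{same} wildcard template $S$, so $b_1$ and $b'_1$ occupy the same positions in $[|I|]$; matching $b_1[j]$ to $b'_1[j]$ would be a vertical edge, contradicting that $\mathcal{A}^*$ is nowhere-vertical. Hence if $b_1$ is bad it must be matched to some $b'_{i'}$ with $i' \geq 2$.

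Once you see $i' \geq 2$, the observation is a one-line length argument (your ``easy case'' is the \emph{only} case): by monotonicity of the alignment, $s_0$ must absorb all of $b'_0 \circ b'_1 \circ \cdots \circ b'_{i'-1}$, in particular $b'_0 \circ b'_1$, so $|s_0| \geq |b'_0| + m = |b_0| + m \geq (1+\alpha)|b_0|$ (using $|b_0| \leq m$ and $\alpha \leq 1$). Thus $\ed(b_0, s_0) \geq |s_0| - |b_0| \geq \alpha|b_0|$. There is no ``interesting case'' where $|s_0| \approx |b_0|$, and the block-and-a-half Property~\ref{prop: boundary-blocks} is never invoked here --- that property is exactly what is used in the \emph{next} observation, where $b_1$ is \emph{not} bad and one must bound $\ed(b_0\circ b_1, s_0\circ s_1)$ jointly. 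Your proposal essentially conflates these two observations.
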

\begin{proof}
\renewcommand\qedsymbol{$\lrcorner$}
    If $b_1$ is a bad block, then $b'_{0}\circ b'_1$ is a prefix of $s_0$. Since $\alpha<1$, $|s_0| \geq |b'_{0}\circ b'_1| \geq (1+\alpha)|b_0|$. So, $\ed(b_0, s_0) \geq |s_0|-|b_0| \geq \alpha b_0$.
\end{proof}
\begin{obs}
    If $b_1$ is not a bad block, then $\ed(b_0\circ b_1, s_0\circ s_1) \geq \alpha(|b_0|+|b_1|)$.
\end{obs}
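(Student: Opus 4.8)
The plan is to handle the left-boundary segment $b_0\circ b_1$ in exactly the way the earlier observations handle a single good or overworked full block, with Property~\ref{prop: boundary-blocks} (``Block and a Half vs.\ Substring'') now playing the role that Property~\ref{prop: full-block-distance} played there. Write $M:=|b_0|+|b_1|=|b_0|+m$, and recall $|b_0|<m$ since $b_0$ is a proper suffix of a codeword.

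First I would dispose of the length-driven case. If $\bigl| |s_0\circ s_1|-M\bigr|\ge\alpha M$, then any alignment between $b_0\circ b_1$ and $s_0\circ s_1$ must pay at least $\bigl| |s_0\circ s_1|-M\bigr|$ in insertions and deletions, so $\ed(b_0\circ b_1,s_0\circ s_1)\ge\alpha M$ and there is nothing more to do. Hence assume $(1-\alpha)M<|s_0\circ s_1|<(1+\alpha)M$. Using $\alpha\le\tfrac12$ and $|b_0|<m$, one-line computations give $(1+\alpha)M\le|b_0|+2m$ and $(1-\alpha)M>|b_0|$, so $|b_0|<|s_0\circ s_1|<|b_0|+2m$.

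Next I would pin down the structure of $s_0\circ s_1$. Since $s_0\circ s_1\circ\cdots\circ s_{p+1}=Y^*$, the string $s_0\circ s_1$ is a prefix of $Y^*$; as $Y^*$ begins with the partial block $b'_0$ (of length $|b_0|$) followed by the full blocks $b'_1,b'_2,\dots$, the length bounds just obtained say exactly that $s_0\circ s_1$ is a proper prefix of $b'_0\circ b'_1\circ b'_2$ extending into $b'_1$. Writing $c_0^{\mathrm{suf}}$ for the proper suffix of $c_0$ that $b_0$ (equivalently $b'_0$) instantiates and $c_1,c_2$ for the codewords of $b'_1,b'_2$, we get that $b_0\circ b_1$ is an instantiation of $c_0^{\mathrm{suf}}\circ c_1$ while $s_0\circ s_1$ is an instantiation of a prefix of $c_0^{\mathrm{suf}}\circ c_1\circ c_2$, where $c_0,c_1,c_2$ are pairwise distinct by Remark~\ref{rem: three-distinct}. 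Moreover the part of $\mathcal{A}^*$ carrying $b_0\circ b_1$ onto $s_0\circ s_1$ is nowhere-vertical, because $\mathcal{A}^*$ is and these are aligned prefixes of $X^*$ and $Y^*$ (the hypothesis that $b_1$ is not bad ensures we are not in the degenerate situation where the image of $b_1$ is a full relocated block, which would already be covered by the length-driven case). Now split on whether $|s_0\circ s_1|\ge M$: if so, the relevant prefix is $c_0^{\mathrm{suf}}\circ c_1\circ c_2^{\mathrm{pre}}$ for a proper prefix $c_2^{\mathrm{pre}}$ of $c_2$, and Property~\ref{prop: boundary-blocks-1}, applied to the distinct codewords $c_0,c_1,c_2$, bounds the cost of any such nowhere-vertical alignment below by $\alpha\cdot|c_0^{\mathrm{suf}}\circ c_1|=\alpha M$, which gives the claim; if $|s_0\circ s_1|<M$, the prefix is instead $c_0^{\mathrm{suf}}\circ c_1^{\mathrm{pre}}$ for a proper prefix $c_1^{\mathrm{pre}}$ of $c_1$, and one argues the same way after reversing all strings and using Property~\ref{prop: boundary-blocks-2}.

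I expect the last sub-case, $|s_0\circ s_1|<M$, to be the delicate one: the length gap $M-|s_0\circ s_1|$ is strictly below $\alpha M$, so insertions and deletions alone do not close it, and one has to verify that, after reversal, the truncated configuration really matches the hypotheses of the ``Block and a Half vs.\ Substring'' property --- in particular that $b_0$ being only a partial block, and the image of $b_1$ being only a proper prefix of its codeword, does not break the structural pattern Property~\ref{prop: boundary-blocks-2} requires, and that the reversed restriction of $\mathcal{A}^*$ remains nowhere-vertical. Everything else is a short length computation together with a direct appeal to the misaligner axioms.
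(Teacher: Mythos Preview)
Your overall strategy---dispose of the length-driven case and then invoke the ``Block and a Half vs.\ Substring'' property---is exactly the paper's. The paper, however, is more terse: it splits off only the single case $|s_0\circ s_1|\ge(1+\alpha)M$ (handled by length), observes that otherwise $s_0\circ s_1$ is a prefix of $b'_0\circ b'_1\circ b'_2$ (since $(1+\alpha)M<|b_0|+2m$), and then invokes Property~\ref{prop: boundary-blocks-1} directly, with no further sub-casing on whether $|s_0\circ s_1|$ lies above or below $M$.

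Your proposed handling of the sub-case $|s_0\circ s_1|<M$ via reversal and Property~\ref{prop: boundary-blocks-2} does not go through. That property concerns the shape $s=c_2\circ c_3^{\mathrm{pre}}$ (a full block followed by a proper prefix) with $s'=c_1^{\mathrm{suf}}\circ s$ extending it on the \emph{left}; this is precisely the \emph{right}-boundary configuration, used later for $b_p\circ b_{p+1}$. At the left boundary you have $b_0\circ b_1=c_0^{\mathrm{suf}}\circ c_1$ (proper suffix followed by full block), and reversing it does not produce a string of the form $c_2\circ c_3^{\mathrm{pre}}$ over codewords of $\mathcal{C}$ --- reversals of codewords need not be codewords. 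Moreover, in the $|s_0\circ s_1|<M$ regime neither string is an extension of the other in the direction Property~\ref{prop: boundary-blocks-2} requires, so the hypothesis simply fails to match. The paper avoids this entirely by folding the whole range $|s_0\circ s_1|<(1+\alpha)M$ into a single appeal to Property~\ref{prop: boundary-blocks-1}; your two-sided length split and the attempt to route the short sub-case through~\ref{prop: boundary-blocks-2} are unnecessary complications.
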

\begin{proof}
\renewcommand\qedsymbol{$\lrcorner$}
    If $|s_0\circ s_1| \geq (1+\alpha)(|b_0|+ |b_1|)$, then we are done. If $|s_0\circ s_1| < (1+\alpha)(|b_0|+ |b_1|)$, then since $\alpha\leq \frac{1}{2}$, $s_0\circ s_1$ must be a prefix of $b'_{0}\circ b'_{1}\circ b'_2$. By Property~\ref{prop: boundary-blocks-1} of misaligners, it then follows that $\ed(b_0\circ b_1, s_0\circ s_1) \geq \alpha(|b_0|+|b_1|)$. 
\end{proof}
Similar statements for the final partial block $b_{p+1}$ also hold with essentially the same proofs.
\begin{obs}
    If $b_p$ is a bad block, then $\ed(b_{p+1}, s_{p+1}) \geq \alpha |b_{p+1}|$
\end{obs}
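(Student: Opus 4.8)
The statement to prove is the mirror-image of the observation just proved for the first partial block: if $b_p$ (the last full block of $X^*$) is bad, then $\ed(b_{p+1}, s_{p+1}) \geq \alpha|b_{p+1}|$. I would prove this by the symmetric argument. Since $b_p$ is bad, the alignment $\mathcal{A}^*$ matches $b_p$ entirely and vertically-within-the-block to some block $b'_{p'}$ with $b_p = b'_{p'}$. Because $\mathcal{A}^*$ is an edit distance alignment (so it is monotone and non-crossing), everything in $X^*$ lying strictly to the right of $b_p$ — namely the final partial block $b_{p+1}$ — must be aligned to (or deleted against) the portion of $Y^*$ lying strictly to the right of $b'_{p'}$. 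But $b_p = b'_{p'}$ is a full block, and the last full block of $X^*$ is $b_p$; hence $b'_{p'}$ cannot be the last full block of $Y^*$ (there are $p$ full blocks in each, and $b_p$ being matched to $b'_{p'}$ with both being instantiations of the same codeword forces, via the locally-self-matching structure, $p' \le p$, and if $p' = p$ the match would be vertical, which is forbidden; so $p' < p$). Therefore $b'_{p'+1} \circ b'_{p'+2}\circ\cdots$ includes at least the full block $b'_{p'+1}$ followed by whatever comes after it, so the substring of $Y^*$ that $b_{p+1}$ gets transformed into, namely $s_{p+1}$, contains at least one full block $b'_{p'+1}$ as a suffix-side chunk. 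Hence $|s_{p+1}| \geq |b'_{p'+1}| + \cdots \geq (1+\alpha)|b_{p+1}|$ since $|b_{p+1}| < m$ is a proper prefix and $\alpha < 1$. Consequently $\ed(b_{p+1}, s_{p+1}) \geq |s_{p+1}| - |b_{p+1}| \geq \alpha|b_{p+1}|$.

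In more detail, the key steps in order are: (1) unpack the definition of "$b_p$ is bad" to get that $\mathcal{A}^*$ aligns $b_p$ completely to some $b'_{p'}$ in $Y^*$ with $b_p = b'_{p'}$; (2) observe that since $\mathcal{A}^*$ is nowhere-vertical, $p' \neq p$, and since bad blocks recover a nowhere-vertical common subsequence of a length-$p$ substring of the locally-self-matching string $w$ in an order-preserving way, we actually have $p' < p$ (the match cannot "jump forward"); (3) conclude that at least the full block $b'_{p'+1}$ of $Y^*$ lies after $b'_{p'}$ and is part of what $b_{p+1}$ is transformed into, because the alignment is monotone and everything right of $b_p$ maps right of $b'_{p'}$; (4) therefore $s_{p+1}$ contains a full block so $|s_{p+1}| \geq m + |\text{stuff after}| \geq (1+\alpha)|b_{p+1}|$ using $|b_{p+1}| < m$ and $\alpha < 1$; (5) finish with $\ed(b_{p+1}, s_{p+1}) \geq |s_{p+1}| - |b_{p+1}| \geq \alpha|b_{p+1}|$. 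This is essentially the reflection of the proof of the $b_0$/$b_1$ observation through the center of the strings, and one can make it formal by applying that earlier observation to the reversed strings $X^{*\mathcal{R}}$, $Y^{*\mathcal{R}}$, in which role $b_{p+1}$ plays the role of $b_0$ and $b_p$ plays the role of $b_1$.

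The main obstacle is step (2)–(3): carefully arguing that $b'_{p'}$ really is not the last full block of $Y^*$, i.e., that the full block $b_p$ cannot be matched to the terminal full block $b'_p$ of $Y^*$. The clean way is to note that since $Y^*$ also has exactly $p$ full blocks and $b_p = b'_{p'}$ are instantiations of the same codeword $c \in \mathcal{C}$, and since in $w$ (the $\varepsilon$-locally self-matching string used to build the embedding) the symbol at position $p$ occurs only where that same codeword was placed, the index $p'$ with $b'_{p'}$ an instantiation of $c$ satisfying the non-crossing constraints of $\mathcal{A}^*$ must have $p' \leq p$; and $p' = p$ is ruled out because that would make the match between $b_p$ and $b'_p$ vertical (same index in $X^*$ and $Y^*$), contradicting nowhere-verticality of $\mathcal{A}^*$. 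Hence $p' < p$, so $b'_{p'+1}$ exists and is a full block lying strictly after $b'_{p'}$, which is all we need. Once this positional bookkeeping is pinned down, the length estimate and the final inequality are immediate, mirroring the already-established observation for $b_0$.
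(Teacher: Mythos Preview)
Your proof is correct and follows essentially the same approach as the paper, which simply remarks that the proof is symmetric to the $b_0/b_1$ case. One small point: you overcomplicate the justification that $p' < p$. There is no need to invoke the locally self-matching property or non-crossing constraints here; the simple reason is that $Y^*$ has exactly $p$ full blocks $b'_1,\ldots,b'_p$, so $p' \in [p]$, and $p' = p$ is ruled out by nowhere-verticality of $\mathcal{A}^*$ (since $b_p$ and $b'_p$ occupy identical positions). Once $p' \le p-1$, the tail $s_{p+1}$ of $Y^*$ after $b'_{p'}$ contains $b'_p \circ b'_{p+1}$, giving $|s_{p+1}| \ge m + |b_{p+1}| \ge (1+\alpha)|b_{p+1}|$, and the conclusion follows.
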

\begin{obs}
    If $b_p$ is not a bad block, then $\ed(b_p\circ b_{p+1}, s_p\circ s_{p+1}) \geq \alpha(|b_p|+|b_{p+1}|)$.
\end{obs}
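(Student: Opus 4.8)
The plan is to mirror the proof of the companion statement for the leading pair $b_0\circ b_1$, swapping prefixes for suffixes throughout and replacing Property~\ref{prop: boundary-blocks-1} of misaligners by its reversed counterpart, Property~\ref{prop: boundary-blocks-2}. As in that observation (and the other boundary ones), I read ``$\ed(b_p\circ b_{p+1},s_p\circ s_{p+1})$'' as the cost that the fixed optimal nowhere-vertical alignment $\mathcal{A}^*$ pays on the last block-and-a-half $b_p\circ b_{p+1}$ of $X^*$, where $s_p\circ s_{p+1}$ is the suffix of $Y^*$ onto which $\mathcal{A}^*$ sends it.

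First I would dispose of the two easy length regimes: if $|s_p\circ s_{p+1}|\ge (1+\alpha)(|b_p|+|b_{p+1}|)$ the induced sub-alignment already contains at least $|s_p\circ s_{p+1}|-|b_p\circ b_{p+1}|\ge \alpha(|b_p|+|b_{p+1}|)$ insertions, and symmetrically if $|s_p\circ s_{p+1}|\le (1-\alpha)(|b_p|+|b_{p+1}|)$ it contains that many deletions. It then remains to handle the window in which $|s_p\circ s_{p+1}|$ differs from $|b_p\circ b_{p+1}|=m+|b_{p+1}|$ by less than $\alpha(|b_p|+|b_{p+1}|)$.

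In this window I would, exactly as in the $b_0\circ b_1$ case, use $\alpha\le \tfrac12$ together with $|b_{p+1}|<m=|b_p|$ to locate $s_p\circ s_{p+1}$ among the last blocks of $Y^*$: since $Y^*$ ends with $b'_{p-1}\circ b'_p\circ b'_{p+1}$, whose length $2m+|b_{p+1}|$ is at least $(1+\alpha)(m+|b_{p+1}|)$, the suffix $s_p\circ s_{p+1}$ of $Y^*$ is in fact a suffix of $b'_{p-1}\circ b'_p\circ b'_{p+1}$, and since its length exceeds $m$ it must contain all of $b'_p$ and $b'_{p+1}$, hence has the form $(\text{suffix of }b'_{p-1})\circ b'_p\circ b'_{p+1}$. (If $b_p$ were a bad block it would be matched block-to-block to some far-away $b'_{i'}$, which would push $|s_p\circ s_{p+1}|$ into the first regime; the hypothesis rules this out, matching the role of ``$b_1$ not bad'' in the companion argument.) Consequently $b_p\circ b_{p+1}$ is an instantiation of a string $s:=c_p\circ(\text{proper prefix of }c_{p+1})$ and $s_p\circ s_{p+1}$ is an instantiation of a string $s'$ obtained by prepending a suffix of $c_{p-1}$ to $s$ --- precisely the configuration of Property~\ref{prop: boundary-blocks-2}. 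To invoke it, I must check that the restriction of $\mathcal{A}^*$ between $b_p\circ b_{p+1}$ and its counterpart $b'_p\circ b'_{p+1}$ inside $s_p\circ s_{p+1}$ is nowhere-vertically aligned in the sense the property demands: both $b_p\circ b_{p+1}$ and $b'_p\circ b'_{p+1}$ occupy exactly the last $m+|b_{p+1}|$ coordinates of $X^*$ and of $Y^*$, so matching the $i$-th symbol of the one to the $i$-th symbol of the other would be a vertical edge of $\mathcal{A}^*$ on $X^*,Y^*$, which is excluded. Reversing both strings turns this into the literal hypothesis of Property~\ref{prop: boundary-blocks-2}, giving that the restricted alignment costs at least $\alpha|s|=\alpha(|b_p|+|b_{p+1}|)$.

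The step I expect to need the most care is the bookkeeping inside the short-image window: checking that $s_p\circ s_{p+1}$ genuinely sits inside $b'_{p-1}\circ b'_p\circ b'_{p+1}$ as an honest suffix rather than slicing into $b'_p$, confirming that passing to reverses preserves the ``nowhere-vertical with respect to the counterpart'' condition, and making sure any residual sub-case where $s_p\circ s_{p+1}$ is a proper suffix of $b'_p\circ b'_{p+1}$ (hence shorter than $b_p\circ b_{p+1}$) is already absorbed by the deletion count from the first step. The rest is a routine transcription of the leading-block argument under the reversal symmetry $X^*\mapsto (X^*)^{\mathcal R}$, $Y^*\mapsto (Y^*)^{\mathcal R}$.
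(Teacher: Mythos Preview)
Your proposal is correct and follows essentially the same route as the paper, which simply says the statement holds ``with essentially the same proofs'' as the companion observation for $b_0\circ b_1$: split on whether $|s_p\circ s_{p+1}|\ge(1+\alpha)(|b_p|+|b_{p+1}|)$, and otherwise use $\alpha\le\tfrac12$ to locate $s_p\circ s_{p+1}$ inside $b'_{p-1}\circ b'_p\circ b'_{p+1}$ and invoke Property~\ref{prop: boundary-blocks-2}. Your explicit verification of the nowhere-vertical condition via the shared last $m+|b_{p+1}|$ coordinates, together with the reversal, is exactly the bookkeeping the paper suppresses; the extra short-image case you introduce is harmless but not something the paper separates out.
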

We can now give a lower bound for $\cost(\mathcal{A}^*)$. We claim the following.
\begin{claim}
$\cost(\mathcal{A}^*) \geq (1-\varepsilon)\alpha |I|$
\end{claim}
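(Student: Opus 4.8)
The plan is to sum the per-block and per-partial-block lower bounds established by the preceding observations, being careful about how the bad blocks interact with the partial block bounds at the two ends. Concretely, I would split into cases based on whether $b_1$ and $b_p$ are bad, and within the middle, count how many of $b_1, \ldots, b_p$ are bad versus good or overworked.

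\textbf{Main computation.} Recall $\cost(\mathcal{A}^*) \geq \sum_{i=0}^{p+1}\ed(b_i, s_i)$, and by the grouping in the observations we can instead lower bound $\ed(b_0,s_0)+\ed(b_1,s_1)+\cdots+\ed(b_p,s_p)+\ed(b_{p+1},s_{p+1})$ by grouping $b_0$ with $b_1$ (when $b_1$ is good) and $b_p$ with $b_{p+1}$ (when $b_p$ is good). Let $B$ be the number of bad blocks among $b_1,\dots,b_p$; we know $B \leq \varepsilon p$. Every non-bad block among $b_1,\ldots,b_p$ is good or overworked and hence contributes at least $\alpha m$ to $\cost(\mathcal{A}^*)$; that already gives $\cost(\mathcal{A}^*) \geq (p-B)\alpha m \geq (1-\varepsilon)p\alpha m$. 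Now I need to relate $(1-\varepsilon)p\alpha m$ to $(1-\varepsilon)\alpha|I|$. Since $|I| = |X^*| = |b_0| + pm + |b_{p+1}|$ with $|b_0|, |b_{p+1}| \leq m-1$, this is \emph{not} immediate: $pm$ alone could be strictly less than $|I|$. So the partial-block observations are genuinely needed: when $b_1$ is bad, Observation on $b_0$ gives $\ed(b_0,s_0) \geq \alpha|b_0|$, and when $b_1$ is good, the combined observation gives $\ed(b_0\circ b_1, s_0\circ s_1)\geq \alpha(|b_0|+|b_1|)$, in which case I should \emph{not} also count $b_1$ separately — I fold $b_1$'s $\alpha m$ into this grouped bound. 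Either way I can extract $\alpha|b_0|$ from the left end and $\alpha|b_{p+1}|$ from the right end on top of $\alpha m$ for each of $b_1,\ldots,b_p$ that I haven't already folded in.

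\textbf{Careful bookkeeping.} The cleanest way: define the contribution to be $\alpha$ times the "charged length" of each position. Charge each bad block $b_i$ ($1\le i\le p$) zero, charge each good/overworked block $b_i$ its full length $m$, charge $b_0$ its length $|b_0|$ using either the $b_0$-alone observation (if $b_1$ bad) or the grouped $b_0\circ b_1$ observation (if $b_1$ good, noting this grouped bound is $\ge \alpha(|b_0|+|b_1|) \ge \alpha|b_0| + \alpha m$, so it covers both the $b_0$ charge and the $b_1$ charge), and symmetrically for $b_{p+1}$. Summing, $\cost(\mathcal{A}^*) \geq \alpha\big(|b_0| + |b_{p+1}| + \sum_{i=1}^p [b_i \text{ not bad}]\cdot m\big) \geq \alpha\big(|b_0| + |b_{p+1}| + (p - \varepsilon p)m\big)$. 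Since $|b_0|+|b_{p+1}|\geq (1-\varepsilon)(|b_0|+|b_{p+1}|)$ trivially (as $1-\varepsilon \le 1$ and the quantity is nonnegative), we get $\cost(\mathcal{A}^*) \geq (1-\varepsilon)\alpha\big(|b_0| + pm + |b_{p+1}|\big) = (1-\varepsilon)\alpha|I|$, which is the claim.

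\textbf{Anticipated obstacle.} The delicate point is the double-counting at the two boundaries: I must ensure that when $b_1$ (resp. $b_p$) is good I invoke \emph{only} the grouped observation $\ed(b_0\circ b_1, s_0\circ s_1)\geq \alpha(|b_0|+|b_1|)$ and not separately the per-block bound for $b_1$, since $s_0$ and $s_1$ together form the image and the costs $\ed(b_0,s_0)+\ed(b_1,s_1)$ need not individually be bounded the way the grouped quantity is — actually $\cost(\mathcal{A}^*)\geq \ed(b_0,s_0)+\ed(b_1,s_1)+\cdots$ and it is a separate fact that this is $\ge \ed(b_0\circ b_1,s_0\circ s_1)+\cdots$ (subadditivity of edit cost under concatenation of the induced alignments), which I should state explicitly. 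Also one should double-check the edge case $p=2$ where $b_1=b_p$ and the left and right groupings might overlap; but since $b_1$ and $b_p$ are distinct when $p\ge 2$... wait, here $p\ge 2$ and if $p=2$ then $b_1\ne b_2$ are the two distinct end blocks, so the left grouping uses $b_0,b_1$ and the right uses $b_2,b_3$, which are disjoint — fine. With these checks in place the inequality chain goes through.
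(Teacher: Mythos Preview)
Your proof is correct and follows essentially the same approach as the paper: sum the per-block lower bounds from the preceding observations, treat the partial boundary blocks via the grouped observations, use the bad-block count $B\le \varepsilon p$, and finish with $|b_0|+|b_{p+1}|\ge (1-\varepsilon)(|b_0|+|b_{p+1}|)$. The only cosmetic difference is that the paper writes out four explicit cases depending on whether $b_1$ and $b_p$ are bad, whereas your unified charging argument handles all cases at once; the underlying inequalities are identical.
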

\begin{proof}
    \renewcommand\qedsymbol{$\lrcorner$}
    We break the proof into four cases depending on whether or not $b_1$ and $b_{p}$ are bad.
    
    \begin{itemize}
        \item \textbf{Case 1: Both $b_1$ and $b_p$ are bad.}

        In this case, we have:\allowdisplaybreaks
        \begin{align*}
            \cost(\mathcal{A}^*) &\geq \sum_{i=0}^{p+1}\ed(b_i, s_i)\\
            &= \ed(b_0, s_0) + \ed(b_{p+1}, s_{p+1}) + \sum_{i=1}^{p}\ed(b_i, s_i) \\
            &\geq \alpha|b_0| + \alpha|b_{p+1}| + \left(\sum_{\substack{{b_i \textnormal{ is bad}}\\{1\leq i \leq p}}}\ed(b_i, s_i) + \sum_{\substack{{b_i \textnormal{ is not bad}}\\{1\leq i \leq p}}}\ed(b_i, s_i)\right) \\
            &\geq \alpha(|b_0|+|b_{p+1}|) + (0 + (1-\varepsilon)p\alpha m)\\
            &\geq (1-\varepsilon)\alpha(|b_0|+|b_{p+1}|) + (1-\varepsilon)p\alpha m\\
            &= (1-\varepsilon)\alpha(|b_0|+|b_{p+1}|+pm)\\
            &= (1-\varepsilon)\alpha|I|
        \end{align*}
        \item \textbf{Case 2: $b_p$ is bad but $b_1$ is not.}

        In this case, we have:
        \begin{align*}
            \cost(\mathcal{A}^*) &\geq \sum_{i=0}^{p+1}\ed(b_i, s_i)\\
            &=\ed(b_0\circ b_1, s_0\circ s_1) + \ed(b_{p+1}, s_{p+1}) + \sum_{i=2}^{p}\ed(b_i, s_i)\\
            &\geq \alpha(|b_0|+|b_1|) + \alpha|b_{p+1}| + \left(\sum_{\substack{{b_i \textnormal{ is bad}}\\{2\leq i \leq p}}}\ed(b_i, s_i) + \sum_{\substack{{b_i \textnormal{ is not bad}}\\{2\leq i \leq p}}}\ed(b_i, s_i)\right) \\
            &\geq \alpha(|b_0|+b_{p+1}) + \alpha |b_1| + (0+((1-\varepsilon)p-1)\alpha m)\\
            & = \alpha(|b_0|+b_{p+1}) + (1-\varepsilon)p\alpha m\\
            &\geq (1-\varepsilon)\alpha(|b_0|+|b_{p+1}|) + (1-\varepsilon)p\alpha m\\
            &= (1-\varepsilon)\alpha(|b_0|+|b_{p+1}|+pm)\\
            &= (1-\varepsilon)\alpha|I|
        \end{align*}
        \item \textbf{Case 3: $b_1$ is bad but $b_p$ is not.}
        
        This case is essentially identical to Case 2.
        \item \textbf{Case 4: Neither $b_1$ nor $b_p$ are bad.}

        Again, we have:
        \begin{align*}
            \cost(\mathcal{A}^*) &\geq \sum_{i=0}^{p+1}\ed(b_i, s_i)\\
            &=\ed(b_0\circ b_1, s_0\circ s_1)+\ed(b_p\circ b_{p+1}, s_p\circ s_{p+1})+ \sum_{i=2}^{p-1}\ed(b_i, s_i)\\
            &\geq \alpha(|b_0|+|b_1|) +\alpha(|b_p|+|b_{p+1}|)+ \left(\sum_{\substack{{b_i \textnormal{ is bad}}\\{2\leq i \leq p-1}}}\ed(b_i, s_i) + \sum_{\substack{{b_i \textnormal{ is not bad}}\\{2\leq i \leq p-1}}}\ed(b_i, s_i)\right)\\
            &\geq \alpha(|b_0|+|b_1|) +\alpha(|b_p|+|b_{p+1}|) + (0+((1-\varepsilon)p-2)\alpha m)\\
            & = \alpha(|b_0|+|b_{p+1}|) + (1-\varepsilon)p\alpha m\\
            &\geq (1-\varepsilon)\alpha(|b_0|+|b_{p+1}|) + (1-\varepsilon)p\alpha m\\
            &= (1-\varepsilon)\alpha(|b_0|+|b_{p+1}|+pm)\\
            &= (1-\varepsilon)\alpha|I|
        \end{align*}
    \end{itemize}
\end{proof}
Now recall that we had $t\geq \frac{1}{(1-\varepsilon)\alpha -\frac{1}{3m-1}}$. Since $|I|\geq 3m-1$, this means $t\geq \frac{1}{(1-\varepsilon)\alpha -\frac{1}{|I|}}$. Multiplying and rearranging, we get:
\[\cost(\mathcal{A}^*) \geq (1-\varepsilon)\alpha |I| \geq \frac{|I|}{t} +1\]
Since $\frac{|I|}{t}+1$ is an upper bound on $\ham(X_{|I}, Y_{|I})$, $\cost(\mathcal{A}^*) < \ham(X_{|I}, Y_{|I})$ cannot hold, and hence $\cost(\mathcal{A}_{|I}) < \ham(X_{|I}, Y_{|I})$ cannot hold --- a contradiction!
\end{proof}

\section{An Explicit Constant Rate Embedding for Binary Strings}
\label{sec:explicit-const}

In this section, we give an isometric embedding of the Hamming metric into the edit metric with rate $\frac{1}{8}$, thus proving Theorem~\ref{thm:explicit}. The first step of the proof is showing the existence of a misaligner with a specific set of parameters.

\begin{lemma}
\label{lemma:misaligner-exists}
    A $(320, 676, 8, 0.1625)$-misaligner exists. 
\end{lemma}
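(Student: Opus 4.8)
The plan is to establish the existence of a $(320, 676, 8, 0.1625)$-misaligner via the probabilistic method combined with a computer search, following a two-phase strategy. First, I would sample a random candidate set $\mathcal{C} \subseteq \{0,1,\star\}^{320}$ of size $676$ by independently choosing each codeword to be a uniformly random binary string of length $320$, then forcibly planting a $\star$ at every position $i \equiv 1 \pmod 8$ (so each codeword has exactly $40$ wildcards, satisfying Property~\ref{prop:rate-of-wildcards} by construction). Since $676 = 26^2$, this choice is tailored so that the resulting misaligner can serve as the codeword set for a locally self-matching string over an alphabet of size $676$; one would check separately (via Theorem~\ref{thm:lll-string}) that a $0.224$-locally self-matching string exists over such an alphabet, and that $0.224 < 1 - \tfrac{8\cdot \text{something}}{\ldots}$ — more precisely that the parameters satisfy $t = 8 \ge \frac{1}{(1-\varepsilon)\alpha - \frac{1}{3m-1}}$ with $\varepsilon = 0.224$, $\alpha = 0.1625$, $m = 320$, which is the hypothesis needed to feed Lemma~\ref{lemma:misaligner-exists} into Theorem~\ref{thm:main}.

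The core of the argument is verifying Properties~\ref{prop:short-intervals}, \ref{prop: full-block-distance}, and \ref{prop: boundary-blocks}. For Property~\ref{prop:short-intervals} (Short Intervals), I would observe that a substring $s$ of $c_1 \circ c_2 \circ c_3$ of length at most $3m-2 = 958$ containing fewer than two full blocks has the property that $\ed(s_x, s_y) = \ham(s_x, s_y)$ iff no nowhere-vertical alignment of $s$ against a shift of itself can save cost; one reduces this to a finite check on the (few) possible substrings and relative shifts, verifying that any cheaper-than-Hamming alignment would force a long run of matched non-wildcard positions, which a random string avoids with overwhelming probability. For Properties~\ref{prop: full-block-distance} and~\ref{prop: boundary-blocks}, the key point is that these are all statements of the form "for every triple of distinct codewords and every substring $s$ of their concatenation, the (nowhere-vertical) edit distance between a designated block/partial-block structure and $s$ is at least $\alpha m = 52$." Each such condition, for a fixed choice of codewords and a fixed combinatorial "shape" of the alignment, is a deterministic edit-distance computation. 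A union bound over all $\binom{676}{3}$ (or $676^3$) triples, all $O(m)$ substring start positions, and all relevant wildcard instantiations shows that a random $\mathcal{C}$ satisfies every requirement simultaneously with positive probability, provided the alphabet size $k = 676$ and the codeword length $m = 320$ are large enough relative to $\alpha = 0.1625$ — the length $320$ is precisely what makes the exponential savings in the union bound beat the polynomial number of bad events.

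The main obstacle is twofold. On the analytic side, one must carefully set up the union bound so that the failure probability for each individual bad event decays like $2^{-\Omega(m)}$ uniformly; the delicate cases are the boundary-block conditions (Property~\ref{prop: boundary-blocks}), where $s$ is a concatenation of a proper suffix of one codeword with a full codeword (and possibly a prefix of a third), and where the alignment is only constrained to be nowhere-vertical on the full-block portion — here the "free" endpoints give an adversary more room, so the distance lower bound of $\alpha |s|$ is closest to tight and the probabilistic margin is thinnest. On the computational side, since the stated parameters are very specific (and the rate $0.1625 = 13/80$ is not obviously achievable by a clean closed-form argument), the actual witness $\mathcal{C}$ is found by computer search: one generates random candidates and runs an edit-distance verifier over all the polynomially many conditions, retrying until a valid misaligner is found (the probabilistic argument guarantees this terminates). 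I would present the probabilistic bound to justify that such a search succeeds, then defer the explicit verification of the found misaligner to a computational appendix or accompanying code, noting that every individual check is a polynomial-time edit-distance computation on strings of length $O(m)$.
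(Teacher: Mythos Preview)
Your proposal and the paper's proof agree on the essential point: the lemma is established by exhibiting an explicit witness found by computer search, with verification code supplied. The paper's proof is in fact nothing more than that---it simply says ``we found one; here is the code''---and the accompanying appendix describes a structured incremental search (precompute edit-distance statistics on random strings to set thresholds, then greedily add candidates passing those filters, then verify the remaining properties).

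Where your proposal diverges is in trying to supply a probabilistic existence argument to justify that the search terminates. This is a reasonable instinct, but as written it is only a heuristic, and there are real gaps. First, your treatment of Property~\ref{prop:short-intervals} is imprecise: the condition must hold for \emph{every} substring of length at most $3m-2=958$ (not only those ``containing fewer than two full blocks''), and it must hold for \emph{all} pairs of wildcard instantiations simultaneously---this is a worst-case-over-$2^{O(m/t)}$-instantiations requirement, not a single edit-distance check, and it is far from clear that a uniformly random codeword satisfies it with probability close to one. Second, you assert that each bad event fails with probability $2^{-\Omega(m)}$ and that this beats the polynomial number of events, but you never pin down the constants; with $676^3$ triples, $O(m)$ offsets, and the wildcard quantifier inside the distance definition, the margin at $\alpha=0.1625$ is thin enough that this cannot be taken on faith. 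The paper sidesteps all of this by not attempting a probabilistic proof at all---and notably its search is \emph{not} naive rejection sampling but a filtered incremental construction, suggesting that pure uniform sampling of $676$ codewords at once may not succeed easily.

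In short: your final deliverable (a computer-verified witness) matches the paper's, but the probabilistic scaffolding you propose is not a proof as stated and would require substantial additional work to make rigorous---work the paper simply does not do.
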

\begin{proof}
We show the existence of a $(320, 676, 8, 0.1625)$-misaligner via a computer search. The elements of the misaligner and the code used to find them can be seen in \cite{Code}. A high-level description of the code can also be found in Appendix~\ref{sec:misaligner-code-description}.  
\end{proof}


By Theorem~\ref{thm:main}, in order to obtain a rate $\frac{1}{8}$-isometric embedding from a $(320, 676, 8, 0.1625)$-misaligner, it suffices to have a $0.224$-locally self-matching string over an alphabet of size 676. Theorem~\ref{thm:lll-string} guarantees the existence of a $0.224$-locally self-matching string over an alphabet of size only 553. Thus, we are done with room to spare!

\begin{theorem}
    \label{thm:rate-1/8}
    For every positive integer $n$, there exists an isometric embedding $\varphi_n: \{0, 1\}^n \to \{0, 1\}^{8n}$ of the Hamming metric into the edit metric.
\end{theorem}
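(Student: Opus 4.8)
The plan is to simply combine the three main ingredients assembled in the preceding sections. By Lemma~\ref{lemma:misaligner-exists}, there exists a $(320, 676, 8, 0.1625)$-misaligner $\mathcal{C}$, so we have $m = 320$, $k = 676$, $t = 8$, and $\alpha = 0.1625$. To invoke Theorem~\ref{thm:main} with this misaligner, we need a $\varepsilon$-locally self-matching string $w$ over an alphabet of size $k = 676$ such that the key inequality $t \geq \frac{1}{(1-\varepsilon)\alpha - \frac{1}{3m-1}}$ holds. First I would solve this inequality for the required $\varepsilon$: plugging in $t = 8$, $\alpha = 0.1625$, and $3m-1 = 959$, we need $(1-\varepsilon)\cdot 0.1625 - \frac{1}{959} \geq \frac{1}{8}$, i.e., $(1-\varepsilon) \geq \frac{1}{0.1625}\left(\frac{1}{8} + \frac{1}{959}\right)$, which works out to roughly $\varepsilon \leq 0.22$-something; the text asserts $\varepsilon = 0.224$ suffices, and I would just verify this arithmetic directly.

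Next I would check that such a locally self-matching string actually exists over an alphabet of the available size. By Theorem~\ref{thm:lll-string}, a $\varepsilon$-locally self-matching string of any length exists over any alphabet $\Sigma$ with $|\Sigma| \geq \frac{e^2}{\varepsilon^2}(1 + 4\sqrt[4]{\varepsilon})$. Substituting $\varepsilon = 0.224$, I would compute $\frac{e^2}{0.224^2}(1 + 4\cdot 0.224^{1/4})$ and confirm it is at most $676$ — the text claims the bound is about $553$, comfortably below $676$, so there is slack. This hands us a $0.224$-locally self-matching string $w$ over an alphabet of size $676$ (we may take exactly $676$ symbols and the bijection $\sigma$ to $\mathcal{C}$ as in Section~\ref{sec:embedding}).

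With $\mathcal{C}$ and $w$ in hand and the inequality $t \geq \frac{1}{(1-\varepsilon)\alpha - \frac{1}{3m-1}}$ verified, Theorem~\ref{thm:main} directly gives that for every positive integer $n$ and all $x, y \in \{0,1\}^n$, $\ed(\varphi_{\mathcal{C}, w, n}(x), \varphi_{\mathcal{C}, w, n}(y)) = \ham(x, y)$, where $\varphi_{\mathcal{C}, w, n}: \{0,1\}^n \to \{0,1\}^{tn} = \{0,1\}^{8n}$ is the embedding constructed in Section~\ref{sec:embedding}. This is exactly an isometric embedding of the Hamming metric into the edit metric with rate $\frac{1}{8}$, completing the proof. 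The only genuinely non-trivial content has already been discharged: the computer-search construction of the misaligner (Lemma~\ref{lemma:misaligner-exists}) and the tightened Lovász-Local-Lemma analysis (Theorem~\ref{thm:lll-string}); the main obstacle here is purely bookkeeping — making sure the numerical inequality relating $t$, $m$, $\alpha$, and $\varepsilon$ is satisfied simultaneously with the alphabet-size bound, which it is, with room to spare.
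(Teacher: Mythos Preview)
Your proposal is correct and follows exactly the same approach as the paper: combine the $(320,676,8,0.1625)$-misaligner from Lemma~\ref{lemma:misaligner-exists} with a $0.224$-locally self-matching string over a $676$-letter alphabet (whose existence is guaranteed by Theorem~\ref{thm:lll-string}, which only needs alphabet size about $553$), verify the numerical inequality in Theorem~\ref{thm:main}, and conclude. The paper's argument is precisely this bookkeeping, stated in a single paragraph before the theorem.
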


\section{Embeddings for Strings over Larger Alphabets}

\label{sec:close-to-1/3}

So far, we have focused entirely on finding the best possible rate of an isometric embedding that maps binary strings to binary strings. However, it is natural to ask what happens if we consider larger alphabets. Let $\Sigma$ be an arbitrary alphabet, and consider an isometric embedding $\varphi:\Sigma^n \to \Sigma^N$ of the Hamming metric into the edit metric.  What is the highest rate $\varphi$ could possibly have?

As in the binary case, we can try to tackle this question by using misaligners for larger alphabets. In fact, it is not hard to see that misaligners designed for the binary alphabet also function as misaligners for larger alphabets, without any loss in parameters. Therefore, increasing the alphabet size can only improve the achievable rate.

In this section, we show that even without relying on the full machinery of misaligners, one can obtain isometric embeddings with rates arbitrarily close to $\frac{1}{3}$ by making the alphabet large enough.

\begin{theorem}
\label{thm:rate-close-to-1/3}
    For any $\rho>0$, there exists an alphabet $\Sigma$ and a family of functions $\varphi_n:\Sigma^n\to\Sigma^N$ for each positive integer $n$ such that the following hold.
    \begin{enumerate}
        \item For every $n$, $\varphi_n$ is an isometric embedding of the Hamming metric into the edit metric, i.e., for all $x, y\in \Sigma^n$, we have $\ed\left(\varphi_n(x), \varphi_n(y)\right) = \ham(x, y)$.
        \item The rate of $\varphi_n$ is at least $\frac{1}{3}-\rho$, i.e., $\frac{n}{N}\geq \frac{1}{3}-\rho$.
    \end{enumerate}
\end{theorem}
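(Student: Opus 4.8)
The plan is to mimic the binary construction from the earlier sections but exploit the freedom of a large alphabet to bypass misaligners entirely. Fix $\rho > 0$ and choose a parameter $t$ (the "block length") large enough that $\frac{1}{t}$ is close to $\frac{1}{3}$ in the way we will need; concretely we aim for an embedding $\varphi_n : \Sigma^n \to \Sigma^N$ with $N \approx 3n$ up to lower-order terms, so we want something like $N = (3 + O(\rho))n$. The idea is to take the input string $x = x[1]\cdots x[n] \in \Sigma^n$ and interleave each input symbol with two \emph{fresh distinct marker symbols} drawn from a part of the alphabet disjoint from the symbols actually carrying information. That is, partition $\Sigma$ into a "data" part $\Sigma_{\mathrm{data}}$ and a "marker" part $\Sigma_{\mathrm{mark}}$, and define the output to be a string of the form $a_1 \, x[1] \, b_1 \, a_2 \, x[2] \, b_2 \, \cdots \, a_n \, x[n] \, b_n$ (or some similar pattern of period $3$), where the $a_i, b_i$ are markers chosen so that \emph{every marker symbol in the entire output string is distinct from every other marker symbol}. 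This requires $|\Sigma_{\mathrm{mark}}| \geq 2n$, which is fine since we are allowed to let $\Sigma$ grow with $n$ (we get one family per $n$, or if we want a single alphabet we note the rate bound only needs $|\Sigma|$ large relative to $1/\rho$ — I'll address this below).

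The core claim is that this interleaving is isometric. One direction is trivial: the identity-style alignment that matches $x[i]$ to $y[i]$ and each marker to the corresponding marker shows $\ed(\varphi_n(x), \varphi_n(y)) \leq \ham(x,y)$. For the reverse inequality, take any optimal edit alignment $\mathcal{A}$ between $\varphi_n(x)$ and $\varphi_n(y)$ and argue, as in the proof of Theorem~\ref{thm:main}, that it decomposes into maximal vertical and maximal nowhere-vertical intervals; it suffices to show that on any nowhere-vertical interval $I$ the cost $\cost(\mathcal{A}_{|I})$ is at least $\ham(\varphi_n(x)_{|I}, \varphi_n(y)_{|I})$. The key point is that because \emph{all} marker symbols across the whole string are distinct, a nowhere-vertical alignment can never match any marker of $\varphi_n(x)$ to any marker of $\varphi_n(y)$ — the only equal markers are at the same index, which is forbidden. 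Hence every marker position in the interval is either deleted, inserted, or matched to a data symbol of the other string (a substitution, since $\Sigma_{\mathrm{data}} \cap \Sigma_{\mathrm{mark}} = \emptyset$), so every marker position contributes at least one to the cost. A counting argument — there are $2$ markers per period of $3$, and the Hamming cost of the interval is at most one per period (only the data symbols can disagree) — then shows the nowhere-vertical cost dominates the Hamming cost. This is exactly the "short intervals" / synchronization-string intuition from Section~\ref{sec:sync-proof-overview}, but now made trivial by having globally-unique markers instead of a cleverly-chosen synchronization string.

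The main obstacle is the interaction between the "all markers distinct" requirement and the requirement that the \emph{same} alphabet $\Sigma$ work for all $n$ — if $|\Sigma_{\mathrm{mark}}|$ must be $\geq 2n$, no fixed alphabet suffices. The fix is to not demand global distinctness but rather to borrow the synchronization-string / locally-self-matching-string machinery of Section~\ref{sec:locally-self-matching}: replace the sequence of marker pairs by (the symbols of) an $\varepsilon$-locally self-matching string $w$ over a \emph{fixed} alphabet $\Sigma_{\mathrm{mark}}$ of size $\Theta(1/\varepsilon^2)$, padding each data symbol with $\Theta(1/\text{rate})$ consecutive symbols of $w$ so that every window of the marker track is "almost" collision-free. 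Then a nowhere-vertical alignment on an interval $I$ can match at most an $\varepsilon$-fraction of the marker track to itself, and the same counting argument as in Theorem~\ref{thm:main} (now with period $3/(1-\varepsilon)$ or so, and choosing $\varepsilon = \Theta(\rho)$) yields cost $\geq (1-\varepsilon)\cdot(\text{Hamming cost})\cdot(\text{something})$, which beats the Hamming cost once the period is pushed slightly above $3$. Choosing the number of padding symbols to be $\lceil 2/(1-\varepsilon) \rceil$ or so gives rate $\frac{1}{1 + \lceil 2/(1-\varepsilon)\rceil} \to \frac13$ as $\varepsilon \to 0$, and the fixed alphabet is $\Sigma = \Sigma_{\mathrm{data}} \sqcup \Sigma_{\mathrm{mark}}$ with $|\Sigma_{\mathrm{mark}}| = \Theta(1/\rho^2)$; since the data part only needs two symbols to witness nontrivial distances, any $\Sigma$ of size $\Omega(1/\rho^2)$ works (and one checks that the rate with respect to $|\Sigma|$-ary strings is unaffected, since we measure rate as $n/N$ here, both over the same $\Sigma$). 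The only technical care needed is the boundary handling of partial blocks at the ends of $I$, which is handled exactly as the $b_0, b_{p+1}$ analysis in the proof of Theorem~\ref{thm:main}.
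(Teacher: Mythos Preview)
Your overall strategy---interleave each input symbol with roughly two marker symbols drawn from an $\varepsilon$-locally self-matching string over a fixed alphabet, then reduce to bounding the cost on a maximal nowhere-vertical interval---is exactly the paper's approach. But two concrete steps in your sketch do not go through as written.

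First, the partition $\Sigma = \Sigma_{\mathrm{data}} \sqcup \Sigma_{\mathrm{mark}}$ is incompatible with the theorem: the input is an \emph{arbitrary} $x\in\Sigma^n$, so $x[i]$ may itself lie in $\Sigma_{\mathrm{mark}}$, and then a marker position in $\varphi_n(x)$ can match a data position in $\varphi_n(y)$ at zero cost. Your claim that ``every marker position contributes at least one to the cost'' therefore fails, and disjointness buys you nothing. The paper does not partition $\Sigma$ at all: it lets the marker string $w$ live over the full alphabet $\Sigma$, and instead observes that any nowhere-vertical common subsequence of $X_{|I},Y_{|I}$ yields a nowhere-vertical self-matching of the marker substring $s$ after discarding every pair touching a data position---at most $2k$ such pairs, where $k\le R|I|+1$ is the number of data positions in $I$. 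Combining this with $\slcs(s,s)<\frac{\varepsilon}{2}|s|$ gives the inequality $\varepsilon(1-R) > 1-3R$, which is the contradiction once one sets $\varepsilon=\tfrac{1-3R}{1-R}$.

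Second, your rate calculation is off: for every $\varepsilon>0$ one has $\lceil 2/(1-\varepsilon)\rceil = 3$, so uniform integer padding gives rate exactly $\tfrac14$, never approaching $\tfrac13$. To realize an arbitrary rational rate $R=\tfrac13-\rho$ you must pad \emph{non-uniformly}: the paper writes $R=a/b$, sets $q=\lfloor b/a\rfloor$ and $r=b-aq$, and inserts $q-1$ marker symbols after most input symbols but $q-1+r$ after every $a$-th one, so that the average padding is exactly $\tfrac{1}{R}-1$. Both fixes are routine once seen; your underlying plan is correct. (The appeal to the $b_0,b_{p+1}$ boundary analysis of Theorem~\ref{thm:main} is also more than you need---the paper's argument here is a direct $\slcs$ calculation with no block structure.)
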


\begin{proof}
Fix any $\rho>0$ and define $R:= \frac{1}{3}-\rho$. Let $\varepsilon := \frac{1-3R}{1-R}$ and $\varepsilon'=\frac{\varepsilon}{2}$. We will choose $\Sigma=\left[\left\lceil\frac{5e^2}{\varepsilon'^2}\right\rceil\right]$ as our alphabet.

Fix any positive integer $n$. Our embedding map $\varphi_n:\Sigma^n\to \Sigma^N$ is defined as follows. First, note that we may assume without loss of generality that $R$ is rational, so let $R=\frac{a}{b}$ for coprime positive integers $a$ and $b$. Define $q:= \lfloor\frac{b}{a}\rfloor$, $r:= b-aq$, and $m:= \lfloor\frac{n}{a}\rfloor$. 

Next, fix any $\varepsilon'$-locally self-matching string $w$ of length $n':=m(q-1+r)+(n-m)(q-1)$ over $\Sigma$. Since $R\leq \frac{1}{3}$, we have $\varepsilon'\leq \frac{1}{2}$ and by Theorem~\ref{thm:lll-string}, $\Sigma$ is large enough for such a string $w$ to always exist. 

Now, partition $w$ into $n$ contiguous substrings $w^{(1)}, w^{(2)}, \ldots , w^{(n)}$ so that every $a^{\text{th}}$ substring in this sequence has length $(q-1+r)$ while all other substrings have length $(q-1)$. Since $m=\lfloor\frac{n}{a}\rfloor$, exactly $m$ of the $w^{(i)}$'s have length $(q-1+r)$ while the remaining have length $(q-1)$. Thus, the total length of the $w^{(i)}$'s indeed sums to $m(q-1+r)+(n-m)(q-1)=n'$. 

Finally, given any $x\in \Sigma^n$, define $\varphi_n(x)\in\Sigma^N$ as follows.
\[\varphi_n(x)=x[1]\circ w^{(1)}\circ x[2]\circ w^{(2)}\circ \cdots x[n]\circ w^{(n)}\]

Let us first show that the rate of $\varphi_n$, which is given by the expression $\frac{n}{n+n'}$ is indeed at least $R$.
\allowdisplaybreaks
\begin{align*}
    \frac{n}{n+n'} & = \frac{n}{n+m(q-1+r)+(n-m)(q-1)}\\
    &=\frac{n}{m(q+r)+(n-m)q}\\
    &= \frac{n}{mr+nq}\\
    &= \frac{n}{\lfloor \frac{n}{a}\rfloor\cdot r + nq}\\
    &\geq \frac{n}{\frac{n}{a}\cdot r + nq}\\
    & = \frac{a}{r+aq}\\
    & = \frac{a}{b}\\
    & = R
\end{align*}
In fact, it is not hard to see that not only is the rate $R$ globally but also locally, i.e., for any substring of the output string, roughly an $R$-fraction of the symbols come from the input string. We will need the following observation later in our analysis.

\begin{obs}
    \label{obs:input-density}
    For all $x\in\Sigma^n$ and for all substrings $w'$ of $\varphi_n(x)$, the number of symbols in $w'$ coming from $x$ is at most $R|w'|+1$ and at least \(R|w'|-1\).
\end{obs}

Now, we show that $\varphi_n$ is indeed isometric for every positive integer $n$. Suppose not; then there exists a positive integer $n$ and $x, y\in \Sigma^n$ such that $\ed(\varphi_n(x), \varphi_n(y))<\ham(x, y)$. Let $X:=\varphi_n(x), Y:=\varphi_n(y)$, and $\mathcal{A}$ be any optimal edit distance alignment between $X$ and $Y$. Since $\ham(x, y) = \ham(X, Y)$, by assumption, we have $\cost(\mathcal{A}) < \ham(X, Y)$. Similar to the proof of Theorem~\ref{thm:main}, consider the partition of $[N]$ into alternating maximal vertical and maximal nowhere-vertical intervals under $\mathcal{A}$. By the same argument as in the proof of Theorem~\ref{thm:main}, there must exist a nowhere-vertical interval $I$ in this partition such that $\cost(\mathcal{A}_{|I}) < \ham(X_{|I}, Y_{|I})$. 

Call a symbol in $X$ or $Y$ \textit{frozen} if it comes from the locally self-matching string $w$, and \textit{mutable}, otherwise. We can assume without loss of generality that \(X_{|I}\) and \(Y_{|I}\) both start with mutable symbols; otherwise, we can modify \(\mathcal{A}\) by aligning the first symbols of \(X_{|I}\) and \(Y_{|I}\), producing a new alignment that can only be cheaper. By the same argument, we may assume that \(X_{|I}\) and \(Y_{|I}\) also end in mutable symbols.

Finally, let \(s\) be the string obtained from \(X_{|I}\) by removing all mutable symbols from \(X_{|I}\).  We will now show that $\cost(\mathcal{A}_{|I})<\ham(X_{|I}, Y_{|I})$ cannot hold by considering two cases based on the length of $I$.
\begin{itemize}
    \item \textbf{Case 1: $|I|< \frac{1}{1-R}\left(\frac{8}{\varepsilon}+1\right)$.}

   Recall from the proof of Theorem~\ref{thm:lll-string} that our construction of the $\frac{\varepsilon}{2}$-locally self-matching string \(w\) guarantees that any
\(\theta := \frac{e^2}{(\varepsilon/2)^{7/4}}\) consecutive symbols of \(w\) are pairwise distinct. Moreover, since
\(
|I| < \frac{1}{1-R}\left(\frac{8}{\varepsilon}+1\right) \le \theta
\)
for all \(\varepsilon\in [0, 1]\), all frozen symbols in \(X_{|I}\) are distinct. Thus, \(\slcs(s, s)=0\). Let \(k\) be the number of mutable symbols in \(X_{|I}\) (or equivalently \(Y_{|I}\)). We analyze two subcases.
    \begin{itemize}
        \item \textbf{Case 1a: \(|I|\) is a multiple of 3.}
        
        In this case, \(k\leq \frac{|I|}{3}\) since in every three consecutive symbols in \(X_{|I}\), at most one is mutable. Given \(\cost(\mathcal{A}_{|I}) <\ham(X_{|I}, Y_{|I}) \leq k\leq\frac{|I|}{3}\), we deduce that \(\slcs(X_{|I}, Y_{|I}) > \frac{2|I|}{3}\). However, this would mean \(\slcs(s, s)\geq \slcs(X_{|I}, Y_{|I}) - 2k >0\), which is a contradiction.

        \item  \textbf{Case 1b: \(|I|\) is not a multiple of 3.}

        \begin{sloppypar}
            
        In this case, \(k\leq \lfloor\frac{|I|}{3}\rfloor+1\). Once again, by assumption, we have \(\cost(\mathcal{A}_{|I}) <\ham(X_{|I}, Y_{|I}) \leq k\leq\left\lfloor\frac{|I|}{3}\right\rfloor+1\). So, \(\slcs(X_{|I}, Y_{|I}) > |I|-\left\lfloor\frac{|I|}{3}\right\rfloor-1\). We now make use of the fact that both the first and last symbols of \(X_{|I}\) and \(Y_{|I}\) are mutable. Because of this, the first symbols of \(X_{|I}\) and \(Y_{|I}\) cannot both participate in a nowhere-vertical common subsequence. The same is true for the last symbols of \(X_{|I}\) and \(Y_{|I}\). Thus, \(\slcs(s, s)\geq \slcs(X_{|I}, Y_{|I}) - 2k -2 >|I|-3\lfloor\frac{|I|}{3}\rfloor-1\geq 0\), again yielding a contradiction.
        \end{sloppypar}
        
    \end{itemize}

\item \textbf{Case 2: $|I|\geq \frac{1}{1-R}\left(\frac{8}{\varepsilon}+1\right)$.}

By Observation~\ref{obs:input-density}, $(1-R)|I|+1\geq |s| \geq (1-R)|I|-1$. Since $s$ is a substring of a $\frac{\varepsilon}{2}$-locally self-matching string, we must have
    \(
    \slcs(s, s) \leq \frac{\varepsilon}{2} |s|.
    \)
    Furthermore, since $|I| \geq \frac{1}{1-R}\left(\frac{8}{\varepsilon}+1\right)$, we have $|s| \geq (1-R)|I|-1 \geq \frac{8}{\varepsilon}$. And for any \(|s|\geq \frac{8}{\varepsilon}\), we must have \(\frac{\varepsilon}{2}|s| \leq \varepsilon|s|-4\). Consequently,
    \begin{align}
    \slcs(s, s) \leq \frac{\varepsilon}{2} |s| \leq \varepsilon|s|-4 \leq \varepsilon((1-R)|I|+1)-4 \leq \varepsilon(1-R)|I| - 3 \label{ineq:slcs-ub}
    \end{align}
    Meanwhile, since $\cost(\mathcal{A}_{|I}) < \ham(X_{|I}, Y_{|I}) \leq R|I| + 1$, it follows that
    \(
    \slcs(X_{|I}, Y_{|I}) > (1-R)|I| - 1.
    \)
    So, by the same argument as before,
    \begin{align}
    \slcs(s, s) > (1-R)|I| + 1 - 2(R|I| + 1) > (1-3R)|I| - 3 \label{ineq:slcs-lb}
    \end{align}
    Combining inequalities (\ref{ineq:slcs-ub}) and (\ref{ineq:slcs-lb}) yields:
    \[
    \varepsilon(1-R)|I| - 3 > (1-3R)|I| - 3,
    \]
    which simplifies to $\varepsilon > \frac{1-3R}{1-R}$, a contradiction.

\end{itemize}

\end{proof}

\begin{remark}
    Although Theorem~\ref{thm:rate-close-to-1/3} gives isometric embeddings with rates arbitrarily close to \(\frac{1}{3}\), these embeddings do not have the best rate vs. alphabet size tradeoff. In particular, Theorem~\ref{thm:rate-close-to-1/3} gives a rate \(\frac{1}{8}\) embedding over an alphabet of size 290. In contrast, Theorem~\ref{thm:rate-1/8} achieves the same rate using only a binary alphabet. Therefore, for a better rate-to-alphabet size tradeoff, one should opt for misaligner-based embeddings.
\end{remark}

\section{Embeddings without Interleaving?}
\label{sec:isometry-implies-interleaving}
For now, let us again return to considering only binary strings. Our embedding from Section~\ref{sec:embedding} is an example of what we might call an \textit{interleaved embedding}, where the output string is constructed by interleaving the input bits with sequences of bit strings that do not depend on the input. One might ask whether there exist isometric embeddings, potentially with better rates, that do not follow this interleaving framework. In this section, we show that the answer is no. In particular, we show that \textit{every} isometric embedding of the Hamming metric into the edit metric must be, in a sense, an interleaved embedding.

To prove this claim, one must formally define interleaved embeddings. Our definition is going to naturally extend the intuitive notion of interleaving input bits with fixed bit patterns by allowing two additional flexibilities. First, we will allow the input bits to appear \textit{out-of-order} --- e.g., the second input bit might appear after the first input bit in the output string. Second, we will allow some of the input bits to appear \textit{complemented} in the output string.
\begin{defi}[Interleaved Embedding]
\label{def:interleaved-embedding}
    A function $\varphi:\{0, 1\}^n \to \{0, 1\}^N$ with $N\geq n$ is called an {interleaved embedding} if there exist ---
    \begin{itemize}
        \item an injective function $\eta : [n] \to [N]$,
        \item a collection of functions $\pi_1, \pi_2, \ldots , \pi_n: \{0, 1\} \to \{0, 1\}$ such that for each $i\in[n]$, $\pi_i(b) =b $ or $\pi_i(x)=1-{b}$ for all $b\in \{0, 1\}$, and
        \item a fixed string $w\in \{0, 1\}^{N-n}$,
    \end{itemize}
    such that for all $x\in \{0, 1\}^n$, if $X=\varphi(x)$, then the following hold.
    \begin{itemize}
        \item For all $i\in [n]$, $X[\eta(i)]= \pi_i(x[i])$.
        \item Let $S\subseteq [N]$ be the set of indices that are not in the image of $\eta$, i.e., $S=\{j\in [N]: \nexists i \textnormal{ with } \eta(i) = j\}$. Then $X_{|S}=w$.
    \end{itemize}

\end{defi}
Definition~\ref{def:interleaved-embedding} captures all functions that interleave the input bits with sequences of fixed bit strings, following a Hamming distance preserving preprocessing of the input space. This preprocessing involves applying a fixed permutation to the input bits and flipping a chosen subset of them. We now show that this framework is expressive enough to represent all isometric embeddings of the Hamming metric into the edit metric.

\begin{theorem}
    \label{thm:isometry-implies-interleaving}
    If $\varphi: \{0, 1\}^n \to \{0, 1\}^N$ is an isometric embedding of the Hamming metric into the edit metric, then $\varphi$ is necessarily an interleaved embedding.
\end{theorem}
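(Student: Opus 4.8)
The plan is to follow the four-step outline sketched in Section~\ref{sec:overview-3}, turning each bullet there into a rigorous argument. Throughout, fix an isometric embedding $\varphi:\{0,1\}^n\to\{0,1\}^N$; note $N\geq n$ since $\varphi$ must be injective (distinct inputs are at Hamming distance $\geq 1$, hence their images are at edit distance $\geq 1$). The first step is to define, for each input string $x$, a map $\eta_x:[n]\to[N]$: flipping the $i^{\textnormal{th}}$ bit of $x$ moves $x$ to a string at Hamming distance $1$, so its image moves to a string at edit distance exactly $1$ from $\varphi(x)$; an edit-distance-$1$ pair of equal-length strings must differ in exactly one position (an insertion or deletion alone changes the length), so there is a unique coordinate flipped in $\varphi(x)$, which we call $\eta_x(i)$. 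Next I would show $\eta_x$ is injective: if $\eta_x(i)=\eta_x(i')$ with $i\neq i'$, then flipping both bits $i,i'$ of $x$ yields a string $x'$ with $\ham(x,x')=2$ whose image flips coordinate $\eta_x(i)$ twice, i.e.\ $\varphi(x')=\varphi(x)$, contradicting isometry. Since $|[n]|\leq|[N]|$ this is consistent, and in fact combined with the later steps it forces $\eta$ to be a genuine injection into $[N]$.

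The crux is the third step: showing $\eta_x$ does not depend on $x$. I would fix an index $i\in[n]$ and two inputs $x,y$ and induct on $\ham(x,y)$. The base case $\ham(x,y)=0$ is trivial. For the inductive step, it suffices to handle $\ham(x,y)=1$, say $x,y$ differ only in coordinate $j$; the general case follows by chaining through a path in the hypercube. If $j=i$ this is immediate (flipping $i$ in $x$ then in $y$ are inverse operations, so they must touch the same output coordinate). If $j\neq i$, consider the four strings $x$, $x^{(i)}$ (flip bit $i$), $y$, $y^{(i)}$ (flip bit $i$), which form a $2$-dimensional subcube. The images $\varphi(x),\varphi(x^{(i)}),\varphi(y),\varphi(y^{(i)})$ have pairwise edit distances $1,1,1,1$ around the $4$-cycle and $2,2$ on the diagonals (by isometry, since the Hamming distances are $1,1,1,1,2,2$). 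Now $\varphi(x)$ and $\varphi(x^{(i)})$ differ exactly at $\eta_x(i)$; $\varphi(y)$ and $\varphi(y^{(i)})$ differ exactly at $\eta_y(i)$; and $\varphi(x),\varphi(y)$ differ in some set of coordinates $D$ with $|D|=1$, as do $\varphi(x^{(i)}),\varphi(y^{(i)})$ (with coordinate set $D'$). The $4$-cycle of edit-distance-$1$ relations among four equal-length strings forces a combinatorial rigidity: walking $\varphi(x)\to\varphi(x^{(i)})\to\varphi(y^{(i)})\to\varphi(y)\to\varphi(x)$ flips coordinates $\eta_x(i), d', \eta_y(i), d$ (where $D=\{d\},D'=\{d'\}$) and must return to the start, so $\{\eta_x(i),d'\}=\{\eta_y(i),d\}$ as multisets. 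One then rules out the ``swapped'' possibility $\eta_x(i)=d'$ and $\eta_y(i)=d$: if $\eta_x(i)=d'$, then $\varphi(x)$ and $\varphi(x^{(i)})$ differ exactly at $d'$, and $\varphi(x^{(i)})$ and $\varphi(y^{(i)})$ also differ exactly at $d'$, so $\varphi(x)=\varphi(y^{(i)})$, contradicting that their edit distance is $2$. Hence $\eta_x(i)=\eta_y(i)$, completing the induction. Write $\eta$ for this common map.

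With $\eta$ established as a fixed injection, the final step is to show $\varphi(x)[\eta(i)]$ depends only on $x[i]$, and that $\varphi(x)_{|S}$ is constant, where $S=[N]\setminus\eta([n])$. For the first claim: flipping bit $i$ toggles $\varphi(x)[\eta(i)]$ and leaves all other coordinates in $\eta([n])$ fixed (since $\eta$ is injective and flipping $i$ only changes coordinate $\eta(i)$); also it leaves coordinates in $S$ fixed (those change only when some input bit is flipped, but flipping $i$ changes exactly coordinate $\eta(i)\in\eta([n])$). So walking along any path in the hypercube from the all-zeros string to $x$, the value at $\eta(i)$ flips exactly once per flip of bit $i$ and never otherwise; hence $\varphi(x)[\eta(i)]=\varphi(\mathbf{0})[\eta(i)]\oplus x[i]$, which is $\pi_i(x[i])$ for $\pi_i$ the identity or negation depending on $\varphi(\mathbf 0)[\eta(i)]$. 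For the second claim, the same observation shows every coordinate in $S$ is unchanged by every single-bit flip, hence unchanged by all of them, so $\varphi(x)_{|S}=\varphi(\mathbf 0)_{|S}=:w\in\{0,1\}^{N-n}$ is independent of $x$. This furnishes the data $(\eta,\pi_1,\dots,\pi_n,w)$ witnessing that $\varphi$ is an interleaved embedding in the sense of Definition~\ref{def:interleaved-embedding}, completing the proof.

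The main obstacle I anticipate is the rigidity argument in the third step: extracting from the $4$-cycle of edit-distance-$1$ constraints the conclusion $\{\eta_x(i),d'\}=\{\eta_y(i),d\}$ and then eliminating the swapped case. The key enabling fact is that any two equal-length strings at edit distance $1$ differ in exactly one coordinate, so each edit-distance-$1$ step is an honest single-coordinate toggle; once this is in hand, the $\mathbb{Z}_2^N$ structure of coordinate toggles makes the cycle closure and the degenerate-case elimination routine. One should also take care that the induction on $\ham(x,y)$ reduces cleanly to distance-$1$ steps, which it does because $\eta_x=\eta_y$ is an equivalence-type relation propagated along hypercube edges.
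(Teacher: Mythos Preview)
Your approach matches the paper's: define $\eta_x$, prove injectivity, show $\eta_x$ is independent of $x$ via a $4$-cycle in the hypercube, then read off the interleaved structure from the fixed injection $\eta$.

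There is one genuine muddle in your step 3. The cycle-closure condition says the four flipped coordinates $\eta_x(i), d', \eta_y(i), d$ XOR to zero in $\mathbb{Z}_2^N$, which is \emph{not} equivalent to the multiset equality $\{\eta_x(i), d'\} = \{\eta_y(i), d\}$: there are three possible pairings, namely (a) $\eta_x(i)=d'$ and $\eta_y(i)=d$, (b) $\eta_x(i)=\eta_y(i)$ and $d'=d$, (c) $\eta_x(i)=d$ and $d'=\eta_y(i)$. Your multiset equality covers only (b) and (c), yet the ``swapped'' case you then eliminate is (a), leaving (c) unaddressed. Both bad cases are easy to kill: your own argument dispatches (a), and (c) dies because $d=\eta_x(j)$ (since $y=x^{(j)}$), so $\eta_x(i)=d$ would contradict the injectivity of $\eta_x$ established in step~2. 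The paper's write-up avoids this bookkeeping by first observing, via reflection invariance ($\eta_x(i)=\eta_{x^{(i)}}(i)$) and injectivity, that each pair of \emph{consecutive} flips in the cycle lies at distinct coordinates; then $\eta_y(i)$ can only be canceled by its non-adjacent partner, forcing $\eta_y(i)=\eta_x(i)$. With the one-line fix for case (c), your argument is complete and essentially identical to the paper's.
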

\begin{proof}
    Fix any arbitrary isometric embedding $\varphi:\{0, 1\}^n \to \{0, 1\}^N$ of the Hamming metric into the edit metric. We begin by setting up some notation. For each $x\in \{0, 1\}^n$ and $i\in [n]$, let us denote by $x^{\oplus i}$ the bit string obtained by flipping the $i^{\textnormal{th}}$ bit in $x$. Now fix any $x\in \{0, 1\}^n$ and $i\in [n]$. Since $\varphi$ is an isometric embedding of the Hamming metric into the edit metric, we have $\ed(\varphi(x), \varphi(x^{\oplus i})) = \ham(x, x^{\oplus i}) = 1$. It follows that there exists an index $k\in [N]$ such that $\varphi(x^{\oplus i}) = \varphi(x)^{\oplus k}$. We will denote this index $k$ by $\eta_{x}(i)$. In other words, $\eta_{x}(i)$ is the unique index at which we need to perform a bit-flip in $\varphi(x)$ to obtain $\varphi(x^{\oplus i})$. We first make the observation that flipping distinct bits in the input always affects distinct bits in the output, i.e., the function $\eta_x:[n]\to [N]$ is injective for all $x\in \{0, 1\}^n$.
    \begin{obs}
        \label{obs:different-direction-index}
        Let $i, j\in [n]$ such that $i\neq j$. Then for all $x\in \{0, 1\}^n$, we have $\eta_x(i) \neq \eta_x(j)$.
    \end{obs}
    \begin{proof}
     \renewcommand\qedsymbol{$\lrcorner$}
     Assume for the sake of contradiction that for some $x\in \{0, 1\}^n$, $\eta_x(i) = \eta_x(j)$. Then, we have, $2=\ham(x^{\oplus i}, x^{\oplus j}) = \ed(\varphi(x^{\oplus i}), \varphi(x^{\oplus j}))=\ed(\varphi(x)^{\oplus \eta_x(i)}, \varphi(x)^{\oplus \eta_x(j)}) = 0$, which is a contradiction. 
    \end{proof}
    The following observation is also almost immediate.
    \begin{obs}
        \label{obs:reflection-invariance}
        For every $i\in [n]$ and $x\in \{0, 1\}^n$, we have $\eta_x(i) = \eta_{x^{\oplus i}}(i)$.
    \end{obs}
    \begin{proof}
        \renewcommand\qedsymbol{$\lrcorner$}
        Fix \(x\in \{0, 1\}^n\), \(i\in [n]\) and write \(\alpha := \eta_x(i)\),  \(\beta:= \eta_{x^{\oplus i}}(i)\). Then we have ---
        \[\varphi(x) = \varphi\left(\left(x^{\oplus i}\right)^{\oplus i}\right) = \varphi\left(x^{\oplus i}\right)^{\oplus \beta}=\left(\varphi(x)^{\oplus \alpha}\right)^{\oplus \beta},\]
        which forces \(\alpha =\beta\).
    \end{proof}
    The key insight in our proof is the observation that flipping the $i^{\textnormal{th}}$ bit of any input string for some fixed $i$ always affects the same bit in the output and is independent of the input string.
    \begin{lemma}
        \label{lemma:same-direction-index}
        Fix any $i\in [n]$ and $x\in \{0, 1\}^n$. Then for every $y\in \{0, 1\}^n$, we have $\eta_x(i) =\eta_y(i)$.
    \end{lemma}
    \begin{proof}
        \renewcommand{\qedsymbol}{$\lrcorner$}
        First, note that it suffices to prove the claim for all $y\in \{0, 1\}^n$ such that $y[i]=x[i]$ since by Observation~\ref{obs:reflection-invariance}, $\eta_y(i)=\eta_{y^{\oplus i}}(i)$ for every $y$. Therefore, in what follows, $y$ will always be a string such that $y[i]=x[i]$.

        Our proof is going to be by induction on the Hamming distance of $y$ from $x$. Clearly, the claim is true if $\ham(x, y)=0$. For the inductive hypothesis, assume that the claim is true for all $y\in \{0, 1\}^n$ such that $0\leq \ham(x, y) < d$. For the inductive step, let $y\in \{0, 1\}^n$ be a string such that $\ham(x, y)=d$. Then there exists $z\in \{0, 1\}^n$ such that $\ham(x, z)=d-1$ and $\ham(z, y)=1$. Indeed, we can choose $z$ to be the second-to-last vertex in any shortest path from $x$ to $y$ in the Boolean hypercube. By the inductive hypothesis, we have $\eta_x(i)=\eta_{z}(i)$. Therefore, it suffices to prove that $\eta_{z}(i)=\eta_y(i)$.

        \begin{figure}
            \centering
            \includegraphics[scale=1]{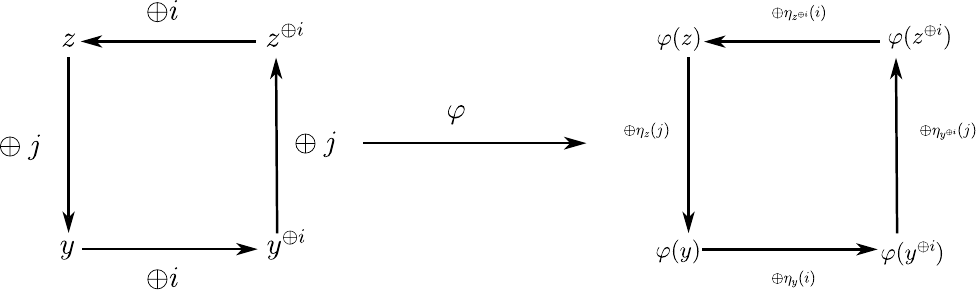}
            \caption{A sequence of bit-flips that both starts and ends at $z$, and their corresponding effects on the image strings under $\varphi$. Any two consecutive arrows correspond to bit flips at distinct locations. Therefore, in order to cancel out all bit flips, non-consecutive arrows must correspond to bit flips at the same locations.}
            \label{fig:cycle}
        \end{figure}

        Assume otherwise. Since $x[i]=y[i]$ and $z$ is on a shortest path from $x$ to $y$, $z[i]=y[i]$. Then we must have $y=z^{\oplus j}$ for some $j\neq i$. Consider walking along the following cycle in the Boolean hypercube and observing the images of the encountered strings under $\varphi$ (see Figure~\ref{fig:cycle} for an illustration) ---
        \[z\xrightarrow{\phantom{aa} \oplus j\phantom{aaa}} y \xrightarrow{\phantom{aa} \oplus i \phantom{aaa}} y^{\oplus i} \xrightarrow{\phantom{aa} \oplus j \phantom{aaa}} z^{\oplus i} \xrightarrow{\phantom{aa} \oplus i \phantom{aaa}} z\]
        The images of these strings under $\varphi$ follow a corresponding cycle given by ---
        \begin{equation*}
            \varphi(z) \xrightarrow{\phantom{aa} \oplus \eta_z(j)\phantom{aaa}} \varphi(y) \xrightarrow{\phantom{aa} \oplus \eta_y(i) \phantom{aaa}} \varphi(y^{\oplus i})\xrightarrow{\phantom{aa} \oplus \eta_{y^{\oplus i}}(j) \phantom{aaa}} \varphi(z^{\oplus i}) \xrightarrow{\phantom{aa} \oplus \eta_{z^{\oplus i}}(i) \phantom{aaa}} \varphi(z),
        \end{equation*}
    and as a consequence, we have ---
    \begin{equation}
                \label{eq:cycle}
                \left(\left(\left(\left(\varphi(z)\right)^{\oplus \eta_z(j)}\right)^{\oplus \eta_y(i)}\right)^{\oplus \eta_{y^{\oplus i}}(j)}\right)^{\oplus\eta_{z^{\oplus i}}(i)}=\varphi(z).
    \end{equation}
    Furthermore, by a combination of Observations~\ref{obs:different-direction-index} and~\ref{obs:reflection-invariance}, we have $\eta_z(j)\neq \eta_y(i)$, $\eta_y(i)\neq \eta_{y^{\oplus i}}(j)$, and $\eta_{y^{\oplus i}}(j)\neq \eta_{z^{\oplus i}}(i)$. Therefore, if $\eta_z(i)\neq \eta_y(i)$, and consequently by Observation~\ref{obs:reflection-invariance}, $\eta_{z^{\oplus i}}(i) \neq \eta_y(i)$, then Equation~\ref{eq:cycle} cannot hold since there would be no way to cancel out all the bit-flips. Thus, we conclude that $\eta_z(i)=\eta_y(i)$. 
    \end{proof}

    Lemma~\ref{lemma:same-direction-index} tells us that the functions $\eta_x:[n]\to [N]$ are identical for all $x\in\{0, 1\}^n$. Therefore, we can drop the subscript and refer to this function as $\eta$. By Observation~\ref{obs:different-direction-index}, $\eta$ is injective. One way of interpreting Lemma~\ref{lemma:same-direction-index} is that regardless of the input string, as long as the location of the bit flip in the input is fixed, the location of the corresponding bit flip in the output is also fixed. We now argue something stronger --- regardless of the input string, as long as the location and \textit{direction}\footnote{There are two possible directions for a bit flip --- flipping a 0 to a 1, and flipping a 1 to a 0.} of the bit flip in the input is fixed, the location and direction of the corresponding bit flip in the output is also fixed.
    \begin{lemma}
        \label{lemma:location-and-direction}
        For all $x, y\in \{0, 1\}^n$ and $i\in [n]$, $\varphi(x)[\eta(i)]=\varphi(y)[\eta(i)]$ if and only if $x[i]=y[i]$. 
    \end{lemma}
    \begin{proof}
    \renewcommand{\qedsymbol}{$\lrcorner$}
        Consider any shortest sequence of bit flips that transforms $x$ into $y$. Such a sequence also specifies a unique sequence of bit flips transforming $\varphi(x)$ into $\varphi(y)$. If $x[i]=y[i]$, the $i^{\text{th}}$ bit of $x$ is never touched during the sequence. Since $\eta$ is injective, the $\eta(i)^{\text{th}}$ bit of $\varphi(x)$ is also never touched and $\varphi(x)[\eta(i)]=\varphi(y)[\eta(i)]$.

        Similarly, if $x[i]\neq y[i]$, any shortest sequence of bit flips transforming $x$ into $y$ must flip the $i^{\text{th}}$ bit exactly once. Since $\eta$ is injective, in the corresponding sequence transforming $\varphi(x)$ into $\varphi(y)$, the $\eta(i)^{\text{th}}$ bit is flipped exactly once. Therefore, $\varphi(x)[\eta(i)]\neq\varphi(y)[\eta(i)]$.  
    \end{proof}

    Now, for each $i\in[n]$, we define the function $\pi_i:\{0, 1\}\to \{0, 1\}$ as follows. Pick any arbitrary $x\in \{0, 1\}^n$. If $\varphi(x)[i]=x[i]$, we define $\pi_i(b):=b$ for all $b\in \{0, 1\}$. Otherwise, we define $\pi_i(b):= 1-b$ for all $b\in \{0, 1\}$. By Lemma~\ref{lemma:location-and-direction}, $\varphi(x)[\eta(i)]=\pi_i(x[i])$ for all $x\in \{0, 1\}^n$ and $i\in [n]$. The final piece of the proof is the following lemma.

    \begin{lemma}
        Let $S\subseteq[N]$ be the set of indices that are not in the image of $\eta$, i.e., $S=\{j\in [N]: \nexists i \textnormal{ with }\eta(i) =j\}$. Then for all $x, y\in \{0, 1\}^n$, we have $\varphi(x)_{|S}=\varphi(y)_{|S}$.
    \end{lemma}
    \begin{proof}
    \renewcommand{\qedsymbol}{$\lrcorner$}
        Assume for the sake of contradiction that there exist $x, y\in \{0, 1\}^n$ and $j\in S$ such that $\varphi(x)[j]\neq \varphi(y)[j]$. Consider any shortest sequence of bit flips transforming $x$ into $y$. Such a sequence also specifies a unique sequence of bit-flips transforming $\varphi(x)$ into $\varphi(y)$. This sequence only affects indices that are in the image of $\eta$. Since $j$ is not in the image of $\eta$, the $j^{\text{th}}$ bit of $\varphi(x)$ is never touched during this process. This means $\varphi(x)[j]=\varphi(y)[j]$, which is a contradiction.
    \end{proof}

    With this, we have now shown that $\varphi$ satisfies all the properties of an interleaved embedding, and we are done.
\end{proof}

\subsection{Extending to Larger Alphabets}

\label{sec:isometry-implies-generalized-interleaved-embedding}

Even though Theorem~\ref{thm:isometry-implies-interleaving} is stated for embeddings of strings over the binary alphabet, a similar result holds true for larger alphabets as well. Let $\Sigma$ be any alphabet and consider any isometric embedding $\varphi:\Sigma^n\to \Sigma^N$ of the Hamming metric into the edit metric. We claim that even in this more general setting, $\varphi$ must, in some sense, be an interleaved embedding.

To show this, we first need to define what it means for an embedding to be interleaved when working with a non-binary alphabet, since Definition~\ref{def:interleaved-embedding} is specifically tailored to the binary case. However, finding the right generalization is not too difficult here. A closer look at Definition~\ref{def:interleaved-embedding} reveals that its dependence on the binary alphabet stems from the functions $\pi_i:\{0, 1\}\to \{0, 1\}$, where each $\pi_i$ either complements its input bit or leaves it unchanged. In the general setting, we simply require these functions to be \textit{permutations} of the alphabet, meaning that each $\pi_i:\Sigma\to\Sigma$ is a bijection that maps the alphabet onto itself.

\begin{defi}[Generalized Interleaved Embedding]
\label{def:generalized-interleaved-embedding}
    Let $\Sigma$ be an alphabet. A function $\varphi:\Sigma^n \to \Sigma^N$ with $N\geq n$ is called a {generalized interleaved embedding} if there exist ---
    \begin{itemize}
        \item an injective function $\eta : [n] \to [N]$,
        \item a collection of bijective functions $\pi_1, \pi_2, \ldots , \pi_n: \Sigma \to \Sigma$, and
        \item a fixed string $w\in \Sigma^{N-n}$,
    \end{itemize}
    such that for all $x\in\Sigma^n$, if $X=\varphi(x)$, then the following hold.
    \begin{itemize}
        \item For all $i\in [n]$, $X[\eta(i)]= \pi_i(x[i])$.
        \item Let $S\subseteq [N]$ be the set of indices that are not in the image of $\eta$, i.e., $S=\{j\in [N]: \nexists i \textnormal{ with } \eta(i) = j\}$. Then $X_{|S}=w$.
    \end{itemize}
\end{defi}

Note that Definition~\ref{def:generalized-interleaved-embedding} reduces to Definition~\ref{def:interleaved-embedding} in the case when $\Sigma=\{0, 1\}$. Therefore, from now on, when we refer to interleaved embeddings, we will mean functions defined in Definition~\ref{def:generalized-interleaved-embedding}.\footnote{In particular, we drop the word ``generalized''.} We are now ready to state the main result of this section in full generality.

\begin{theorem}
    \label{thm:isometry-implies-generalized-interleaving}
    Let $\Sigma$ be an alphabet and $\varphi:\Sigma^n\to \Sigma^N$ be an isometric embedding of the Hamming metric into the edit metric. Then $\varphi$ is an interleaved embedding.
\end{theorem}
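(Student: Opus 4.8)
The plan is to follow the proof of Theorem~\ref{thm:isometry-implies-interleaving} step by step, adapting each ingredient to an arbitrary alphabet $\Sigma$. The one structural fact that makes the binary proof work and that continues to hold verbatim is: two strings of the same length are at edit distance $1$ if and only if they differ in exactly one coordinate (a single insertion or deletion would change the length, so the only admissible single operation is a substitution). Fix an isometric $\varphi:\Sigma^n\to\Sigma^N$. For $x\in\Sigma^n$, $i\in[n]$ and $a\in\Sigma\setminus\{x[i]\}$, write $x^{i\to a}$ for $x$ with its $i$-th coordinate replaced by $a$; since $\ham(x,x^{i\to a})=1$, the images $\varphi(x)$ and $\varphi(x^{i\to a})$ differ in exactly one coordinate.

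The first genuinely new point is that this coordinate is independent of the replacement symbol $a$. Indeed, if $a\ne b$ gave distinct affected coordinates $k_a\ne k_b$, then $x^{i\to a}$ and $x^{i\to b}$ are at Hamming distance $1$, while $\varphi(x^{i\to a})$ and $\varphi(x^{i\to b})$ disagree at both $k_a$ and $k_b$ (each agrees with $\varphi(x)$ off its own coordinate), so they are at edit distance at least $2$ --- contradicting isometry. Hence we may define $\eta_x(i)\in[N]$ as that unique coordinate. The analogues of Observation~\ref{obs:different-direction-index} ($\eta_x$ is injective) and Observation~\ref{obs:reflection-invariance} ($\eta_{x^{i\to a}}(i)=\eta_x(i)$, since undoing the input change must undo the corresponding output substitution) then go through by exactly the same short arguments, now invoking the equal-length edit-distance-$1$ fact in place of its binary counterpart.

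The heart of the proof is the analogue of Lemma~\ref{lemma:same-direction-index}: $\eta_x(i)$ does not depend on $x$. As in the binary case it suffices, by reflection invariance, to show $\eta_x(i)=\eta_y(i)$ whenever $x[i]=y[i]$, and this is done by induction on $\ham(x,y)$ along a shortest Hamming path that never touches coordinate $i$; the inductive step reduces to showing $\eta_z(i)=\eta_{z^{j\to b}}(i)$ for some $j\ne i$. One walks around the $4$-cycle of inputs $z\to z^{j\to b}\to (z^{j\to b})^{i\to a}\to z^{i\to a}\to z$ (with $a\ne z[i]$ arbitrary) and tracks the induced sequence of single-coordinate substitutions on $\varphi(z)$, affecting coordinates $p_1=\eta_z(j)$, $p_2=\eta_{z^{j\to b}}(i)$, $p_3=\eta_{(z^{j\to b})^{i\to a}}(j)$, $p_4=\eta_{z^{i\to a}}(i)$. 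Using injectivity of the relevant $\eta$'s, reflection invariance, and Hamming-distance-$2$ arguments of the type used to prove injectivity, one checks that no two consecutive $p$'s coincide and that $p_1\ne p_4$. Now comes the one substantive difference from the binary argument: where before one argued that each affected coordinate must be flipped an even number of times, here one uses instead that a substitution always changes the symbol, so a coordinate touched exactly once along the cycle cannot end up with its original symbol. Hence each of $p_1,\dots,p_4$ must be repeated, and the non-adjacency constraints force $p_1=p_3$ and $p_2=p_4$; in particular $\eta_{z^{j\to b}}(i)=p_2=p_4=\eta_{z^{i\to a}}(i)=\eta_z(i)$ by reflection invariance, closing the induction. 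I expect this cycle step to be the main place requiring care, precisely because ``even parity of flips'' must be replaced by this substitution-monotonicity observation.

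The remaining steps are routine adaptations. Writing $\eta$ for the now input-independent (injective) function, the analogue of Lemma~\ref{lemma:location-and-direction} --- namely $\varphi(x)[\eta(i)]=\varphi(y)[\eta(i)]$ if and only if $x[i]=y[i]$ --- follows by the identical shortest-sequence argument, using that $\eta$ is injective and that each substitution in the induced output sequence changes a symbol. This lets us define, for each $i\in[n]$, a map $\pi_i:\Sigma\to\Sigma$ by $\pi_i(x[i]):=\varphi(x)[\eta(i)]$; it is well defined by the preceding step and injective --- hence, $\Sigma$ being finite, a bijection --- by the same step. Finally, for $S=[N]\setminus\eta([n])$ the coordinates in $S$ are never touched along any induced output sequence, so $\varphi(x)_{|S}=\varphi(y)_{|S}$ for all $x,y$; taking $w$ to be this common value over $S$ exhibits $\varphi$ as an interleaved embedding in the sense of Definition~\ref{def:generalized-interleaved-embedding}.
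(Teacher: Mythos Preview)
Your proposal is correct and follows essentially the same approach as the paper: the paper's own treatment of Theorem~\ref{thm:isometry-implies-generalized-interleaving} is a short sketch that says to rerun the binary proof after first establishing (by exactly your Hamming-distance-$1$ argument) that the affected output coordinate is independent of the replacement symbol. Your write-up is in fact more explicit than the paper's in spelling out the cycle step---in particular, your replacement of the ``even parity of flips'' observation by ``a coordinate touched exactly once cannot return to its original symbol,'' together with the non-adjacency constraints forcing $p_1=p_3$ and $p_2=p_4$, is the right adaptation and is only implicit in the paper's one-line ``the remainder of the proof would then proceed exactly as in Theorem~\ref{thm:isometry-implies-interleaving}.''
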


The proof of Theorem~\ref{thm:isometry-implies-generalized-interleaving} is essentially the same as that of Theorem~\ref{thm:isometry-implies-interleaving}. The only difference is that we cannot argue using bit flips anymore. However, this is not an issue. We now argue that as long as a single index is modified, regardless of \textit{how} it is changed, it always affects the same index at the output. 

More precisely, fix some $x\in \Sigma^n$, an index $i\in[n]$ and symbols $\sigma, \sigma'\in \Sigma$ different from $x[i]$. Let $y$ be the string obtained from $x$ by replacing the $i^{\text{th}}$ symbol by $\sigma$, and let $y'$ be the string obtained from $x$ by replacing the $i^{\text{th}}$ symbol by $\sigma'$. By a similar argument as in the binary case, there exists some $k\in [N]$ such that $\varphi(y)$ is obtained by changing the $k^{\text{th}}$ symbol of $\varphi(x)$. Similarly, there must also exist $k'\in[N]$ such that $\varphi(y')$ is obtained by changing the $k'^{\text{th}}$ symbol of $\varphi(x)$. Since $\ham(y, y')\leq 1$, if $k\neq k'$, then $\ed(\varphi(y), \varphi(y'))=2$, contradicting that $\varphi$ is isometric. Thus, we must have $k=k'$.

What this means is that as long as all other symbols remain fixed, any sequence of modifications to the $i^{\text{th}}$ symbol always affects the same index in $\varphi(x)$. The remainder of the proof would then proceed exactly as in Theorem~\ref{thm:isometry-implies-interleaving}.

\section{Upper Bounds on the Rate}
\label{sec:UpperBoundsOnTheRate}

In this section, we show that for every alphabet $\Sigma$, any isometric embedding $\varphi: \Sigma^n \to \Sigma^N$ of the Hamming metric into the edit metric must have rate bounded away from $\frac{1}{2}$, for sufficiently large $n$.

\begin{theorem}
    \label{thm:rate-upper-bound}
    For every alphabet $\Sigma$, there exists an integer $n_0$ such that every isometric embedding $\varphi: \Sigma^n \to \Sigma^N$ of the Hamming metric into the edit metric with $n\geq n_0$ must have rate at most $\frac{1}{2}-\frac{1}{16|\Sigma|}$. 
\end{theorem}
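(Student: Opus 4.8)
The plan is to follow the strategy outlined in the proof overview (Section~\ref{sec:overview-4}): assume for contradiction that for all $n$ there is an isometric embedding $\varphi:\Sigma^n\to\Sigma^N$ with rate exceeding $\tfrac{1}{2}-\tfrac{1}{16|\Sigma|}$, and exhibit (via averaging over a small family of shift alignments) a pair of inputs at Hamming distance $n$ whose images are at strictly smaller edit distance. First I would invoke Theorem~\ref{thm:isometry-implies-generalized-interleaving} to write $\varphi$ in interleaved form: there is an injection $\eta:[n]\to[N]$, permutations $\pi_1,\dots,\pi_n$ of $\Sigma$, and a frozen string $w\in\Sigma^{N-n}$ sitting in the coordinates $S=[N]\setminus\eta([n])$. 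High rate means $|S|=N-n$ is small relative to $n$; precisely, $n/N>\tfrac12-\tfrac1{16|\Sigma|}$ gives $N-n < \bigl(\tfrac{1}{1-2/|\Sigma|\cdot\text{(something)}}\bigr)\cdot$\dots — I would just record that $N<\bigl(2+\tfrac{c}{|\Sigma|}\bigr)n$ for an appropriate small $c$, so $|S|<\bigl(1+\tfrac{c}{|\Sigma|}\bigr)n$, i.e.\ the number of frozen symbols is at most roughly $n$ plus a $\Theta(n/|\Sigma|)$ correction. Since WLOG we may compose with the $\pi_i$'s (they don't change Hamming or edit distances and preserve the interleaved structure), I would assume $\pi_i=\mathrm{id}$ for all $i$, so that $\varphi(x)$ is literally the frozen string $w$ with $x[1],\dots,x[n]$ inserted at positions $\eta(1)<\dots<\eta(n)$ (after relabeling, $\eta$ may be taken increasing).

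Next I would define, for each nonzero integer $\delta$ with $|\delta|\le D$ (where $D=D(|\Sigma|,\varepsilon)$ is a constant to be chosen), the shift alignment $\mathcal{A}_\delta$ on $\varphi(x)$ versus $\varphi(y)$ that matches position $p$ of the first string to position $p+\delta$ of the second (dropping the $|\delta|$ boundary positions, which contribute at most $2|\delta|$ insertions/deletions to $\cost(\mathcal{A}_\delta)$). The substitution cost of $\mathcal{A}_\delta$ on the pair $(x,y)$ is the number of positions $p$ with $\varphi(x)[p]\ne\varphi(y)[p+\delta]$. The key point: I want to choose $x=x_\delta$, $y=y_\delta$ so that (i) $\ham(x_\delta,y_\delta)=n$, and (ii) every mutable position of $\varphi(x_\delta)$ — i.e.\ every $p=\eta(i)$ — is matched under $\mathcal{A}_\delta$ either to a mutable position of $\varphi(y_\delta)$ carrying the \emph{same} symbol, or to a frozen position. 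Concretely: a mutable position $p=\eta(i)$ gets matched to $p+\delta$; if $p+\delta$ is also mutable, say $p+\delta=\eta(i')$, then I need $x_\delta[i]=y_\delta[i']$. One builds $x_\delta,y_\delta$ consistently with all such constraints while keeping $x_\delta$ and $y_\delta$ disagreeing in every coordinate — this is possible because $|\Sigma|\ge 3$ (or $\ge 2$ with a more careful argument) gives enough freedom, and the constraint graph on $[n]$ induced by the partial map $i\mapsto i'$ is a union of paths and short cycles, so it is properly "list-colorable" by symbols of $\Sigma$ in a way that also leaves $x_\delta\ne y_\delta$ everywhere. (If $p+\delta\notin\eta([n])$, the mutable symbol is matched to a frozen one; that is allowed.) Section~\ref{sec:UpperBoundsOnTheRate} should carry out this construction carefully — it is the one genuinely fiddly step.

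Given such a family, I pick $\delta$ uniformly at random from $\{-D,\dots,-1,1,\dots,D\}$ and bound $\E_\delta[\cost(\mathcal{A}_\delta)]$. By property (ii), for each fixed pair $(x_\delta,y_\delta)$ the only substitutions come from comparisons involving a frozen coordinate of $\varphi(x_\delta)$ or a frozen coordinate of $\varphi(y_\delta)$; summing over $\delta$, each of the at most $\approx(1+c/|\Sigma|)n$ frozen coordinates can be "hit" by at most a bounded number of shifts, so the \emph{total} over all $2D$ shifts of the number of frozen-involving comparisons is $O(D\cdot|S|)$, hence $\E_\delta[\cost(\mathcal{A}_\delta)]\le \tfrac{1}{2D}\cdot O(D\cdot|S|)+2D = O(|S|)+2D$. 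Plugging $|S|\le (1+c/|\Sigma|)n$, the expectation is at most $\bigl(1-\Omega(1/|\Sigma|)\bigr)n$ once $D$ is a sufficiently large constant (say $D\asymp|\Sigma|$) and $n\ge n_0$ is large enough to absorb the additive $2D$; concretely one tunes $D$ and the rate slack $\tfrac{1}{16|\Sigma|}$ so that $\E_\delta[\cost(\mathcal{A}_\delta)]<n=\ham(x_\delta,y_\delta)$. Then some $\delta^\ast$ achieves $\cost(\mathcal{A}_{\delta^\ast})<\ham(x_{\delta^\ast},y_{\delta^\ast})=\ed_{\textsf{Hamming}}$, so $\ed(\varphi(x_{\delta^\ast}),\varphi(y_{\delta^\ast}))<\ham(x_{\delta^\ast},y_{\delta^\ast})$, contradicting isometry. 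The main obstacle I expect is the combinatorial construction of $x_\delta,y_\delta$: one must simultaneously satisfy the per-$\delta$ "no mutable substitution" constraints and keep the Hamming distance exactly $n$, and verify this is feasible for \emph{every} interleaving pattern $\eta$ (in particular when many mutable positions are spaced exactly $\delta$ apart, creating long constraint chains); the bookkeeping of how the $2D$ shifts distribute the frozen-coordinate comparisons, and choosing $D$ versus the rate slack $\tfrac{1}{16|\Sigma|}$, is the other place where care is needed but is essentially a calculation.
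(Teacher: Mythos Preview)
Your overall architecture matches the paper's: invoke the interleaving theorem, use shift alignments $\mathcal{A}_\delta$ for $|\delta|\le D$, build adversarial pairs $(x_\delta,y_\delta)$ at Hamming distance $n$ so that mutable positions of $\varphi(x_\delta)$ never incur substitutions, then average. The construction of $x_\delta,y_\delta$ is also essentially right (and in fact easier than you fear: you only need one pair \emph{per} $\delta$, not a single pair working for all $\delta$, so the ``long constraint chains'' worry is moot; the paper just sets the mutable symbols greedily in monotone order of position). Where you go wrong is the cost bound.

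Your accounting gives $\E_\delta[\cost(\mathcal{A}_\delta)]\le O(|S|)+2D$ with $|S|=N-n$. But with rate $\tfrac12-\varepsilon$ we have $|S|=N(\tfrac12+\varepsilon)>N(\tfrac12-\varepsilon)=n$, so your bound is \emph{larger} than $n$, not smaller, and no contradiction follows. The jump from ``$O(|S|)+2D$'' to ``at most $(1-\Omega(1/|\Sigma|))n$'' is unjustified: bounding substitutions by the number of frozen-involving comparisons treats every such comparison as a mismatch, and that is simply too coarse. The $\tfrac{1}{16|\Sigma|}$ savings has to come from somewhere, and in the paper it comes from showing that a definite fraction of the \emph{frozen-versus-frozen} comparisons are actually \emph{matches}. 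Concretely, the paper partitions $[N]$ into blocks of length $\Delta$, lets $c_{t,\sigma}$ be the number of frozen occurrences of symbol $\sigma$ in block $t$, and observes that a frozen index $i$ with symbol $\sigma$ in block $t$ is matched (not substituted) with probability at least $\tfrac12\cdot\tfrac{c_{t,\sigma}-1}{\Delta-1}$ over the random shift. Summing and using convexity on $\sum_\sigma c_{t,\sigma}(c_{t,\sigma}-1)$ (i.e.\ pigeonhole: more than $|\Sigma|$ frozen symbols per block forces repeats) yields $\E[|I_{\text{good}}|]>N\bigl(\tfrac{1}{8|\Sigma|}-\tfrac{1}{4\Delta}\bigr)$, and it is this $\tfrac{1}{8|\Sigma|}$ term, subtracted from the naive $|I_{\text{frozen}}|=N(\tfrac12+\varepsilon)$ bound, that pushes the expected cost below $n$. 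Your proposal is missing exactly this ingredient.
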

\begin{proof}
    Assume, for the sake of contradiction, that there exists an alphabet $\Sigma$ and an isometric embedding $\varphi:\Sigma^n \to \Sigma^N$ of the Hamming metric into the edit metric with rate equal to $\frac{1}{2}-\frac{1}{16|\Sigma|}+\rho$ for some constant $\rho >0$. We will show that this leads to a contradiction if $n$ is sufficiently large. In particular, we will show that for sufficiently large $n$, there exists a pair of strings $x^*, y^*\in \Sigma^n$ such that $\ed(\varphi(x^*), \varphi(y^*)) < \ham(x^*, y^*)$.

    In order to find such $x^*$ and $y^*$,  we will employ the probabilistic method. First, we will define a collection $\mathcal{C}$ of specially chosen triples $(x, y, \mathcal{A})$, where $x, y\in \Sigma^n$ and $\mathcal{A}$ is an edit distance alignment between $\varphi(x)$ and $\varphi(y)$. Then, we will pick a triple $(x, y, \mathcal{A})\in \mathcal{C}$ uniformly at random and show that the expected cost of $\mathcal{A}$ with respect to $\varphi(x)$ and $\varphi(y)$ is smaller than the minimum possible Hamming distance between $x$ and $y$. This will guarantee the existence of the desired $x^*$ and $y^*$. Details follow.
    
    Let $\Delta$ be a positive divisor of $N$ to be chosen later. The exact value of $\Delta$ will depend on $\rho$ but not $N$. Define $S_{\textnormal{shifts}} := \{-(\Delta-1), -(\Delta-2), \ldots , -1\}\cup \{1, 2, \ldots , (\Delta-1)\}$, and for any $\delta\in S_{\textnormal{shifts}}$, define $\mathcal{A}_\delta$ to be the following edit distance alignment between two length $N$ strings.
    \[\mathcal{A}_\delta := \{(i, i+\delta): \max\{1, 1-\delta\} \leq i\leq \min\{N-\delta, N\}\}\]
    In words, for any pair of length $N$ strings $X$ and $Y$, $\mathcal{A}_\delta$ aligns every symbol in $X$ to the $\delta^{\textnormal{th}}$ next symbol in $Y$. Now with each alignment $\mathcal{A}_\delta$ where $\delta \in S_{\textnormal{shifts}}$, we will associate a pair of strings $x_\delta, y_\delta \in \Sigma^n$. But before we describe $x_\delta$ and $y_\delta$, let us first recall some properties of interleaved embeddings. Since $\varphi$ is isometric, by Theorem~\ref{thm:isometry-implies-generalized-interleaving}, $\varphi$ is an interleaved embedding. Consequently, there exists $I_{\textnormal{frozen}}\subseteq [N]$ such that $\varphi(x)_{|I_{\textnormal{frozen}}}$ is the same string for all $x\in \Sigma^n$. Recalling Definition~\ref{def:generalized-interleaved-embedding}, $I_{\textnormal{frozen}}$ is simply the set of indices not in the image of $\eta$, the injective function associated with $\varphi$. Define $I_{\textnormal{mutable}} := [N]\setminus I_{\textnormal{frozen}}$. Note that for any $w\in \Sigma^{|I_{\textnormal{mutable}}|}$, we can always choose $x\in \Sigma^n$ such that $\varphi(x)_{|I_{\textnormal{mutable}}}$ is set to $w$. Furthermore, we can set each symbol independently since for all $i\in I_{\textnormal{mutable}}$ and $\sigma\in \Sigma$, there exist unique $j\in [n]$ and $\sigma'\in \Sigma$ such that $\varphi(x)[i] = \sigma$ if and only if $x[j]=\sigma'$. For every $x\in \Sigma^n$ and $i\in [N]$, let us call the symbol $\varphi(x)[i]$ \textit{mutable} if $i\in I_{\textnormal{mutable}}$ and \textit{frozen} otherwise.
    
    Given $\delta \in S_{\textnormal{shifts}}$, we choose a pair of strings $x_\delta, y_\delta\in \Sigma^n$ satisfying the following properties.
    \begin{itemize}
        \item $\ham(x_\delta, y_\delta) = n$,
        \item For each $i\in I_{\textnormal{mutable}}$ such that $i+\delta\in [N]$, we have $\varphi(x_\delta)[i] = \varphi(y_\delta)[i+\delta]$.
    \end{itemize}
    Essentially, what we want from $x_\delta$ and $y_\delta$ is that whenever some mutable symbol of $\varphi(x_\delta)$ is aligned with some (not necessarily mutable) symbol of $\varphi(y_\delta)$, then it should result in a match and not a substitution. We additionally want $x_\delta$ and $y_\delta$ to have the maximum possible Hamming distance. Note that such $x_\delta$ and $y_\delta$ can always be found. Specifically, for $\delta >0$, we can set $\varphi(x_\delta)[i]$ and $\varphi(y_\delta)[i]$ for $i\in I_{\textnormal{mutable}}$ in descending order of $i$.\footnote{Although we say we are setting \(\varphi(x_{\delta})[i]\) and \(\varphi(y_{\delta})[i]\), what we are actually setting are \(x_{\delta}[\eta^{-1}(i)]\) and \(y_{\delta}[\eta^{-1}(i)]\).} By doing so, every symbol in $\varphi(x_\delta)$ is aligned with some symbol (whether mutable or not) in $\varphi(y_\delta)$ that has already been set. During each iteration, we begin by setting $\varphi(x_\delta)[i]$ to match the symbol it is aligned with. Then we assign $\varphi(y_\delta)[i]$ a symbol different from $\varphi(x_\delta)[i]$, and move on to the next (smaller) value of $i$. This ensures that $x_\delta$ and $y_\delta$ meet the desired requirements. For $\delta<0$, we can follow a similar procedure but set $\varphi(x_\delta)[i]$ and $\varphi(y_\delta[i])$ in ascending order of $i$ instead. See Figure~\ref{fig:choosing-adversarial-strings} for an illustration of this process.

    \begin{figure}
        \centering
        \includegraphics[width=\linewidth]{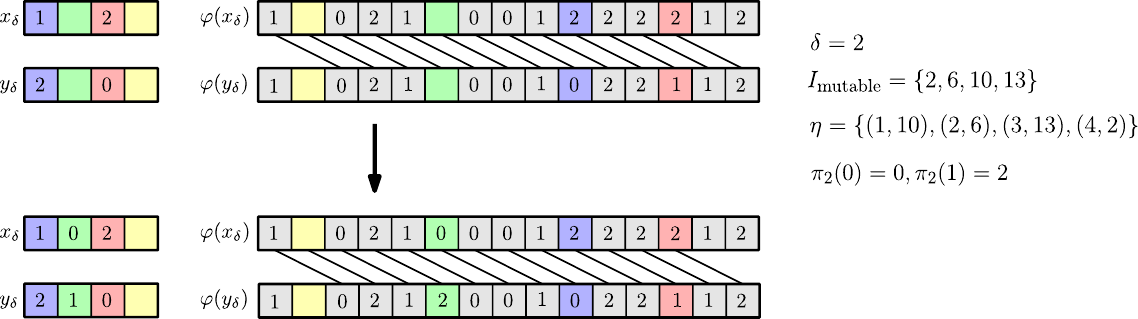}
        \caption{\(\eta\) is indicated by the matching colors. The red and purple symbols were set in previous iterations, and we now seek to set the green symbols. Specifically, we require the green symbol of \(\varphi(x_\delta)\) to be 0, while the green symbol of \(\varphi(y_\delta)\) must differ from that of \(\varphi(x_\delta)\). This can be achieved by appropriately setting the green symbols in \(x_\delta\) and \(y_\delta\).}
        \label{fig:choosing-adversarial-strings}
    \end{figure}

    We are now ready to describe the collection $\mathcal{C}$ alluded to earlier --- it consists of all triples of the form $(x_\delta, y_\delta, \mathcal{A}_\delta)$, where $\delta\in S_{\textnormal{shifts}}$. Let $\Tilde{\delta}$ be a uniformly random element of $S_{\textnormal{shifts}}$. The main step in the proof involves bounding the expected cost of $\mathcal{A}_{\Tilde{\delta}}$ with respect to $\varphi(x_{\Tilde{\delta}})$ and $\varphi(y_{\Tilde{\delta}})$.
    \begin{lemma}
        \label{lemma:expected-cost-bound}
        $\E\left[\cost_{\varphi(x_{\Tilde{\delta}})}^{\varphi(y_{\Tilde{\delta}})}\left(\mathcal{A}_{\Tilde{\delta}}\right)\right] < n$, for sufficiently large values of $n$.
    \end{lemma}
    \begin{proof}
        \renewcommand{\qedsymbol}{$\lrcorner$}
        For ease of notation, define $X := \varphi(x_{\Tilde{\delta}})$, $Y := \varphi(y_{\Tilde{\delta}})$, and $\mathcal{A} := \mathcal{A}_{\Tilde{\delta}}$. Furthermore, define $\varepsilon := \frac{1}{16|\Sigma|} -\rho$ so that the rate of $\varphi$ is $\frac{1}{2}-\varepsilon$. We will bound $\E[\cost_{X}^Y(\mathcal{A})]$ by individually bounding the expected number of insertions, deletions, and substitutions. Bounding the first two is easy --- no matter what $\Tilde{\delta}$ is, the numbers of insertions or deletions are both always at most $(\Delta-1)$. For substitutions, note that for any $i\in I_{\textnormal{mutable}}$, $X[i]$ is by definition never substituted by $\mathcal{A}$ --- it is either matched or deleted. Therefore, in order to bound the expected number of substitutions, it suffices to only focus on the frozen symbols. Let us define an index $i\in I_{\textnormal{frozen}}$ to be \textit{good} if there exists $j\in I_{\textnormal{frozen}}$ such that $(i, j)\in \mathcal{A}$ and $X[i]=Y[j]$. Less formally, a good index is where a \textit{match} occurs between a pair of frozen symbols. Let $I_{\textnormal{good}}\subseteq I_{\textnormal{frozen}}$ be the set of all good indices. Clearly, the number of substitutions is upper bounded by $|I_{\textnormal{frozen}}|-|I_{\textnormal{good}}|$. Since $\varphi$ has rate $\frac{1}{2}-\varepsilon$, $|I_{\textnormal{frozen}}| = N\left(\frac{1}{2}+\varepsilon\right)$, and we have ---
        \[\E\left[\cost_X^Y(\mathcal{A})\right] \leq 2(\Delta -1) + N\left(\frac{1}{2}+\varepsilon\right)-\E\left[|I_{\textnormal{good}}|\right]\]
        Let $\mathbb{1}_{I_{\textnormal{good}}}:I_\textnormal{frozen}\to \{0, 1\}$ be the indicator function of $I_{\textnormal{good}}\subseteq I_\textnormal{frozen}$. By linearity of expectation, we have ---
        \[\E\left[|I_{\textnormal{good}}|\right] = \sum_{i\in I_\textnormal{frozen}}\mathbb{1}_{I_{\textnormal{good}}}(i)=\sum_{i\in I_\textnormal{frozen}}\Pr[i\in I_{\textnormal{good}}]\]
    So, we need to obtain a lower bound on the probability that an index is good. To do so, we partition $X$ and $Y$ into contiguous length-$\Delta$ substrings. For each $t\in [\frac{N}{\Delta}]$ and $\sigma\in \Sigma$, we define $c_{t, \sigma}$ to be the number of times symbol $\sigma$ appears as a frozen symbol in the $t^{\textnormal{th}}$ substring in this partition, i.e., 
    \[c_{t, \sigma} := \left|\{i\in [t(\Delta-1), t\Delta] : i\in I_\textnormal{frozen} \textnormal{ and } X[i]=Y[i]=\sigma\}\right|\]
    Next, we make the following observation.
    \begin{obs}
        Let $i\in I_\textnormal{frozen}$ and let $t\in[\frac{N}{\Delta}]$ be such that $i\in [t(\Delta-1), t\Delta]$. Then we have ---
        \[\Pr[i\in I_{\textnormal{good}}] \geq \frac{1}{2}\cdot\frac{c_{t, X[i]}-1}{\Delta -1}\]
    \end{obs}
    \begin{proof}
        \renewcommand{\qedsymbol}{$\lrcorner$}
        We have ---
        \allowdisplaybreaks
        \begin{small}
        \begin{align*}
            \Pr[i\in I_{\textnormal{good}}] &\geq \Pr\left[\left(i+\Tilde{\delta}\in [t(\Delta-1), t\Delta] \right)\wedge \left(\left(i+\Tilde{\delta} \in I_\textnormal{frozen}\right) \wedge \left(X[i]=Y[i+\Tilde{\delta}]\right)\right)\right]\\
            &= \Pr\left[i+\Tilde{\delta}\in [t(\Delta-1), t\Delta]\right]\cdot \Pr\left[\left(i+\Tilde{\delta} \in I_\textnormal{frozen}\right) \wedge \left(X[i]=Y[i+\Tilde{\delta}]\right)\middle| i+\Tilde{\delta}\in [t(\Delta-1), t\Delta]\right]\\
            &= \frac{1}{2}\cdot\frac{c_{t, X[i]}-1}{\Delta -1}        \qedhere\\
        \end{align*}
                \end{small}
    \end{proof}
So, we can write ---
    \begin{equation}
    \label{ineq:rhs}
        \E\left[|I_{\textnormal{good}}|\right]\geq \frac{1}{2(\Delta-1)}\sum_{{t\in \left[\frac{N}{\Delta}\right]}}\sum_{{{\sigma\in \Sigma}}} c_{t, \sigma}(c_{t, \sigma}-1)
    \end{equation}
    The right hand side of~(\ref{ineq:rhs}) is minimized when $c_{t, \sigma}$'s are all the same and equal to $\frac{N(\frac{1}{2}+\varepsilon)}{\frac{N|\Sigma|}{\Delta}}=\frac{\Delta(\frac{1}{2}+\varepsilon)}{|\Sigma|}$. Furthermore, each $c_{t, \sigma}$ can be replaced with $\frac{\Delta}{2|\Sigma|}$ to obtain a value that is strictly smaller than the minimum value of the right hand side of (\ref{ineq:rhs}). Thus, we have ---
    \allowdisplaybreaks
    \begin{align*}
        \E\left[|I_{\textnormal{good}}|\right]&\geq \frac{1}{2(\Delta-1)}\sum_{{t\in \left[\frac{N}{\Delta}\right]}}\sum_{{{\sigma\in \Sigma}}} c_{t, \sigma}(c_{t, \sigma}-1)\\
        & > \frac{1}{2(\Delta -1)}\cdot\frac{N}{\Delta}\cdot|\Sigma|\cdot \frac{\Delta}{2|\Sigma|}\left(\frac{\Delta}{2|\Sigma|}-1\right)\\
        &\geq \frac{1}{2\Delta}\cdot \frac{N}{2}\left(\frac{\Delta}{2|\Sigma|}-1\right)\\
        & = N\left(\frac{1}{8|\Sigma|}-\frac{1}{4\Delta}\right),
    \end{align*}
    and thus, we have ---
    \[\E\left[\cost_X^Y(\mathcal{A})\right] < 2(\Delta -1) + N\left(\frac{1}{2}+\varepsilon\right)-N\left(\frac{1}{8|\Sigma|}-\frac{1}{4\Delta}\right)\]. Consequently, we have ---
    \begin{align*}\allowdisplaybreaks
        n-\E\left[\cost_X^Y(\mathcal{A})\right] & > n - 2(\Delta -1) - N\left(\frac{1}{2}+\varepsilon\right)+N\left(\frac{1}{8|\Sigma|}-\frac{1}{4\Delta}\right)\\
        &= N\left(\frac{1}{2}-\varepsilon\right) - 2(\Delta -1) - N\left(\frac{1}{2}+\varepsilon\right)+N\left(\frac{1}{8|\Sigma|}-\frac{1}{4\Delta}\right)\\
        &= N\left(-2\varepsilon + \frac{1}{8|\Sigma|}-\frac{1}{4\Delta}\right) - 2(\Delta -1)
    \end{align*}
    Now recall that we had $\varepsilon=\frac{1}{16|\Sigma|}-\rho$. Thus, if we choose $\Delta$ to be large enough such that $\frac{1}{8\Delta}<\rho$, we will have $\varepsilon < \frac{1}{16|\Sigma|}-\frac{1}{8\Delta}$ and,
    \begin{equation}
    \label{eq:cost-diff}
        n - \E\left[\cost_X^Y(\mathcal{A})\right] > \alpha N-2(\Delta-1),
    \end{equation}
    where $\alpha := -2\varepsilon + \frac{1}{8|\Sigma|}-\frac{1}{4\Delta} > 0$. Finally, we note that by making $n$ and, consequently, $N$ sufficiently large, one can make the right-hand side of (\ref{eq:cost-diff}) positive. The conclusion then follows.
    \end{proof}

    Lemma~\ref{lemma:expected-cost-bound} guarantees the existence of a triple $(x_\delta, y_\delta, \mathcal{A}_\delta)\in \mathcal{C}$ such that $\cost_{\varphi(x_\delta)}^{\varphi(y_\delta)}(\mathcal{A}_\delta) < \ham(x_\delta, y_\delta)$. However, this contradicts the assumption that $\varphi$ is an isometric embedding of the Hamming metric into the edit metric.
\end{proof}

We can further use Theorem~\ref{thm:rate-upper-bound} to immediately get an upper bound on the rate on any \(1\)-embedding.

\begin{theorem}
\label{thm:general-rate-upper-bound-non-isometric}
For every alphabet $\Sigma$, there exists an integer $n_0$ such that every $1$-embedding $\varphi: \Sigma^n \to \Sigma^N$ of the Hamming metric into the edit metric with $n\geq n_0$ must have rate at most $\frac{1}{2}$.
\end{theorem}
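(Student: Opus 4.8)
The plan is a short two-case argument on the multiplicative constant $K$ attached to the $1$-embedding $\varphi:\Sigma^n\to\Sigma^N$, i.e., the constant with $\ed(\varphi(x),\varphi(y))=K\cdot\ham(x,y)$ for all $x,y$. The first thing I would record is that $K$ is a positive integer: taking any two inputs at Hamming distance exactly $1$ (such a pair exists whenever $|\Sigma|\ge 2$, which we assume throughout, exactly as in Theorem~\ref{thm:rate-upper-bound}), the defining identity gives $\ed(\varphi(x),\varphi(y))=K$, and edit distance is integer-valued, so $K\in\{1,2,3,\dots\}$.

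The case $K\ge 2$ needs nothing beyond the trivial estimate $\ed(X,Y)\le N$ valid for any two strings $X,Y\in\Sigma^N$ of equal length (align them position by position, paying at most one substitution per mismatch). Choosing $x,y\in\Sigma^n$ with $\ham(x,y)=n$ then yields $Kn=\ed(\varphi(x),\varphi(y))\le N$, so the rate satisfies $\tfrac{n}{N}\le\tfrac1K\le\tfrac12$, with no lower bound on $n$ needed at all.

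The case $K=1$ is precisely the statement that $\varphi$ is an isometric embedding, so Theorem~\ref{thm:rate-upper-bound} applies directly: for all $n\ge n_0$, where $n_0$ is the threshold that theorem supplies for the alphabet $\Sigma$, the rate is at most $\tfrac12-\tfrac1{16|\Sigma|}<\tfrac12$. Taking the $n_0$ in the present statement to be this same threshold, both cases deliver rate at most $\tfrac12$ once $n\ge n_0$, which is the claim.

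There is no genuine obstacle here — the result is "immediate" from Theorem~\ref{thm:rate-upper-bound} in the way the preceding sentence advertises. The only points deserving a moment's care are the integrality of $K$ (hence the tacit $|\Sigma|\ge 2$, consistent with the rest of Section~\ref{sec:UpperBoundsOnTheRate}) and the observation that the two branches can share one threshold $n_0$ since the $K\ge 2$ branch holds for every positive integer $n$.
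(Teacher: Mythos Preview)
Your proposal is correct and follows essentially the same argument as the paper: split on whether $K=1$ (invoke Theorem~\ref{thm:rate-upper-bound}) or $K\ge 2$ (use integrality of $K$ from a Hamming-distance-$1$ pair and the trivial bound $Kn\le N$ from a Hamming-distance-$n$ pair). The only cosmetic difference is that you explicitly note the $K\ge 2$ branch needs no threshold on $n$, which the paper leaves implicit.
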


\begin{proof}

\begin{sloppypar}
If $\varphi$ is an isometric embedding, then Theorem~\ref{thm:rate-upper-bound} implies a rate strictly less than $\frac{1}{2}$. It remains to prove the statement for the case where $\varphi$ is a non-isometric $1$-embedding with $\ed~(\varphi(x),~\varphi(y))=K~\cdot~\ham(x, y) $, for all $x,  y \in \{0, 1\}^n$ for some $K$ not equal to $1$. For two strings $x',y'$ with Hamming distance $1$ we have that $\ed~(\varphi(x'),~\varphi(y'))=K$ and therefore $K$ must be a positive integer and therefore at least two. The rate $n/N$ of $\varphi$ is furthermore at most $1/K$ because for any two strings $x'',y''$ of maximal Hamming distance $n$ it holds that $N \geq \ed~(\varphi(x''),~\varphi(y'')) = K \cdot n$. This shows that the rate of $\varphi$ is at most $\frac{1}{2}$.\qedhere
\end{sloppypar}
\end{proof}

Finally, we briefly discuss about 1-embedding with rate higher than $1/3$. Consider any family of 1-embeddings $\{\varphi_n: \{0,1\}^n \to \{0,1\}^N\}_{n\in\mathbb{N}}$ of the Hamming metric into the edit metric such that for all $n\in \mathbb{N}$, we have $N<3n$. 
If $\varphi_n$ is  isometric, then by Theorem~\ref{thm:isometry-implies-interleaving}, it must admit an interleaved structure as given in Definition~\ref{def:interleaved-embedding}.

On the other hand, if $\varphi_n$ is not isometric, then by the same argument as in the proof of Theorem~\ref{thm:general-rate-upper-bound-non-isometric}, we have that for all $x,y\in \{0,1\}^n$, it must be that $\ed(\varphi_n(x),\varphi_n(y))=2\cdot \ham(x,y)$. We conjecture that even in this case, the embedding must have a structure similar to Definition~\ref{def:interleaved-embedding}. 

\begin{ques}\label{open:structure}
    Fix some $n,N\in \mathbb{N}$ such that $N<3n$. Let $\varphi:\{0,1\}^n \rightarrow \{0,1\}^N$ be a non-isometric 1-embedding of the Hamming metric into the edit metric. Then, prove that there exist disjoint sets of indices $G_1, \dots, G_n,S \subset [N]$, $|G_i|=2$, strings \(W_1, W_2, \ldots , W_n \in \{0, 1\}^2\), $W_S\in \{0,1\}^{N-2n}$, such that $\varphi(x)_{\lvert G_i}=W_i$ if $x_i=0$ and $\varphi(x)_{\lvert G_i}=\overline{W_i}$ if $x_i=1$. Furthermore, the symbols at $S$ are $W_S$ regardless of $x$.
\end{ques}

We remark below that if we have a positive resolution of the above open question, then we can conclude that every 1-embedding with rate more than $1/3$ must be isometric. This would concretely justify the interest to only search for high rate isometric embeddings.

\begin{remark}\label{rem:1embed}
Assuming a positive resolution to Open Question~\ref{open:structure}, we now argue that any family of $1$-embeddings with rate greater than $1/3$ must be isometric. Suppose not, and fix a non-isometric $1$-embedding $\varphi:\{0,1\}^n \to \{0,1\}^N$ with rate greater than $1/3$. Then $\varphi$ admits an interleaved structure as in Open Question~\ref{open:structure}. As in the proof of Theorem~\ref{thm:rate-upper-bound}, we can now set up an appropriate collection of triples $(x,y,\mathcal{A})$ and apply the probabilistic method to derive a contradiction. The only modification to the argument is that we can no longer assert that the ``mutable'' symbols are never substituted. Nevertheless, we can choose strings in a way such that at most half of the ``mutable'' symbols are substituted, and this suffices, since we only need to rule out rates exceeding $1/3$.
\end{remark}

\section{Breaking the Rate Barrier via Larger Output Alphabets}

\label{sec:close-to-1}

Up to this point, we have only considered embeddings where the input and output alphabets have the same size. Theorem~\ref{thm:rate-upper-bound} tells us that in this setting, the rate is necessarily bounded away from $\frac{1}{2}$. While this is the most natural setting, one can also consider embeddings with \textit{different-sized} input and output alphabets. It turns out that by expanding the output alphabet, it is possible to find isometric embeddings with rates \textit{arbitrarily close to 1}! However, before explaining how this is done, we must first refine the definition of rate for such embeddings. In the case where the input and output alphabets are of equal size, the rate is defined as the ratio of the input string’s length to the output string’s length. When using a larger output alphabet, the appropriate measure to consider instead is not length but the \textit{number of bits}. 

\begin{defi}
\label{def:rate}
    The rate of any map $\varphi:\Sigma_{\textnormal{in}}\to \Sigma_{\textnormal{out}}$ is the ratio $\frac{n\log|\sigmain|}{N\log|\sigmaout|}$.
\end{defi}
    
Recall the naive embedding from Section~\ref{sec:sync-proof-overview}, which inserts a new symbol after each input bit. For that embedding, we have $N=2n$, $\Sigma_{\text{in}} = \{0, 1\}$, and $\Sigma_{\text{out}} = \{0, 1, \dots, n+1\}$. Thus, even though $\frac{n}{N}$ is as large as $\frac{1}{2}$, the rate, as defined in Definition~\ref{def:rate}, is actually only $\Theta(\frac{1}{\log n})$.

We are now ready to state the main result of this section.
\begin{theorem}
\label{thm:rate-close-to-1}
    For every $\rho>0$, there exist alphabets $\sigmain$, $\sigmaout$ and a family of functions $\varphi_n:\sigmain^n\to \sigmaout^{N}$ for each positive integer $n$ such that the following hold.
    \begin{enumerate}
        \item For every $n$, $\varphi_n$ is an isometric embedding of the Hamming metric into the edit metric, i.e., for all $x, y\in \sigmain^n$, we have $\ed(\varphi_n(x), \varphi_n(y))=\ham(x, y)$.
        \item The rate of $\varphi_n$ is at least $(1-\rho)$.
    \end{enumerate}
\end{theorem}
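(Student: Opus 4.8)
The plan is to pick a large input alphabet and an only-slightly-larger output alphabet, and to use an interleaved embedding in the spirit of Theorem~\ref{thm:rate-close-to-1/3}, but now paying the "interleaving tax" in a different currency: instead of inserting extra \emph{symbols} from the large alphabet, we insert extra \emph{low-entropy} symbols that barely cost anything in the bit-measure of Definition~\ref{def:rate}. Concretely, fix $\rho>0$, let $q$ be a large integer to be chosen, set $\sigmain:=[q]$, and let $\sigmaout:=[q]\times\{\square,\blacksquare\}$ (so $|\sigmaout|=2q$), thought of as "the same $q$ data symbols, each flagged either clear or marked." The idea is that only a $\frac{1}{\log q + 1}$-fraction of the entropy of an output symbol is the flag bit, so padding with a modest number of flag-only symbols costs a vanishing fraction of the rate as $q\to\infty$. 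The embedding takes $x\in[q]^n$, inserts after each input symbol a short fixed gadget drawn from a locally self-matching string over the marked copies, and tags each original input symbol as clear; we will choose the gadget length to be $2$ marked symbols, so $N=3n$ but the bit-rate is $\frac{n\log q}{3n\log(2q)}=\frac{\log q}{3\log q+3}$, which is still only $\approx\frac13$. To break $\frac13$ we must also \emph{shorten} the gadget relative to the input: use a locally self-matching string over $[q]\times\{\blacksquare\}$ and interleave it at a $k\!:\!1$ ratio (one marked-block of length roughly $\frac{1}{k}$ per input symbol), exactly as in the proof of Theorem~\ref{thm:rate-close-to-1/3} but with $R$ pushed toward $1$; the marked-vs-clear flag is what now \emph{prevents} misalignments, playing the role the large alphabet played there, so we no longer need the locally-self-matching string to be long relative to the input — we only need it long enough that its symbols are distinct over the relevant window, which a $\Theta(1/\varepsilon^2)$-size alphabet gives via Theorem~\ref{thm:lll-string}.

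The concrete construction I would write down: choose an integer $k$ with $\frac{k}{k+1}\ge 1-\rho'$ for a suitable $\rho'$ depending on $\rho$ and on the $\log(2q)/\log q$ loss. Let $n':=\lceil n/k\rceil$, let $w\in([Q]\times\{\blacksquare\})^{n'}$ be an $\varepsilon$-locally self-matching string over an alphabet of size $Q=\Theta(1/\varepsilon^2)$ given by Theorem~\ref{thm:lll-string} (we can embed $[Q]\times\{\blacksquare\}$ inside $\sigmaout$ provided $q\ge Q$, which we are free to ensure), and define
\[
\varphi_n(x) \;:=\; \bigl(x[1],\square\bigr)\,\bigl(x[2],\square\bigr)\cdots\bigl(x[k],\square\bigr)\; w[1]\; \bigl(x[k+1],\square\bigr)\cdots\bigl(x[2k],\square\bigr)\; w[2]\;\cdots,
\]
that is, one marked separator symbol after every block of $k$ clear data symbols, so $N=n+n'$ and the bit-rate is $\frac{n\log q}{(n+n')\log(2q)}\ge\frac{k}{k+1}\cdot\frac{\log q}{\log q + 1}$, which exceeds $1-\rho$ once $k$ and $q$ are large enough.

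For isometry I would follow the template of the proofs of Theorem~\ref{thm:main} and Theorem~\ref{thm:rate-close-to-1/3} verbatim: assume some optimal alignment $\mathcal{A}$ between $\varphi_n(x)$ and $\varphi_n(y)$ has cost below $\ham(x,y)$, factor the output indices into maximal vertical and nowhere-vertical intervals, and extract a nowhere-vertical interval $I$ with $\cost(\mathcal{A}_{|I})<\ham(X_{|I},Y_{|I})$. Call the clear (data) symbols mutable and the marked (separator) symbols frozen. Two observations finish it. First, because the separator symbols carry the $\blacksquare$ flag and the data symbols carry $\square$, \emph{no mutable symbol can ever be matched to a frozen symbol}: any alignment edge between a clear and a marked symbol is automatically a substitution, so on the interval $I$ the only matches available between $X_{|I}$ and $Y_{|I}$ are clear–clear or marked–marked. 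Deleting the mutable symbols from $X_{|I}$ and $Y_{|I}$ leaves substrings $s, s'$ of (copies of) $w$; since any nowhere-vertical matching of $X_{|I}$ to $Y_{|I}$ restricts to a nowhere-vertical matching of the marked parts, and at most one in every $k+1$ consecutive output symbols is mutable, the same counting as in Case~2 of the proof of Theorem~\ref{thm:rate-close-to-1/3} gives $\slcs(s,s)\ge\slcs(X_{|I},Y_{|I})-2k_I-O(1)$ where $k_I$ is the number of mutable symbols in $I$; combining $k_I\le\frac{|I|}{k+1}+1$ with the local-self-matching bound $\slcs(s,s)<\varepsilon|s|\le\varepsilon|I|$ yields, for $k$ large enough relative to $1/\varepsilon$, a numeric contradiction with $\cost(\mathcal{A}_{|I})<\ham(X_{|I},Y_{|I})\le k_I$. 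For short intervals one instead uses that the marked symbols in any window of length $O(1/\varepsilon)$ are all distinct (Remark~\ref{rem: three-distinct}-style argument), so $\slcs(s,s)=0$ and the contradiction is immediate, exactly as in Case~1 there.

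The main obstacle I anticipate is bookkeeping the \emph{boundary effects and the short-interval case} so that the "at most one mutable symbol per $k+1$" density bound and the "separators in a small window are distinct" property interact cleanly — this is where the proof of Theorem~\ref{thm:rate-close-to-1/3} had to split into $|I|$ vs.\ multiples of $3$ and handle the first/last symbols of $I$ being mutable; here the analogous case split is on $|I|$ vs.\ multiples of $k+1$, and one must be careful that forcing $X_{|I},Y_{|I}$ to start and end with mutable symbols (which is free, since aligning equal leading/trailing symbols only helps) kills the right number of potential nowhere-vertical matches. A secondary point to get right is choosing $q$ large enough simultaneously to (i) contain the size-$Q=\Theta(1/\varepsilon^2)$ alphabet needed by Theorem~\ref{thm:lll-string}, and (ii) make $\frac{\log q}{\log q+1}\ge 1-\rho/2$; both are satisfied by taking $q$ exponentially large in $1/\rho$ and $1/\varepsilon$, and since $\varepsilon$ need only be a fixed small constant (it does not go to $0$ with $n$), all parameters are absolute constants depending on $\rho$ alone, as required.
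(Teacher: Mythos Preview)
Your construction is not isometric, and the issue is not just bookkeeping. Take $k=3$, let $x=(1,2,3,1,2,3,\ldots)$ and $y=(3,1,2,3,1,2,\ldots)$, so $\ham(x,y)=n$. Under the shift-by-one alignment $X[i]\mapsto Y[i+1]$, every block of $k+1=4$ output positions contributes two matches (clear-to-clear, since $x[i]=y[i+1]$ inside the block) and two substitutions (clear meets marked at the separator boundary). The total cost is about $\tfrac{2n}{k}+2$, which for any $k\geq 3$ is strictly below $n$. Since you need $k$ large to push the rate toward $1$, this breaks the embedding for exactly the parameter range you care about.

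The concrete slip in your isometry argument is the density claim ``at most one in every $k+1$ consecutive output symbols is mutable'': in your construction it is the \emph{frozen} (marked) symbols that are sparse, so the number $k_I$ of mutable symbols in $I$ is about $\tfrac{k|I|}{k+1}$, not $\tfrac{|I|}{k+1}$. With the roles reversed, your inequality $\slcs(s,s)\geq\slcs(X_{|I},Y_{|I})-2k_I-O(1)$ has a negative right-hand side and yields no contradiction. More fundamentally, placing the locally self-matching string at only a $\tfrac{1}{k+1}$ fraction of positions cannot control nowhere-vertical matches among the remaining $\tfrac{k}{k+1}$ data positions, which can align freely with each other.

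The paper's remedy is to put synchronization information at \emph{every} output position via the product alphabet: take $\sigmaout=\sigmain\times\Sigma'$ with $|\Sigma'|=\Theta(\varepsilon^{-2})$, and define $\varphi_n(x)[i]=(x_{\mathrm{padded}}[i],\,w[i])$ where $w$ is an $\tfrac{\varepsilon}{2}$-locally self-matching string over $\Sigma'$ and $x_{\mathrm{padded}}$ inserts a fixed symbol every $m$-th position. Now any nowhere-vertical alignment on an interval $I$ already pays $\selfed(w_{|I},w_{|I})\geq(1-\tfrac{\varepsilon}{2})|I|$ on the second coordinate alone, while the padding caps $\ham(X_{|I},Y_{|I})\leq(1-\varepsilon)|I|+1$. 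The rate stays near $1$ because $\log|\sigmain|$ is taken much larger than $\log|\Sigma'|$. Your product-alphabet instinct was right, but the extra coordinate must carry the full self-matching string, not just a one-bit data/separator flag.
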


\noindent Before we prove Theorem~\ref{thm:rate-close-to-1}, we first make the following observation about locally self-matching strings.

\begin{obs}
\label{obs:lcs-implies-edit}
    Let $w$ be an $\varepsilon$-locally self matching string. Then for every substring $s$ of $w$, we have:
    \[\selfed(s, s) \geq (1-\varepsilon)|s|\]
\end{obs}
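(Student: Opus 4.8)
The plan is to lower-bound $\selfed(s,s)$ directly for an arbitrary substring $s$ of $w$, by analyzing a cheapest nowhere-vertical alignment of $s$ with itself and observing that its ``match'' pairs (after discarding the substitutions) form a nowhere-vertical common subsequence alignment of $s$ with $s$, which the $\varepsilon$-locally self-matching hypothesis forces to be short.

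First I would fix $s$, set $\ell := |s|$, and pick a cheapest nowhere-vertical edit distance alignment $\mathcal{A}$ between $s$ and $s$, so that $\cost(\mathcal{A}) = \selfed(s,s)$. The key bookkeeping is to split the cost of $\mathcal{A}$ by operation type. Since the first coordinates of the pairs of $\mathcal{A}$ are distinct indices in $[\ell]$, exactly $\ell - |\mathcal{A}|$ indices of the left copy of $s$ are unmatched, i.e. $|D_s^s(\mathcal{A})| = \ell - |\mathcal{A}|$, and symmetrically $|I_s^s(\mathcal{A})| = \ell - |\mathcal{A}|$. Writing $m$ for the number of pairs of $\mathcal{A}$ that are genuine matches (pairs $(i,j)$ with $s[i]=s[j]$), so that $|S_s^s(\mathcal{A})| = |\mathcal{A}| - m$, this gives
\[
\selfed(s,s) \;=\; |S_s^s(\mathcal{A})| + |D_s^s(\mathcal{A})| + |I_s^s(\mathcal{A})| \;=\; (|\mathcal{A}| - m) + 2(\ell - |\mathcal{A}|) \;=\; 2\ell - |\mathcal{A}| - m \;\ge\; \ell - m,
\]
where the last step uses $|\mathcal{A}| \le \ell$.

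The remaining ingredient is the bound $m < \varepsilon\ell$. Here I would note that deleting the substitution pairs from $\mathcal{A}$ leaves a set of pairs that is still strictly increasing in both coordinates, still nowhere-vertical, and substitution-free --- that is, a nowhere-vertical common subsequence alignment of $s$ with $s$ exhibiting a nowhere-vertical common subsequence of length $m$ --- so by definition of $\slcs$ we get $\slcs(s,s) \ge m$; and since $s$ is a substring of the $\varepsilon$-locally self-matching string $w$, Definition~\ref{def: self-matching} gives $\slcs(s,s) < \varepsilon\ell$. Combining, $\selfed(s,s) \ge \ell - m > (1-\varepsilon)|s|$, which is exactly the claim (in fact with a strict inequality). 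I do not expect any real obstacle: this is a short counting argument, and the only points that need a moment's care are the two structural facts used above, namely $|D_s^s(\mathcal{A})| = |I_s^s(\mathcal{A})| = \ell - |\mathcal{A}|$ and that restricting an alignment to its matched pairs again yields a valid nowhere-vertical alignment; both follow straight from the definitions in Section~\ref{sec:prelim}. (When $s$ is empty the statement is trivial, so one may assume $\ell \ge 1$.)
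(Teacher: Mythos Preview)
Your proof is correct and follows essentially the same approach as the paper: take an optimal nowhere-vertical alignment, observe that the non-substitution pairs witness $\slcs(s,s)\ge m$, and conclude $\selfed(s,s)\ge |s|-m > (1-\varepsilon)|s|$. The only cosmetic difference is that the paper drops the insertion term immediately (writing $\cost(\mathcal{A})\ge |S_s^s(\mathcal{A})|+|D_s^s(\mathcal{A})| = |s|-m$) rather than computing the full cost $2\ell-|\mathcal{A}|-m$ and then using $|\mathcal{A}|\le \ell$; both routes land on the same bound.
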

\begin{proof}
    Let $\mathcal{A}$ be an optimal nowhere-vertical edit distance alignment between $s$ and itself. We have:
    \[\selfed(s, s) =\cost_s^s(\mathcal{A})\geq|S_s^s(\mathcal{A})|+|D_s^s(\mathcal{A})|\geq |s|-\slcs(s, s)\geq (1-\varepsilon)|s|\qedhere\]
\end{proof}

\begin{proof}[Proof of Theorem~\ref{thm:rate-close-to-1}]
Fix any $\rho>0$. 
Define $m:=\left\lceil\frac{C(1-\rho)+1}{\rho}\right\rceil$, where $C\geq 1$ is a constant to be chosen later. Furthermore, let $\varepsilon :=\frac{1}{m}$ and $\varepsilon' = \frac{\varepsilon}{2}$. We will choose $\sigmain=\left[2^{\frac{1}{\varepsilon}\log(\frac{1}{\varepsilon})}\right]$ as our input alphabet. For our output alphabet, we choose $\sigmaout = \sigmain\times \Sigma'$, where $\Sigma':=\left[\frac{5e^2}{\varepsilon'^2}\right]$.

Fix any positive integer $n$ and define $N:= n+\left\lfloor\frac{n}{m-1}\right\rfloor$. Our embedding map $\varphi_n:\sigmain^n\to \sigmaout^N$ is defined as follows. First, we fix some $\varepsilon'$-locally self-matching string $w$ of length $N$ over the alphabet $\Sigma'$. Since $\varepsilon' \leq \frac{1}{2}$, by Theorem~\ref{thm:lll-string}, such a string $w$ always exists. We also fix some arbitrary symbol $\sigma\in \sigmain$. Next, given any $x\in \sigmain^n$, we construct $x_{\textnormal{padded}}\in \sigmain^N$ by inserting the symbol $\sigma$ after every $(m-1)$ symbols in $x$. Finally, we define $\varphi_n(x)\in \sigmaout^N$ to be the \textit{product} of the strings $x_{\textnormal{padded}}$ and $w$. In other words, we define:
\[\varphi_n(x):= (x_{\textnormal{padded}}[1], w[1])\circ (x_{\textnormal{padded}}[2], w[2])\circ \cdots \circ (x_{\textnormal{padded}}[N],w[N])\]
We first claim that if $C$ is appropriately chosen, then $\varphi_n$ indeed has rate at least $1-\rho$.
\begin{claim}
    If $C\geq 10$, then the rate of $\varphi_n$ is at least $1-\rho$.
\end{claim}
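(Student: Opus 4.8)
The plan is to write the rate $\frac{n\log|\sigmain|}{N\log|\sigmaout|}$ as the product of a length ratio $\frac{n}{N}$ and an alphabet ratio $\frac{\log|\sigmain|}{\log|\sigmaout|}$, bound each factor separately, and then feed in the defining inequality of $m$. First I would make the alphabet sizes explicit. Since $\varepsilon=\frac1m$, we have $|\sigmain| = 2^{\frac1\varepsilon\log\frac1\varepsilon} = 2^{m\log m} = m^m$, so $\log|\sigmain| = m\log m$ \emph{exactly}. Since $\varepsilon'=\frac{\varepsilon}{2}=\frac1{2m}$, we have $|\Sigma'| = \bigl\lceil\frac{5e^2}{\varepsilon'^2}\bigr\rceil = \lceil 20e^2m^2\rceil$, so writing $L:=\log|\Sigma'|$ one gets $L \le \log(20e^2m^2+1) < 2\log m + 8$ for every $m\ge 1$ (here $\log(20e^2)<7.22$ and the ceiling costs only a negligible additive amount). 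As $\sigmaout=\sigmain\times\Sigma'$, this gives $\log|\sigmaout| = m\log m + L$. Finally, we may assume $\rho<1$ (for $\rho\ge1$ the bound $1-\rho\le0$ is met by any positive-rate embedding), and then $m = \bigl\lceil\frac{C(1-\rho)+1}{\rho}\bigr\rceil \ge \lceil\frac1\rho\rceil \ge 2$, so $\log m\ge1$; this is the only place $m\ge2$ is used.

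Next I would bound the two factors. From $N = n + \lfloor\frac{n}{m-1}\rfloor \le \frac{nm}{m-1}$ we get $\frac nN \ge \frac{m-1}{m}$, so
\[
\mathrm{rate} \;\ge\; \frac{m-1}{m}\cdot\frac{m\log m}{m\log m + L} \;=\; \frac{(m-1)\log m}{m\log m + L},
\]
and consequently
\[
1-\mathrm{rate} \;\le\; 1-\frac{(m-1)\log m}{m\log m + L} \;=\; \frac{\log m + L}{m\log m + L}.
\]
It therefore suffices to prove $\frac{\log m + L}{m\log m + L} \le \rho$, equivalently $\log m + (1-\rho)L \le \rho\, m\log m$.

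To finish, I would invoke the definition of $m$ together with $C\ge10$. From $m\ge\frac{C(1-\rho)+1}{\rho}$ we get $\rho m \ge C(1-\rho)+1 \ge 10(1-\rho)+1$, hence $\rho\,m\log m \ge 10(1-\rho)\log m + \log m$. On the other hand, $m\ge2$ gives $L < 2\log m + 8 \le 10\log m$, so $(1-\rho)L \le 10(1-\rho)\log m$. Adding $\log m$ to both sides of this last inequality yields $\log m + (1-\rho)L \le \log m + 10(1-\rho)\log m \le \rho\,m\log m$, which is exactly the desired inequality; thus $1-\mathrm{rate}\le\rho$. I do not expect a genuine obstacle here — the argument is pure bookkeeping, and the only real ``design choice'' is the constant $C=10$, chosen precisely so that the $2\log m+8$ overhead coming from $|\Sigma'|$ is absorbed by $C\log m$ as soon as $m\ge2$. (If the isometry part of the proof of Theorem~\ref{thm:rate-close-to-1} forces a larger $C$, one simply takes the maximum of the two requirements, which does not affect this estimate.)
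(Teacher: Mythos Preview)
Your proof is correct and follows essentially the same approach as the paper: both factor the rate as $\frac{n}{N}\cdot\frac{\log|\sigmain|}{\log|\sigmaout|}$, bound the first factor by $\frac{m-1}{m}=1-\varepsilon$, use $C\ge 10$ to absorb $\log|\Sigma'|\le C\log m$, and then invoke the defining inequality $m\ge\frac{C(1-\rho)+1}{\rho}$ to conclude. Your version is slightly more careful about the ceiling in $|\Sigma'|$, and your closing parenthetical is unnecessary (the constant $C$ appears only in this rate computation, not in the isometry argument), but the substance is the same.
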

\begin{proof}
    \renewcommand{\qedsymbol}{$\lrcorner$}
    First note that $N=n+\left\lfloor\frac{n}{m-1}\right\rfloor \leq \frac{nm}{m-1}=\frac{n}{1-\varepsilon}$. Therefore, $\frac{n}{N}\geq 1-\varepsilon$. Furthermore, since $C\geq 1$ and $\rho<1$, $m=\left\lceil\frac{C(1-\rho)+1)}{\rho}\right\rceil \geq 2$. Consequently, $\varepsilon\leq \frac{1}{2}$ and we have $\log|\Sigma'|= \log\left(\frac{5e^2}{\varepsilon'^2}\right)=\log(20e^2)+2\log\left(\frac{1}{\varepsilon}\right) \leq C\log\left(\frac{1}{\varepsilon}\right)$ for all $\varepsilon\leq \frac{1}{2}$ if $C\geq 10$. Therefore, the rate of $\varphi_n$ is:
    \allowdisplaybreaks
    \begin{align*}
        \frac{n\log|\sigmain|}{N\log|\sigmaout|}&=\frac{n}{N}\cdot\frac{\log|\sigmain|}{\log|\sigmain|+\log|\Sigma'|}\\
        &\geq(1-\varepsilon)\cdot\frac{\log\left(2^{\frac{1}{\varepsilon}\log\left(\frac{1}{\varepsilon}\right)}\right)}{\log\left(2^{\frac{1}{\varepsilon}\log\left(\frac{1}{\varepsilon}\right)}\right)+C\log\left(\frac{1}{\varepsilon}\right)}\\
        &=(1-\varepsilon)\cdot \frac{\frac{1}{\varepsilon}\log\left(\frac{1}{\varepsilon}\right)}{\frac{1}{\varepsilon}\log\left(\frac{1}{\varepsilon}\right)+C\log\left(\frac{1}{\varepsilon}\right)}\\
        &=\frac{1-\varepsilon}{1+C\varepsilon}\\
        &=1-\frac{(C+1)\varepsilon}{1+C\varepsilon}\\
    \end{align*}
    Now recall that $\varepsilon\leq \frac{\rho}{C(1-\rho)+1}$, and so, $\rho \geq \frac{(C+1)\varepsilon}{1+C\varepsilon}$. The conclusion then follows.
\end{proof}
Next, we show that $\varphi_n$ is indeed isometric for every positive integer $n$. Suppose not; then there exists a positive integer $n$ and $x, y\in \sigmain^n$ such that $\ed(\varphi_n(x), \varphi_n(y))<\ham(x, y)$. Let $X:= \varphi_n(x)$, $Y:=\varphi_n(y)$, and $\mathcal{A}$ be any optimal edit distance alignment between $X$ and $Y$. Since $\ham(x, y)=\ham(X, Y)$, by assumption, we have $\cost(\mathcal{A}) <\ham(X, Y)$. Similar to the proof of Theorem~\ref{thm:main}, consider the partition of $[N]$ into alternating maximal vertical and maximal nowhere-vertical intervals under $\mathcal{A}$. By the same argument as in the proof of Theorem~\ref{thm:main}, there must exist a nowhere vertical interval $I$ in this partition such that $\cost(\mathcal{A}_{|I})<\ham(X_{|I}, Y_{|I})$. We first note 
 that $|I|\geq \frac{2}{\varepsilon}$. Suppose not. Then, since $w$ is a $\frac{\varepsilon}{2}$-locally self-matching string, our construction forces $w_{|I}$ to have all distinct symbols. In this case, we have:
 \[\cost(\mathcal{A}_{|I})\geq \selfed(X_{|I}, Y_{|I}) \geq \selfed(w_{|I}, w_{|I}) = |I|+2 > \ham(X_{|I}, Y_{|I}).\]
 Therefore, we may assume that $|I|\geq \frac{2}{\varepsilon}$. But for any $|I|\geq \frac{2}{\varepsilon}$, we have $(1-\frac{\varepsilon}{2})|I| \geq (1-\varepsilon)|I|+1$. And since in $X$ and $Y$, every $(\frac{1}{\varepsilon})^{\text{th}}$ symbol is the same, $(1-\varepsilon)|I|+1 $ is an upper bound on $\ham(X_{|I}, Y_{|I})$.  Combining all this, we have:
 \[\cost(\mathcal{A}_{|I})\geq \selfed(X_{|I}, Y_{|I}) \geq \selfed(w_{|I}, w_{|I}) \geq \left(1-\frac{\varepsilon}{2}\right)|I| \geq  (1-\varepsilon)|I|+1 \geq \ham(X_{|I}, Y_{|I}),\]
 which is a contradiction.
\end{proof}

\subsection*{Acknowledgements}
Part of the work of Sudatta Bhattacharya, Sanjana Dey, Elazar Goldenberg and Michal Kouck\'y
was carried out during a visit to DIMACS, with support from the National Science Foundation
under grant number CCF-1836666. Sudatta Bhattacharya and Michal Koucky were partially supported by the Grant Agency of the Czech Republic under the grant agreement no. 24-10306S and
by Charles Univ. project UNCE 24/SCI/008. Mursalin Habib, Bernhard Haeupler, and Karthik C. S. were partially funded by the Ministry of Education and Science of Bulgaria’s support for
INSAIT as part of the Bulgarian National Roadmap for Research Infrastructure. Bernhard Haeupler was partially funded through the European Research Council (ERC) under the European
Union’s Horizon 2020 research and innovation program (ERC grant agreement 949272). Elazar
Goldenberg, Mursalin Habib, and Karthik C. S. were partially supported by the National Science
Foundation Grant CCF-2313372. Mursalin Habib and Karthik C. S. were partially supported by
the National Science Foundation Grant CCF-2443697 and a grant from the Simons Foundation,
Grant Number 825876, Awardee Thu D. Nguyen. Sanjana Dey was partially supported by the
Fonds de la Recherche Scientifique – FNRS under Grant n° T.0188.23 (PDR ControlleRS).

\bibliographystyle{alpha}
\bibliography{refs}

\appendix

\section{Searching for a Misaligner}
\label{sec:misaligner-code-description}

\begin{sloppypar}
    In this section, we give a high-level description of our program used to discover the \((320, 676, 8, 0.1625)\)-misaligner from Lemma~\ref{lemma:misaligner-exists}. The naive way of building the collection of strings or codewords would involve starting with three randomly generated codewords satisfying Properties \ref{prop:rate-of-wildcards} and \ref{prop:short-intervals}. Then, at each iteration, a candidate string satisfying Property \ref{prop:rate-of-wildcards} would be generated and tested against the remaining properties. Note that Property~\ref{prop:short-intervals} is easy to check and can be done efficiently. Property~\ref{prop: boundary-blocks} can also be checked naively. Even Property~\ref{prop: full-block-distance-2} can be afforded to be checked naively because we only need to take every pair of strings and compare them with the new string. The most costly property is \ref{prop: full-block-distance-1}, where we need to test the new string $s$ exhaustively against every triple of strings $c_1,c_2,c_3$ already in the collection -- verifying that the edit distance between $s$ and every substring of $c_1 \circ c_2 \circ c_3$ (where a prefix of $s$ aligns with a suffix of $c_1$, a middle portion of $s$ with $c_2$, and a suffix of $s$ with a prefix of $c_3$) meets the required bound. This approach becomes prohibitively slow, especially because Property \ref{prop: full-block-distance-1} requires checking many such alignments. To address this, we need to adopt a more efficient strategy.
\end{sloppypar}


In contrast to the naive approach, our scheme in \texttt{ced-320.cpp} builds a collection that is guaranteed to satisfy Property \ref{prop: full-block-distance-1} (while ignoring the other properties). Instead of testing a new string $s$ against every possible triple $(c_1, c_2, c_3)$, the program first performs an extensive preprocessing phase. It generates a large number of random strings and computes their pairwise edit distances---focusing separately on the prefix of one string with every other entire string, and the suffix of one string with every other entire string, and also for every possible middle portion of one string with every other entire string. These computations are accumulated into statistical tables. \texttt{Pstat} stores the number of pairs of strings with a certain edit distance when the prefix of one is compared to the other string. Similarly, \texttt{Sstat} and \texttt{Mstat} record the corresponding statistics for suffixes and middle sections, respectively. Then, for each possible length of suffix, prefix, or middle section, thresholds are computed as the typical values derived from the stored statistics. By determining the edit distance thresholds (stored in arrays like \texttt{Pthr}, \texttt{Sthr}, and \texttt{Mthr}) corresponding to a very small violation rate (controlled by the parameter \texttt{RANKFRAC}), the program creates a set of ``filters'' that a candidate string must pass.

Now, to build the collection, every new candidate string $c$ is checked again against every string $c'$ already in the collection: the edit distance of every prefix of $c$ with the entire $c'$, the edit distance of every suffix of $c$ with the entire $c'$, and the edit distance of every possible middle substring of $c$ with the entire $c'$ are verified against the corresponding thresholds. 
We also check the thresholds for $c$ and $c'$ interchanged. If any of them is violated, then $c$ isn't added to the collection. 
In this way, for every string $c$ in the final collection, every other string is acceptable when aligning any prefix, suffix, or middle portion of $c$ with it. 
This precisely captures condition \ref{prop: full-block-distance-1}, which requires that when a string $c$ is aligned against the concatenation of three other strings $c_1, c_2, c_3$ (with a prefix of $c$ aligned to $c_1$, 
a middle portion to $c_2$, and a suffix to $c_3$), all such alignments satisfy the specified edit distance thresholds.

This preprocessing-based filtering dramatically reduces the computational load and allows \texttt{ced-320.cpp} to efficiently generate an initial collection where every string satisfies Property~\ref{prop: full-block-distance-1}. Note that the most computationally heavy step is to check Property~\ref{prop: full-block-distance-1}, so we optimized this step by performing extensive preprocessing to compute statistical thresholds that serve as fast filters. This optimization avoids the need for an exhaustive check of every candidate string against every possible triple of existing strings. The rest of the properties are checked naively using the code in \texttt{ted-320.cpp}.
This code checks whether each of the properties is satisfied, and if it is not satisfied it suggests strings that should be removed from the collection to satisfy a given property.

\section{Explicit Bounds for Synchronization Strings on Alphabets of Size Four}
\label{sec:sync-4}

In this section, we prove Corollary~\ref{cor:sync-4}, which states that there exists a $0.999606$-synchronization string over an alphabet of size four. To do so, we first need the following definition.
\begin{defi}
    A string $w$ of length $n$ is called an $\varepsilon$-weak synchronization string if for all $1 \leq i < j < k \leq n$ such that $k-i\geq \frac{1}{1-\varepsilon}$, we have $\ed\left(w_{|[i, j-1]}, w_{|[j, k-1]}\right) \geq (1-\varepsilon)(k-i)$.
\end{defi}
\begin{lemma}
\label{lemma:weak-sync}
    There exists an infinite 0.995268-weak synchronization string over the binary alphabet.
\end{lemma}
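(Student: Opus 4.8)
The goal is to prove Lemma~\ref{lemma:weak-sync}: there exists an infinite $0.995268$-weak synchronization string over the binary alphabet.

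\medskip

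\noindent\textbf{The plan.} The natural route is to mimic the probabilistic argument used in Theorem~\ref{thm:lll-string}, but now over a binary alphabet, using the weaker notion (which only demands that \emph{consecutive} substrings $w_{|[i,j-1]}$ and $w_{|[j,k-1]}$ are far in ordinary edit distance, rather than controlling the nowhere-vertical self-LCS of every substring). The weak notion is exactly what is needed to later lift a large-alphabet $\varepsilon$-locally self-matching string to a binary $\varepsilon'$-synchronization string by a symbol-by-symbol encoding: each symbol of the large alphabet gets replaced by a binary block, and the weak synchronization property of the binary ``gadget'' string guarantees that misaligned blocks are penalized. So the first step is to sample $w$ uniformly at random from $\{0,1\}^n$ (or via a slightly constrained random process, e.g.\ forbidding very short repeats, if that tightens the constants), and to bound, for a fixed triple $i<j<k$ with $\ell := k-i$ large enough, the probability of the bad event $B_{i,j,k} = \{\ed(w_{|[i,j-1]}, w_{|[j,k-1]}) < (1-\varepsilon)\ell\}$.

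\medskip

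\noindent\textbf{Key steps.} (1) Union-bound the bad event over the choice of a common subsequence alignment witnessing $\ed < (1-\varepsilon)\ell$: there are at most $\binom{\ell}{\varepsilon\ell}^2$ ways to choose which positions of the two halves are matched, and once the matched positions are fixed the $\varepsilon'$-fraction-many... more precisely, a length-$(1-\varepsilon)\ell$ common subsequence forces roughly $(1-\varepsilon)\ell/2$ independent coordinate-equality constraints between fresh random bits, each holding with probability $\tfrac12$ (here is where the binary alphabet hurts relative to the $\Theta(1/\varepsilon^2)$-size alphabet of Theorem~\ref{thm:lll-string}, and why $\varepsilon$ must be taken so close to $1$). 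This yields $\Pr[B_{i,j,k}] \le \binom{\ell}{\varepsilon\ell}^2 \cdot 2^{-c(1-\varepsilon)\ell}$ for an appropriate constant $c$, which we want to be of the form $D^{-\varepsilon'\ell}$ with $D>1$ after plugging $\varepsilon = 0.995268$ and optimizing. (2) Observe, as in the cited Lemma from \cite{cheng2018synchronization}, that $B_{i,j,k}$ and $B_{i',j',k'}$ are independent whenever the index intervals $[i,k-1]$ and $[i',k'-1]$ are disjoint, so each bad event depends on only $O(\ell)$ others at each scale. (3) Apply the Lov\'asz Local Lemma with weights $x_{i,j,k}$ depending only on $\ell = k-i$, reducing to a one-variable inequality of the same shape as \eqref{eq:main}, and verify it holds for the specific numerical value $\varepsilon = 0.995268$ (this is the place where the exact constant in the statement is pinned down — one chooses $\varepsilon$ as small as the inequality permits). (4) Conclude that a good $w$ of every length $n$ exists, and pass to an infinite string by a standard compactness / K\H{o}nig's-lemma argument (the property is prefix-closed), or by noting that the algorithmic LLL construction in \cite{cheng2018synchronization} extends indefinitely.

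\medskip

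\noindent\textbf{Main obstacle.} The crux is the counting/probability estimate in step (1): over the binary alphabet each enforced match is only a factor-$2$ saving, so the $\binom{\ell}{\varepsilon\ell}^2$ entropy term is barely beaten, and this is precisely why $\varepsilon$ must be pushed up to $0.995268$ rather than being a comfortable small constant. Getting the constant that small requires being careful about (a) exactly how many \emph{fresh} random coordinates a common subsequence of length $(1-\varepsilon)\ell$ between two adjacent length-$\approx\ell/2$ windows actually constrains — naively it is $(1-\varepsilon)\ell/2$ but one can do slightly better by a more careful pairing argument, possibly also exploiting a constrained sampling process that bans short repetitions — and (b) tightening the binomial bound $\binom{\ell}{\varepsilon\ell} \le (e/\varepsilon)^{\varepsilon\ell}$, which for $\varepsilon$ near $1$ should instead be estimated as $\binom{\ell}{(1-\varepsilon)\ell} \le (e/(1-\varepsilon))^{(1-\varepsilon)\ell}$ since $(1-\varepsilon)\ell$ is the small quantity. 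After these refinements the LLL inequality becomes a concrete numerical check, analogous to the claims proved inside Theorem~\ref{thm:lll-string}, which I would verify by the same Taylor-expansion bookkeeping rather than anything conceptually new.
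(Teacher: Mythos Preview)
Your approach is a direct, from-scratch LLL argument on uniformly random binary strings, which is genuinely different from the paper's proof. The paper's argument is a two-line black-box reduction: it invokes Theorem~6.3 of \cite{cheng2018synchronization}, which converts any $\varepsilon'$-synchronization string over an alphabet of size $2^k$ into a binary $\varepsilon$-weak synchronization string with $\varepsilon = 1 - (1-\varepsilon')/(18k)$, and feeds into it the synchronization strings coming from Theorem~\ref{thm:lll-string} (via the standard fact that an $\tfrac{\varepsilon'}{2}$-locally self-matching string is an $\varepsilon'$-synchronization string). The constant $0.995268$ then drops out of numerically minimizing
\[
1 - \frac{1-\varepsilon'/2}{18\,\log_2\!\bigl(\tfrac{4e^2}{\varepsilon'^2}(1+4\sqrt[4]{\varepsilon'/2})\bigr)}
\]
over $\varepsilon'\in(0,\tfrac12]$. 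So the specific value in the lemma is an artifact of the $18k$ loss in that reduction combined with the alphabet bound of Theorem~\ref{thm:lll-string}; a direct probabilistic argument would not reproduce this particular number. (You allude to the symbol-by-symbol encoding idea in your ``plan'' paragraph, but then abandon it for the direct LLL route.)

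Your direct route could in principle establish binary $\varepsilon$-weak synchronization strings for \emph{some} $\varepsilon<1$, but step~(1) as written has the roles of $\varepsilon$ and $1-\varepsilon$ reversed. You say ``a length-$(1-\varepsilon)\ell$ common subsequence forces roughly $(1-\varepsilon)\ell/2$ constraints,'' but the bad event $\ed < (1-\varepsilon)\ell$ with $\varepsilon\approx 0.995$ forces a \emph{long} match set: writing $\cost = \ell - 2m - s$ and using $m+s\le \min(a,b)\le \ell/2$, one gets $m > (\varepsilon-\tfrac12)\ell \approx 0.495\ell$, and it is these $\approx \ell/2$ matched pairs of independent bits that contribute the $2^{-0.49\ell}$ factor. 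The alignment entropy is then $\binom{\ell/2}{\,0.005\ell\,}^2 \approx 2^{0.08\ell}$ (not $\binom{\ell}{\varepsilon\ell}^2$; the halves have length $\approx \ell/2$, not $\ell$), which the probability term comfortably beats. With these corrections the union-bound-plus-LLL scheme does go through and would plausibly give a \emph{better} constant than $0.995268$ --- but that is a different statement from the one you are asked to prove.
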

\begin{proof}
    By Theorem 6.3 in~\cite{cheng2018synchronization}, if $\varepsilon'$-synchronization strings exist over an alphabet of $2^k$, then $\varepsilon$-weak synchronization strings exist over a binary alphabet, where $\varepsilon := 1-\frac{1-\varepsilon'}{18k}$. Furthermore, every $\frac{\varepsilon}{2}$-locally self-matching string is an $\varepsilon$-synchronization string (Theorem 6.4 in~\cite{haeupler2017synchronization}). Thus by Theorem~\ref{thm:lll-string} and finding the minimum value of the expression \[1-\frac{1-\frac{\varepsilon'}{2}}{18\log_{2}\left(\frac{4e^2}{\varepsilon'^2}\cdot \left(1+4\sqrt[4]{\frac{\varepsilon'}{2}}\right)\right)}\] as $\varepsilon'$ ranges in $\left(0, \frac{1}{2}\right]$, one gets the desired conclusion. 
\end{proof}
We also make use of the following lemma.
\begin{lemma}
\label{lemma:weak-sync-to-sync-4}
    Given an infinite $\varepsilon$-weak synchronization string over a binary alphabet, one can construct an infinite $\frac{\varepsilon+11}{12}$-synchronization string over an alphabet of size four.
\end{lemma}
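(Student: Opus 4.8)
The plan is to build the four-letter string by \emph{superimposing} the given infinite $\varepsilon$-weak synchronization string $S$ over $\{0,1\}$ with an auxiliary binary sequence $g$ engineered to destroy short-range repetitions: set $w'[i]:=(S[i],g[i])\in\{0,1\}^2$, an alphabet of size four, and write $\varepsilon':=\frac{\varepsilon+11}{12}$, so that $1-\varepsilon'=\frac{1-\varepsilon}{12}$, and $L:=\frac{1}{1-\varepsilon}$. To certify that $w'$ is an $\varepsilon'$-synchronization string we check the defining inequality $\ed(w'_{|[i,j-1]},w'_{|[j,k-1]})\ge(1-\varepsilon')(k-i)$ for all $i<j<k$, splitting into the \emph{long} regime $k-i\ge L$ and the \emph{short} regime $k-i<L$.

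The long regime is handled by $S$ alone and is insensitive to $g$. For any split $j$, take an optimal edit alignment of $w'_{|[i,j-1]}$ and $w'_{|[j,k-1]}$ and project it onto the first coordinate: matches remain matches and substitutions can only turn into matches, so $\ed(w'_{|[i,j-1]},w'_{|[j,k-1]})\ge\ed(S_{|[i,j-1]},S_{|[j,k-1]})\ge(1-\varepsilon)(k-i)>(1-\varepsilon')(k-i)$, the last step because $\varepsilon'>\varepsilon$. Thus every long interval is safe no matter how $g$ is chosen. In the short regime, $(1-\varepsilon')(k-i)<(1-\varepsilon')L=\tfrac1{12}<1$; since edit distance is a non-negative integer and is automatically at least $1$ whenever the two halves have unequal length, the only intervals that can bind are those of even length $k-i=2\ell$ split at the midpoint, where the requirement collapses to $w'_{|[i,i+\ell-1]}\ne w'_{|[i+\ell,i+2\ell-1]}$. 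So it suffices to arrange that $w'$ has no square $uu$ with $|u|\le L/2$.

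The main obstacle is precisely constructing $g$ so that $(S,g)$ is square-free at these short scales. The leverage is that $S$, being an $\varepsilon$-weak synchronization string, is very far from repetitive even locally: any square $uu$ in $S$ must satisfy $2|u|<L$, since otherwise the length-$2|u|$ interval would be a long interval whose two identical halves are at edit distance $0$, contradicting the weak property; a slightly more careful count constrains the admissible periods and the density of such squares. What remains is to show that one extra bit per position can kill all of them, the subtlety being that a square is a constraint over its entire window rather than a local constraint at a single index, so $g$ cannot be fixed naively position by position — this is the combinatorial heart of the argument, and it is what is established in \cite{cheng2018synchronization} (where, to stay within a four-symbol budget, the short-range structure is inserted as periodic markers rather than purely superimposed, incurring a bounded length overhead). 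Finally, a bookkeeping pass over the $O(1)$-length intervals straddling the threshold $L$, together with the conversions between the equivalent formulations of synchronization strings and the overhead of that construction, is what turns $\varepsilon$ into $\varepsilon'=\frac{\varepsilon+11}{12}$; plugging in the $0.995268$-weak synchronization string of Lemma~\ref{lemma:weak-sync} then yields the claimed $0.999606$-synchronization string over four letters.
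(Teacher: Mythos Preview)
The paper's own proof is a single sentence: ``This follows directly from Theorem~6.4 in \cite{cheng2018synchronization}.'' Your proposal, after peeling away the motivational layer, ultimately lands in the same place --- you explicitly defer the ``combinatorial heart of the argument'' to \cite{cheng2018synchronization} and attribute the exact constant $\frac{\varepsilon+11}{12}$ to unspecified ``bookkeeping'' and ``overhead of that construction.'' So at the level of what is actually \emph{proved}, your proposal and the paper's proof coincide: both are black-box invocations of the same external theorem.

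Where your write-up adds content, it is partly misleading. The superposition scheme $w'[i]=(S[i],g[i])$ that you open with is not the construction in \cite{cheng2018synchronization}, and you yourself concede this (``inserted as periodic markers rather than purely superimposed''). More importantly, the superposition idea genuinely does not work as stated: since the weak synchronization property constrains $S$ only at scales $\ge L=\frac{1}{1-\varepsilon}$, the string $S$ can be arbitrary on windows shorter than $L$, and a single auxiliary bit $g[i]$ cannot in general kill all squares of period up to $L/2$ in $(S,g)$ --- binary strings are never square-free, so whenever $S$ already has a short square, you are asking $g$ to break it, and the constraints from overlapping short squares can be inconsistent. Your long-regime argument (projecting the alignment to the first coordinate) is clean and correct, and your reduction of the short regime to ``edit distance $\ge 1$, hence no square'' is also correct, but neither step is where the difficulty lies. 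Finally, nothing in your sketch explains the specific value $\frac{\varepsilon+11}{12}$; that factor of $12$ comes from the particular marker construction and its analysis in \cite{cheng2018synchronization}, not from anything you wrote down. If you want to go beyond the paper's citation, you would need to reproduce that construction and its parameter calculation.
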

\begin{proof}
    This follows directly from Theorem 6.4 in~\cite{cheng2018synchronization}.
\end{proof}
Now Lemmas~\ref{lemma:weak-sync} and \ref{lemma:weak-sync-to-sync-4} together imply Corollary~\ref{cor:sync-4}.

\end{document}